\theoremstyle{plain}
\newtheorem{theorem}{Theorem}[section]
\newtheorem{lemma}[theorem]{Lemma}
\newtheorem{proposition}[theorem]{Proposition}
\theoremstyle{remark}
\newtheorem{remark}[theorem]{Remark}
\theoremstyle{definition}
\newtheorem{assumption}{Assumption}
\newtheorem{definition}[theorem]{Definition}
\newcommand{\ZZ}{\mathbb{Z}}
\newcommand{\RR}{\mathbb{R}}
\newcommand{\CC}{\mathbb{C}}
\newcommand{\NN}{\mathbb{N}}
\newcommand{\EE}{\mathbb{E}}
\newcommand{\cL}{\mathcal{L}}
\newcommand{\cN}{\mathcal{N}}
\newcommand{\p}{\ensuremath{\mathbb{P}}}
\newcommand{\drm}{\ensuremath{\mathrm{d}}}
\newcommand{\abs}[1]{\left\vert#1\right\vert}
\newcommand{\sprod}[2]{\left\langle#1,#2 \right \rangle}
\newcommand{\norm}[1]{\left\Vert#1\right\Vert}
\newcommand{\Pro}{\ensuremath{P}}
\newcommand{\PP}{\mathbb{P}}
\newcommand{\euler}{\mathrm{e}}
\DeclareMathOperator{\supp}{\operatorname{supp}}
\DeclareMathOperator{\re}{Re}
\DeclareMathOperator{\im}{Im}
\DeclareMathOperator{\Tr}{Tr}
\DeclareMathOperator{\dist}{dist}
\DeclareMathOperator{\diam}{diam}
\renewcommand{\epsilon}{\varepsilon}
\renewcommand{\i}{\ensuremath{{\mathrm{i}}}}
\newcommand{\BIGOP}[1]{\mathop{\mathchoice%
{\raise-0.22em\hbox{\huge $#1$}}%
{\raise-0.05em\hbox{\Large $#1$}}{\hbox{\large $#1$}}{#1}}}
\newcommand{\BIGboxplus}{\mathop{\mathchoice%
{\raise-0.35em\hbox{\huge $\boxplus$}}%
{\raise-0.15em\hbox{\Large $\boxplus$}}{\hbox{\large $\boxplus$}}{\boxplus}}}
\newcommand{\bigtimes}{\BIGOP{\times}}
\begin{document}
%
%
%
%----------------------------------------------
%
%%%%%%%%%%%%%%%%%%%%%%%%%%%%%%%%%%%%%%%%%%%%%%%
% T I T L E
%%%%%%%%%%%%%%%%%%%%%%%%%%%%%%%%%%%%%%%%%%%%%%%
%
%----------------------------------------------
%
%
%
\title[Fractional moment method for discrete alloy-type models]
{Anderson localization for a class of models with a sign-indefinite single-site potential via fractional moment method}
\author[A. Elgart]{Alexander Elgart}
\address{448 Department of Mathematics \\ McBryde Hall, Virginia Tech \\
Blacksburg, VA, 24061  \\ USA}
%\thanks{A.E. has been partially supported by NSF grant DMS--0907165.}
\email{aelgart@vt.edu}
\author[M. Tautenhahn]{Martin Tautenhahn}
\address{Technische Universit\"a{}t Chemnitz \\ Fakult\"a{}t f\"u{}r Mathematik \\ D-09107 Chemnitz \\ Germany}
\email{martin.tautenhahn@mathematik.tu-chemnitz.de}
\author[I. Veseli\'c]{Ivan Veseli\'c}
\address{Technische Universit\"a{}t Chemnitz \\ Fakult\"a{}t f\"u{}r Mathematik \\
 D-09107 Chemnitz \\ Germany}
\email{ivan.veselic@mathematik.tu-chemnitz.de}
\begin{abstract}
A technically convenient signature of Anderson localization is
exponential decay of the fractional moments of the Green function
within appropriate energy ranges. We consider a random Hamiltonian
on a lattice whose randomness is generated by the sign-indefinite
single-site potential, which is however sign-definite at the
boundary of its support. For this class of Anderson operators we
establish a finite-volume criterion which implies that above
mentioned the fractional moment decay property holds. This
constructive criterion is satisfied at typical perturbative regimes,
e.\,g. at spectral boundaries which satisfy 'Lifshitz tail
estimates' on the density of states and for sufficiently strong
disorder. We also show how the fractional moment method facilitates
the proof of exponential (spectral) localization for such random
potentials.
\end{abstract}
%
%\thanks{}
%
\subjclass{82B44, 60H25, 35J10}
\keywords{Fractional moment method, localization, discrete alloy-type model, non-monotone, sign-indefinite, single-site potential}
\maketitle
%
%
%
%----------------------------------------------
%
%%%%%%%%%%%%%%%%%%%%%%%%%%%%%%%%%%%%%%%%%%%%%%%
% I N T R O
%%%%%%%%%%%%%%%%%%%%%%%%%%%%%%%%%%%%%%%%%%%%%%%
%
%----------------------------------------------
%
%
%
\section{Introduction}
The addition of disorder can have a profound effect on the spectral
and dynamical properties of a self adjoint differential operator. In
general terms, the effect is that in certain energy ranges the
absolutely continuous spectrum of the Laplacian that describes the
perfect crystal  may be modified to consist of a random dense set of
eigenvalues associated with localized eigenfunctions. Thus it
affects various properties of the corresponding model: time
evolution (non--spreading of wave packets), conductivity (in
response to electric field), and Hall currents (in the presence of
both magnetic and electric field). This phenomenon, known as
Anderson localization, was initially discussed in the context of the
conduction properties of metals, but the mechanism is of relevance
in a variety of other situations.
\par
The first breakthrough in understanding the spectral properties of
the multidimensional Anderson model is associated with the seminal
work of Fr\"ohlich and Spencer \cite{FroehlichS1983} that introduced
the method of the multiscale analysis (MSA). Ten years later,
Aizenman and Molchanov \cite{AizenmanM1993} realized how one can
greatly streamline the proof of the spectral localization for a
standard Anderson model, deriving the machinery of what is now known
as the fractional moment method (FMM). Both methods were
subsequently improved and generalized in a number of papers, {\it
e.g.}
\cite{FroehlichMSS1985,DreifusK1989,GerminetK2001,AizenmanFSH2001,BourgainK2005,
AizenmanENSS2006,AizenmanGKW2009}.
\par
The standard Anderson model assumes that the values of the random
potential at different sites of the lattice are uncorrelated.
Although it was realized early on that both MSA and FMM work well
for a more general class of models where some correlations are
permitted, the allowed randomness features the so called
monotonicity property. The simplest example of such correlated
randomness is an alloy-type model with a {\it fixed sign}
single-site potential of finite support. The hallmark of the
monotone alloy-type model is a regularity of the Green function under
the local averages which leads to the Wegner estimate used in MSA
and to the a-priori bound, used in FMM. Here the term a-priori bound
means that the average of an fractional power of the elements of the
Green's function is uniformly bounded.
\par
There is no physically compelling reason for a random tight binding
model to have such monotonicity property, and one can ask the
natural question whether the Anderson localization can be
established if one relinquishes it altogether. For alloy-type models
on the continuum with a sign-changing single-site potential
localization has been derived via MSA, e.\,g. in
\cite{Klopp1995,Veselic2002,KostrykinV2006,Klopp2002}, see also
\cite{Stolz2002}. All these results are build on recovering the
monotonicity, one way or another. The recent preprint of Kr\"uger
\cite{Krueger2010} establishes  Anderson localization for
non-monotone models on the lattice.
His proof relies on MSA and on the method of Bourgain \cite{Bourgain2009}
to obtain certain Wegner-like estimates.
The theorems of \cite{Krueger2010} address the strong disorder regime
only, although it is likely that they hold (as usual for the MSA method)
in all situations where an appropriate initial scale decay estimate for
the resolvent can be established.
In this paper we derive a general finite-volume criterion applicable to the strong disorder as well as Lifshitz tail regimes,
and apply it to establish the Anderson localization in the former one.
\par
In this paper we investigate how far one can push the FMM for the
non monotone discrete alloy-type model. In general, we don't expect
that the regularity of the Green function under the local averages
survives the complete relaxation of the monotonicity condition. It
turns out, however, that just monotonicity of the single-site
potential at the {\it boundary} of its support is sufficient to
initiate FMM. This condition allows us to combine monotone as well
as non-monotone techniques to establish the local a-priori bound of
the fractional moment of the Green function, which is a cornerstone
of FMM.
\par
As a consequence, we obtain a number of results that are parallel to
the ones established in the monotone case. In particular, we develop
 finite volume criterion: A set of certain conditions which when
satisfied by the alloy-type model obtained by restricting the full
operator to some finite volume are sufficient to deduce the
exponential decay of the typical Green function.
%
%
%
%
%----------------------------------------------
%
%%%%%%%%%%%%%%%%%%%%%%%%%%%%%%%%%%%%%%%%%%%%%%%
% M O D E L   A N D   R E S U L T S
%%%%%%%%%%%%%%%%%%%%%%%%%%%%%%%%%%%%%%%%%%%%%%%
%
%----------------------------------------------
%
%
%
\section{Model and results} \label{sec:model}
Let $d \geq 1$. For $x \in \ZZ^d$ we recall the following standard norms $\lvert x \rvert_1 = \sum_{i=1}^d \lvert x_i \rvert$ and $\lvert x \rvert_\infty = \max\{\lvert x_1 \rvert,\ldots, \lvert x_d \rvert\}$. For $\Gamma \subset \ZZ^d$ we introduce the Hilbert space $\ell^2 (\Gamma) = \{\psi : \Gamma \to \CC : \sum_{k \in \Gamma} \lvert \psi (k) \rvert^2 < \infty\}$ with inner product $\sprod{\phi}{\psi} = \sum_{k \in \Gamma} \overline{\phi(k)} \psi (k)$. On $\ell^2 (\ZZ^d)$ we consider the discrete random Schr\"o{}dinger operator
\begin{equation} \label{eq:hamiltonian}
 H_\omega := -\Delta + \lambda V_\omega , \quad \lambda > 0 .
\end{equation}
Here, $\omega$ is an element of the probability space specified
below, $\Delta: \ell^2 \left(\ZZ^d\right) \to \ell^2
\left(\ZZ^d\right)$ denotes the discrete Laplace operator and
$V_\omega : \ell^2 \left(\ZZ^d\right) \to \ell^2 \left(\ZZ^d\right)$
is a random multiplication operator. They are defined by
\begin{equation*}
\left(\Delta \psi \right) (x) := \sum_{\abs{e}_1 = 1} \psi (x+e) \quad \mbox{and} \quad
\left( V_\omega \psi \right) (x) :=  V_\omega (x) \psi (x)
\end{equation*}
and represent the kinetic energy and the random potential energy, respectively. The parameter $\lambda$ models the strength of the disorder. We assume that the probability space has a product structure $\Omega :=
\bigtimes_{k \in \ZZ^d} \RR$ and is equipped with the probability measure $\p (\drm \omega) := \prod_{k \in \ZZ^d} \mu(\drm \omega_k)   $ where
$\mu$ is a probability measure on $\RR$.
Each element $\omega$ of $\Omega$ may be represented as a collection
$\{\omega_k\}_{k \in \ZZ^d}$ of real numbers, being the realization of a field of independent identically distributed (i.\,i.\,d.)
random variables, each distributed according to $\mu$. The symbol
$\mathbb{E}\{\cdot\}$ denotes the expectation with respect to the probability
measure, i.\,e. $\mathbb{E} \{\cdot\} := \int_\Omega (\cdot)
\p (\drm \omega)$. For a set $\Gamma \subset \ZZ^d$, $\mathbb{E}_\Gamma\{\cdot \}$
denotes the expectation with respect to $\omega_k$, $k \in \Gamma$. That is,
$\mathbb{E}_{\Gamma} \{\cdot\} := \int_{\Omega_\Gamma} (\cdot)
\prod_{k \in \Gamma} \mu(\drm \omega_k)$ where $\Omega_\Gamma
:= \bigtimes_{k \in \Gamma} \RR$. Let the \emph{single-site potential}
$u : \ZZ^d \to \RR$ be a function with finite and non-empty support $\Theta
:= \supp u = \{k \in \ZZ^d : u(k) \not = 0 \}$. We assume that the random
potential $V_\omega $ has an alloy-type structure, i.\,e. the potential value
\begin{equation*}
V_\omega (x) := \sum_{k \in \ZZ^d} \omega_k u (x-k)
\end{equation*}
at a lattice site $x \in \ZZ^d$ is a linear combination of the i.\,i.\,d. random
variables $\omega_k$, $k\in\ZZ^d$, with coefficients provided by the single-site
potential. For this reason we call the Hamiltonian \eqref{eq:hamiltonian} a \emph{discrete alloy-type model}. The function $u(\cdot - k)$ may be interpreted as a finite range potential associated to the lattice site $k\in\ZZ^d$. We assume (without loss of generality) that $0 \in \Theta$.
\par
Notice that the single-site potential $u$ may change its sign. As a consequence the quadratic form associated to $H_\omega$ does not necessarily depend in a monotone way on the random parameters $\omega_k$, $k \in \ZZ^d$. However, for our main result we have to assume that $u$ has fixed sign at the boundary of $\Theta$, see Assumption~\ref{ass:monotone}.
For $\Lambda \subset \ZZ^d$ we denote by $\partial^{\rm i} \Lambda = \{k \in \Lambda : \# \{j \in \Lambda : |k-j|_1 = 1\} < 2d\}$ the interior boundary of $\Lambda$ and by $\partial^{\rm o} \Lambda = \partial^{\rm i} \Lambda^{\rm c}$ the exterior boundary of $\Lambda$. Here $\Lambda^{\rm c} = \ZZ^d \setminus \Lambda$ denotes the complement of $\Lambda$.
\begin{assumption} \label{ass:monotone}\leavevmode \\[-2ex]
\begin{enumerate}
 \item [(A1)]
 The measure $\mu$ has a bounded, compactly supported density $\rho$.
\item [(A2)]
The function $u$ satisfies $u (k) > 0$ for all $k \in \partial^{\rm i} \Theta$.
\end{enumerate}
\end{assumption}
\begin{remark}
\begin{enumerate}[(i)]
 \item
This assumption plays an instrumental role in the proof of the
uniform boundedness of fractional moments of the Green's function
(a-priori bound), in the particular form presented in Lemma \ref{lemma:bounded},
and thus also of our main result, Theorem \ref{thm:result1}.

\item
Note that for models on $\ZZ$ Assumption (A2) can always be achieved by taking a linear combination
of several translates of the single site potential. With these linear combinations  one can work similarly as with the
original single site potential, cf.~Section 5 in \cite{ElgartTV2010}.
Actually, in the one-dimensional setting a particularly transparent version of
our proof is available:
The decoupling arguments of Section 4 in the present paper are replaced by
Lemma 3.3 of \cite{ElgartTV2010} which uses the special structure of the relevant resolvent matrix elements.

\item
For the purpose of comparison we present a different version of the a-priori bound
in the Appendix. It requires much milder conditions on $u$.
Unfortunatey, we do not see at the moment how it can be used to complete the proof of exponential
decay of fractional moments. See the Appendix for more details.
\end{enumerate}
\end{remark}
%\bigskip

For the operator $H_\omega$ in \eqref{eq:hamiltonian} and $z \in \CC \setminus \sigma
(H_\omega)$ we define the corresponding \emph{resolvent} by $G_\omega (z)
= (H_\omega - z)^{-1}$. For the \emph{Green function}, which assigns
to each  $(x,y) \in \ZZ^d \times \ZZ^d$ the corresponding matrix element of the
resolvent, we use the notation
\begin{equation*} \label{eq:greens}
G_\omega (z;x,y) := \sprod{\delta_x}{(H_\omega - z)^{-1}\delta_y}.
\end{equation*}
For $\Gamma \subset \ZZ^d$, $\delta_k \in \ell^2 (\Gamma)$ denotes the
Dirac function given by $\delta_k (k) = 1$ for $k \in \Gamma$ and
$\delta_k (j) = 0$ for $j \in \Gamma \setminus \{k\}$.
Let $\Gamma_1 \subset \Gamma_2 \subset \ZZ^d$. We define the operator $\Pro_{\Gamma_1}^{\Gamma_2} : \ell^2 (\Gamma_2) \to \ell^2 (\Gamma_1)$ by
\[
 \Pro_{\Gamma_1}^{\Gamma_2} \psi := \sum_{k \in \Gamma_1} \psi (k) \delta_k .
\]
Note that the adjoint $(\Pro_{\Gamma_1}^{\Gamma_2})^* : \ell^2 (\Gamma_1) \to \ell^2 (\Gamma_2)$ is given by
\[
(\Pro_{\Gamma_1}^{\Gamma_2})^* \phi = \sum_{k \in \Gamma_1} \phi (k) \delta_k .
\]
If $\Gamma_2 = \ZZ^d$ we will drop the upper index and write $\Pro_{\Gamma_1}$ instead of $\Pro_{\Gamma_1}^{\ZZ^d}$.
For an arbitrary set $\Gamma \subset \ZZ^d$ we define the restricted operators $\Delta_\Gamma, V_\Gamma, H_\Gamma:\ell^2 (\Gamma) \to \ell^2 (\Gamma)$ by $\Delta_\Gamma := \Pro_\Gamma \Delta \Pro_\Gamma^\ast$, $V_\Gamma := \Pro_\Gamma V_\omega \Pro_\Gamma^\ast$ and
\[
 H_\Gamma := \Pro_\Gamma H_\omega \Pro_\Gamma^\ast = -\Delta_\Gamma + V_\Gamma .
\]
Furthermore, we define $G_\Gamma (z) := (H_\Gamma - z)^{-1}$ and $G_\Gamma (z;x,y) := \bigl\langle \delta_x, G_\Gamma (z) \delta_y \bigr\rangle$ for $z \in \CC \setminus \sigma (H_\Gamma)$ and $x,y \in \Gamma$. If $\Lambda \subset \ZZ^d$ is finite, $\lvert \Lambda \rvert$ denotes the number of elements of $\Lambda$.
\par
In order to formulate our main results, let us define the specific localization property we are interested in.
\begin{definition}
 Let $I \subset \RR$. A selfadjoint operator $H : \ell^2 (\ZZ^d) \to \ell^2 (\ZZ^d)$ is said to exhibit \emph{exponential localization in $I$}, if the spectrum of $H$ in $I$ is only of pure point type, i.\,e. $\sigma_{\rm c} (H) \cap I = \emptyset$, and the eigenfunctions of $H$ corresponding to the eigenvalues in $I$ decay exponentially. If $I=\RR$, we say that
$H$ exhibits \emph{exponential localization}.
\end{definition}
Our results are the following theorems.
\begin{theorem} \label{thm:result1}
Let $\Gamma \subset \ZZ^d$, $s \in (0,1/3)$ and suppose that Assumption \ref{ass:monotone} is satisfied.
Then for a sufficiently large $\lambda$ there are constants $\mu,A \in (0,\infty)$,
depending only on $d$, $\rho$, $u$,  $s$ and $\lambda$,
such that for all $z \in \CC \setminus \RR$ and all $x,y \in \Gamma$
\begin{equation*} \label{eq:result1}
\mathbb{E} \bigl\{\lvert G_\Gamma (z;x,y)\rvert^{s/(2\lvert \Theta \rvert)}\bigr\}\leq A \euler^{-\mu|x-y|_\infty} .
\end{equation*}
\end{theorem}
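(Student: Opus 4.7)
The plan is to implement the fractional moment method, adapted to the non-monotone setting permitted by Assumption~(A2). Three ingredients are needed: a uniform a-priori bound on fractional moments (Lemma~\ref{lemma:bounded}), a decoupling estimate separating shared randomness across the boundary of a box, and an initial length-scale estimate driven by the strong-disorder parameter $\lambda$. The a-priori bound asserts $\sup_{\Gamma,z,x,y}\mathbb{E}\{\lvert G_\Gamma(z;x,y)\rvert^s\}\le C_0$ for some small $s\in(0,1/3)$, and relies on Assumption~(A2): by integrating out the single random variable $\omega_k$ coupled to a site in $\partial^{\rm i}\Theta$, one recovers the monotone structure needed for the classical Aizenman--Molchanov-type bound, even though $u$ itself is sign-indefinite.

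I would then set up the geometric resolvent identity. For a finite box $\Lambda\subset\Gamma$ with $x\in\Lambda$ and $y\in\Gamma\setminus\Lambda$, one has
\[
G_\Gamma(z;x,y) \;=\; -\sum_{\substack{u\in\partial^{\rm i}\Lambda,\ v\in\partial^{\rm o}\Lambda\\ \lvert u-v\rvert_1=1}} G_\Lambda(z;x,u)\,G_\Gamma(z;v,y).
\]
Raising both sides to the target exponent $t:=s/(2\lvert\Theta\rvert)$ and using subadditivity $\lvert a+b\rvert^t\le\lvert a\rvert^t+\lvert b\rvert^t$, the estimate reduces to bounding $\mathbb{E}\{\lvert G_\Lambda(z;x,u)\rvert^t\lvert G_\Gamma(z;v,y)\rvert^t\}$ for each boundary pair $(u,v)$. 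To decouple, I would first apply Cauchy--Schwarz (accounting for the factor $2$ in the exponent), then a Hölder step with conjugate pair $(\lvert\Theta\rvert,\lvert\Theta\rvert/(\lvert\Theta\rvert-1))$ to isolate those $\omega_k$ for which $k+\Theta$ straddles $\partial\Lambda$; these are the only random variables that couple the two resolvents. After this step each surviving factor depends on a disjoint set of $\omega$'s and is controlled by the a-priori bound at exponent $s$, producing $\mathbb{E}\{\lvert G_\Lambda(z;x,u)\rvert^s\}^{1/(2\lvert\Theta\rvert)}\cdot\mathbb{E}\{\lvert G_\Gamma(z;v,y)\rvert^s\}^{1/(2\lvert\Theta\rvert)}$ up to a bounded prefactor depending on $\rho$ and $u$.

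For the initial-scale input, at sufficiently large $\lambda$ a perturbative computation --- a Neumann expansion of $G_{\Lambda}$ around $\lambda V_\omega$, or a Combes--Thomas bound combined with the a-priori estimate --- produces a box size $L_0$ and a number $q<1$ such that
\[
\sum_{u\in\partial^{\rm i}\Lambda_{L_0}(x)}\mathbb{E}\bigl\{\lvert G_{\Lambda_{L_0}(x)}(z;x,u)\rvert^{t}\bigr\}\;\le\; q
\]
uniformly in $x\in\Gamma$ and $z\in\CC\setminus\RR$. Iterating the decoupled geometric expansion on concentric shells of scale $L_0$ then yields the claimed exponential decay at rate $\mu\asymp L_0^{-1}\log(q^{-1})$, with constant $A$ absorbing the a-priori bound.

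The main obstacle is synchronising the a-priori bound with the decoupling, because the sign-indefinite portion of $u$ couples $G_\Lambda$ and $G_\Gamma$ through precisely those random variables for which Assumption~(A2) provides no monotone control. The Hölder step isolates these problematic variables, and the positivity on $\partial^{\rm i}\Theta$ then takes over for the remaining factors; this interplay is what fixes the final exponent to be $s/(2\lvert\Theta\rvert)$ rather than $s$.
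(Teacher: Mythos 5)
Your overall outline (a-priori bound, geometric resolvent expansion, iteration driven by strong disorder) is the right framework, and you correctly identify that the crux is the shared randomness across $\partial\Lambda$. However, the decoupling mechanism you propose does not work, and this is precisely the point where the paper's proof differs in an essential way.

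Two concrete problems. First, your iteration candidate
\[
G_\Gamma(z;x,y)=-\sum_{(u,v)} G_\Lambda(z;x,u)\,G_\Gamma(z;v,y)
\]
cannot be iterated: after bounding the product, the remaining factor is $\mathbb{E}\{\lvert G_\Gamma(z;v,y)\rvert^{s}\}^{1/(2\lvert\Theta\rvert)}$ on the \emph{same} domain $\Gamma$, so the argument produces no depletion of the domain and no geometric gain from step to step. To iterate you need the outer resolvent to live on $\Gamma\setminus(\text{some neighbourhood of }x)$; this is why the paper works with the \emph{second-order} expansion of Eq.~\eqref{eq:secondorder} using the depleted Hamiltonian $H_\Gamma^{\hat W_x}$ and the annulus $W_x$. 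Second, Cauchy--Schwarz and H\"older are inequalities on expectations and do not remove stochastic dependence: after your H\"older step both factors are still functions of the same underlying $\omega$, so you have not ``isolated'' any random variables. What is actually needed is a structural decoupling: by choosing the annulus $\hat W_x$ of width $\geq\diam\Theta+2$ and depleting the Laplacian along its boundary, the outer Green functions $G_\Gamma^{W_x}(z;x,w)$ and $G_\Gamma^{W_x}(z;r',y)$ become \emph{deterministically independent} of $\{\omega_k\}_{k\in B_x^+}$ (Remark~\ref{remark:depleted}); the middle product of three Green functions is then averaged over $\{\omega_k\}_{k\in B_x^+}$ alone, using a three-fold H\"older inequality (hence the requirement $s<1/3$, which your proposal never explains) together with both parts of Lemma~\ref{lemma:bounded}. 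That is Lemma~\ref{lemma:iteration1} and Theorem~\ref{thm:exp_decay}.

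Smaller issues: the exponent $s/(2\lvert\Theta\rvert)$ does not arise from a Cauchy--Schwarz plus a $(\lvert\Theta\rvert,\lvert\Theta\rvert/(\lvert\Theta\rvert-1))$ conjugate pair. It comes from the Schur complement reduction onto $\Lambda_1=(\Theta_{b_x}\cup\Theta_{b_y})\cap\Gamma$, which has at most $2\lvert\Theta\rvert$ sites, combined with Lemma~\ref{lemma:averagenorm}, which forces the fractional power $t/\lvert\Lambda_1\rvert$ in the a-priori estimate. Also, there is no Neumann expansion in the paper: the smallness for large $\lambda$ comes directly from the explicit $\Xi_s(\lambda)$-dependence of the a-priori bound and the $\lambda^{-s/\lvert\Theta\rvert}$ prefactor in the finite volume criterion \eqref{eq:fin_cond}; the Combes--Thomas bound is used only to handle $\lvert z\rvert$ large, not to produce the initial-scale smallness. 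Finally, your description of the a-priori bound (``integrating out the single random variable $\omega_k$ coupled to a site in $\partial^{\rm i}\Theta$'') understates the argument: the proof of Lemma~\ref{lemma:bounded}(a) uses a two-layer Schur complement (Lemma~\ref{lemma:schur2}), a non-monotone determinant/spectral average on $\Lambda_1$ (Lemma~\ref{lemma:averagenorm}), \emph{and} a monotone spectral average on the outer collar $\partial^{\rm o}\Lambda_1$ where Assumption~(A2) provides sign-definiteness (Lemma~\ref{lemma:monotone2}).
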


For $x \in \ZZ^d$ and $L>0$, we denote by $\Lambda_{L,x} = \{ k \in \ZZ^d : \lvert x-k  \rvert_\infty \leq L \}$
the cube of side length $2L+1$ centred at $x$.
\begin{theorem} \label{thm:result2}
Let $s \in (0,1)$, $C,\mu, \in (0,\infty)$, and $I \subset \RR$ be a interval. Assume that
\[
\EE \bigl\{ \lvert G_{\Lambda_{L,k}} (E + \i \epsilon;x,y) \rvert^{s} \bigr\} \leq C \euler^{-\mu \lvert x-y \rvert_\infty}
\]
for all $k \in \ZZ^d$, $L \in \NN$, $x,y \in \Lambda_{L,k}$, $E \in I$ and all $\epsilon \in (0,1]$.
Then $H_\omega$ exhibits exponential localization in $I$ for almost all $\omega \in \Omega$.
\end{theorem}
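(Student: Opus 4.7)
The plan is to implement the standard fractional-moment-method route from uniform finite-volume resolvent decay to almost-sure spectral localization, in three stages.

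\emph{Stage 1 (passage to infinite volume).}
Taking $L \to \infty$ with $k = 0$ fixed, strong resolvent convergence of $H_{\Lambda_{L,0}}$ to $H_\omega$ gives $G_{\Lambda_{L,0}}(E+\i\epsilon;x,y) \to G_\omega(E+\i\epsilon;x,y)$ pointwise in $\omega$ for every fixed $x, y \in \ZZ^d$, $E \in I$, $\epsilon > 0$. Fatou's lemma then promotes the hypothesis to the infinite-volume bound
\[
\EE\bigl\{\lvert G_\omega(E+\i\epsilon;x,y)\rvert^{s}\bigr\} \leq C\,\euler^{-\mu\lvert x-y\rvert_\infty},
\]
uniformly in $E \in I$, $\epsilon \in (0,1]$, $x, y \in \ZZ^d$.

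\emph{Stage 2 (eigenfunction correlator bound).}
Let
\[
Q_I(x,y) := \sup\bigl\{\lvert\langle\delta_x, f(H_\omega) E_I(H_\omega)\delta_y\rangle\rvert : f \text{ Borel on } \RR,\ \lvert f\rvert \leq 1\bigr\}
\]
denote the eigenfunction correlator, with $E_I(H_\omega)$ the spectral projector onto $I$. Using Stone's formula, the Cauchy--Schwarz-type identity
\[
\lvert G_\omega(E+\i\epsilon;x,y)\rvert^{2} \leq \epsilon^{-2}\,\im G_\omega(E+\i\epsilon;x,x)\,\im G_\omega(E+\i\epsilon;y,y),
\]
and a fractional-power interpolation that uses the boundedness of $\rho$ from Assumption (A1), the standard Aizenman--Molchanov machinery converts the bound of Stage 1 into
\[
\EE\{Q_I(x,y)\} \leq C'\,\euler^{-\mu'\lvert x-y\rvert_\infty}
\]
for some constants $C', \mu' > 0$.

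\emph{Stage 3 (almost-sure spectral localization).}
Markov's inequality and Borel--Cantelli applied along the countable family $(x,y) \in \ZZ^d \times \ZZ^d$ furnish an event $\Omega_0$ of full $\p$-measure on which, for every $x$, $Q_I(x,y) \leq \euler^{-\mu'\lvert x-y\rvert_\infty/2}$ for all but finitely many $y$. The summability $\sum_y Q_I(x,y) < \infty$ combined with the RAGE theorem forces $E_I(H_\omega)\delta_x$ into the pure-point subspace of $H_\omega$, and by totality of $\{\delta_x\}_{x\in\ZZ^d}$ one concludes $\sigma_{\rm c}(H_\omega) \cap I = \emptyset$ on $\Omega_0$. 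Exponential decay of the eigenfunctions then follows via a SULE argument: for each eigenvalue $E \in I$ with $\ell^2$-normalized eigenfunction $\psi_E$ and a maximizer $x_E$ of $\lvert\psi_E\rvert$, the inequality $\lvert\psi_E(x_E)\psi_E(y)\rvert \leq Q_I(x_E,y)$ yields the required exponential decay of $\lvert\psi_E(y)\rvert$ in $\lvert y - x_E\rvert_\infty$.

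The principal technical obstacle is Stage 2: converting a fractional resolvent bound into decay of spectral projector matrix elements, while simultaneously handling the $\epsilon \downarrow 0$ limit and the supremum over Borel $f$. Since Theorem \ref{thm:result2} is a universal consequence of its fractional-moment hypothesis, its proof does not rely on the sign-structure of $u$; the monotonicity of $u$ on $\partial^{\rm i}\Theta$ enters the paper only in the proof of Theorem \ref{thm:result1}, where the hypothesis of Theorem \ref{thm:result2} must first be verified.
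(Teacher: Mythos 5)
Your Stage 2 is where the proposal breaks down, and this is exactly the obstacle the paper identifies and works around. The ``standard Aizenman--Molchanov machinery'' that converts a fractional resolvent bound $\EE\{|G|^s\}$ into an eigenfunction correlator bound $\EE\{Q_I(x,y)\}$ relies on a decoupling/re-sampling step: one conditions on all but one coupling constant and uses single-site spectral averaging to upgrade the fractional moment to the first (or second) moment that actually controls $Q_I$. This step requires either independence of the potential values $V_\omega(x)$ at distinct sites or at least monotone dependence of $H_\omega$ on each $\omega_k$. For the discrete alloy-type model here neither holds: $|\Theta|>1$ makes $V_\omega(x)$ and $V_\omega(x')$ correlated for nearby $x,x'$, and $u$ is allowed to change sign, so the map $\omega_k\mapsto H_\omega$ is not monotone. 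The paper says explicitly at the start of Section~5 that for these two reasons ``neither dynamical nor spectral localization can be directly inferred from the behavior of the Green function using the existent methods for our model,'' which rules out precisely the route you are invoking (Simon--Wolff and the RAGE/eigenfunction-correlator route are both listed there as inapplicable). You correctly observe that Theorem~\ref{thm:result2} should not depend on Assumption~\ref{ass:monotone}, but that observation cuts against your Stage 2: because you cannot use (A2) here, you have no monotonicity to fall back on for the spectral-averaging step, and you also do not have independence.

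The paper's actual route is different and avoids the correlator entirely. It proves Proposition~\ref{prop:replace-msa}, showing that the fractional-moment hypothesis (together with a Wegner estimate that hypothesis~(ii) of that proposition delivers via Stone's formula and the bound $|G(E+\i\epsilon;x,x)|^{1-s}\le\epsilon^{s-1}$) implies the two-box probabilistic regularity estimate that is the \emph{output} of a multiscale analysis, namely
\[
\PP\{\forall E\in I:\ \Lambda_{L,x}\text{ or }\Lambda_{L,y}\text{ is }(\mu/8,E)\text{-regular}\}\ge 1-L^{-2p},
\]
for well-separated $x,y$. It then feeds this directly into a von Dreifus--Klein-type theorem (Theorem~\ref{thm:vDK-2.3}) to obtain exponential localization; that theorem needs only finite-range dependence of the potential, which holds here since $\Theta$ is finite. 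This works precisely because the Chebyshev step $\PP\{\|G\|>t\}\le t^{-s}\EE\{|G|^s\}$ and the Wegner bound do not require any decoupling over single sites. Also note that the paper never passes to infinite volume via Fatou as in your Stage 1; the entire argument is run on finite boxes, only taking the $\epsilon\downarrow0$ limit at fixed $L$ (justified by the Wegner estimate making $E$ almost surely not an eigenvalue of $H_{\Lambda_{L,k}}$). If you want to repair your proposal, the fix is not to push harder on the correlator estimate but to abandon it and prove the two-box MSA-output estimate directly from the fractional moment bound, as Proposition~\ref{prop:replace-msa} does.
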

Let us emphasize that this result does not rely on Assumption \ref{ass:monotone}.
 Putting together Theorem~\ref{thm:result1} and Theorem~\ref{thm:result2}, we obtain exponential localization in the case of sufficiently large disorder.
\begin{theorem} \label{thm:result3}
Let Assumption \ref{ass:monotone} be satisfied and $\lambda$ sufficiently large.
Then $H_\omega$ exhibits exponential localization for almost all $\omega \in \Omega$.
\end{theorem}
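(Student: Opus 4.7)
The statement is essentially a direct synthesis of Theorem \ref{thm:result1} and Theorem \ref{thm:result2}, so my plan is to verify that the fractional moment bound produced by the former exactly matches the hypothesis required by the latter, and then invoke Theorem \ref{thm:result2}.

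First, I would fix any $s \in (0,1/3)$, say $s = 1/4$, and choose $\lambda$ large enough so that Theorem \ref{thm:result1} applies. Setting the exponent $s' := s/(2\lvert \Theta \rvert)$, I note $s' \in (0,1)$ since $\lvert \Theta \rvert$ is a finite positive integer. Next, I would specialize the finite set $\Gamma$ in Theorem \ref{thm:result1} to the cube $\Lambda_{L,k}$ for arbitrary $k \in \ZZ^d$ and $L \in \NN$. For any $E \in \RR$ and any $\epsilon \in (0,1]$, the value $z = E + \i \epsilon$ lies in $\CC \setminus \RR$, so Theorem \ref{thm:result1} delivers constants $\mu, A \in (0,\infty)$ (depending only on $d$, $\rho$, $u$, $s$, $\lambda$, hence independent of $k$, $L$, $E$, $\epsilon$) such that
\begin{equation*}
\EE\bigl\{ \lvert G_{\Lambda_{L,k}}(E+\i\epsilon;x,y)\rvert^{s'} \bigr\} \leq A\, \euler^{-\mu \lvert x-y\rvert_\infty}
\end{equation*}
for all $x,y \in \Lambda_{L,k}$.

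Second, I would apply Theorem \ref{thm:result2} with the parameters $s'$, $C := A$, $\mu$, and the interval $I := \RR$. The hypothesis of Theorem \ref{thm:result2} is precisely the display above, now satisfied uniformly in all the required parameters. The conclusion yields that $H_\omega$ exhibits exponential localization in $\RR$ for $\p$-almost every $\omega \in \Omega$, which is the content of Theorem \ref{thm:result3}.

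There is no real obstacle here; the only point to check is a small arithmetic compatibility, namely that the exponent $s/(2\lvert\Theta\rvert)$ supplied by Theorem \ref{thm:result1} indeed falls in the range $(0,1)$ accepted by Theorem \ref{thm:result2} and that the constants $A$, $\mu$ from Theorem \ref{thm:result1} are uniform in $\Gamma$ and in $z \in \CC \setminus \RR$. Both are visible directly from the statements, so the proof reduces to a short two-step citation.
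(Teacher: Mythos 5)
Your proposal is correct and matches the paper's own proof, which consists of exactly the same two-step citation: Theorem~\ref{thm:result1} supplies the uniform fractional-moment decay for all $\Gamma$ and all $z\in\CC\setminus\RR$, and Theorem~\ref{thm:result2} with $I=\RR$ converts it into almost-sure exponential localization. The compatibility checks you note (the exponent $s/(2\lvert\Theta\rvert)\in(0,1)$ and uniformity of $A,\mu$ in $\Gamma$ and $z$) are indeed the only things to verify.
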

Theorem~\ref{thm:result1} concerns the exponential decay of an
averaged fractional power of the Green function. It applies to
arbitrary finite $\Theta \subset \ZZ^d$ assuming that $u$ has fixed
sign on the interior vertex boundary of $\Theta$. In Section
\ref{sec:loc} we provide a new variant of the proof that the exponential decay of
an averaged fractional power of the Green function imply exponential
localization, which is formulated in Theorem \ref{thm:result2}.
\par
Theorem~\ref{thm:result1} and \ref{thm:result3} concern localization properties in the strong disorder regime.
We also prove a so called finite volume criterion, which can be used to establish exponential decay of an averaged fractional power of the Green function
at typical perturbative regimes.
 %
% in the weak disorder regime, provided there are sufficient ``Lifshitz tail estimates'' on the density of states.
In particular, Theorem~\ref{thm:result1} follows from the finite volume criterion using the a-priori bound provided in Section~\ref{sec:bounded}.
\begin{theorem}[Finite volume criterion] \label{thm:result4}
Suppose that Assumption \ref{ass:monotone} is satisfied, let $\Gamma \subset \ZZ^d$, $z
\in \CC \setminus \RR$ with $\lvert z \rvert \leq m$ and $s \in (0,1/3)$. Then there
exists a constant $B_s$ which depends only on $d$, $\rho$, $u$, $m$, $s$,
such that if the condition
\begin{equation*}
b_s(\lambda, L,\Lambda): = \frac{B_s L^{3(d-1)} \Xi_s (\lambda)}{\lambda^{s/\lvert
\Theta \rvert}}\, \sum_{w\in\partial^{\rm o} W_x}\mathbb{E}
\bigl\{\lvert G_{\Lambda\setminus W_x} (z;x,w)\rvert^{s/(2\lvert
\Theta \rvert)}\bigr\}< b
\end{equation*}
is satisfied for some $b \in (0,1)$, arbitrary $\Lambda \subset \Gamma$, and all $x\in
\Lambda$, then for all $x,y \in \Gamma$
\begin{equation*}
\mathbb{E} \bigl\{\lvert G_\Gamma (z;x,y)\rvert^{s/(2\lvert \Theta
\rvert)} \bigr\}\leq A \euler^{-\mu|x-y|_\infty} .
\end{equation*}
Here
\[
A=\frac{C_s \Xi_s (\lambda)}{b} \quad \text{and} \quad
\mu=\frac{\lvert \ln b \rvert}{L+\diam \Theta + 2}\,,\] with $C_s$ inherited
from the a-priori bound (Lemma \ref{lemma:bounded}).
Here, the set $W_x$ is an certain annulus around $x$, defined precisely in Eq.~\eqref{eq:Wx} and the text below, $L \geq \diam \Theta + 2$ is some fixed number determining the size of the annulus $W_x$, and $\Xi_s (\lambda) = \max \{ \lambda^{- s/ 2 \lvert \Theta \rvert} , \lambda^{-2s} \}$.
\end{theorem}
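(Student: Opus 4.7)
My plan is the standard finite-volume iterative scheme: derive a one-step inequality by combining the geometric resolvent identity with the decoupling lemmas of Section~4 and the a-priori bound of Lemma \ref{lemma:bounded}, then iterate until the a-priori bound can close the estimate.

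First I would derive the one-step inequality. For $x \in \Gamma$ and $y \in \Gamma$ with $|x - y|_\infty$ exceeding the outer radius of the annulus $W_x$, the geometric resolvent identity applied to the decomposition $\Gamma = W_x \sqcup (\Gamma \setminus W_x)$ gives
\begin{equation*}
G_\Gamma(z;x,y) = \sum_{\substack{u \in \partial^{\rm i} W_x,\, v \in \partial^{\rm o} W_x \\ |u - v|_1 = 1}} G_{\Gamma \setminus W_x}(z;x,v)\, G_\Gamma(z;u,y),
\end{equation*}
using crucially that $G_{\Gamma \setminus W_x}(z;x,y) = 0$ because $x$ sits in the inner hole of the annulus while $y$ lies in the outer component of $\Gamma \setminus W_x$. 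Taking the $t := s/(2\lvert\Theta\rvert)$-th power and using $|a+b|^t \le |a|^t + |b|^t$ for $t \in (0,1)$ reduces the problem to bounding $\EE\bigl\{|G_{\Gamma\setminus W_x}(z;x,v)|^t\,|G_\Gamma(z;u,y)|^t\bigr\}$.

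The crucial step is to decouple these two Green-function factors. Since the annulus has thickness at least $\diam\Theta + 2$, the random variables that affect $G_{\Gamma\setminus W_x}(z;x,v)$ (through potentials in the inner hole and on $\partial^{\rm o} W_x$) overlap with those affecting $G_\Gamma(z;u,y)$ only on a thin shell straddling $\partial W_x$. Integrating out this shell and invoking the Section~4 decoupling estimates, whose applicability rests on the a-priori bound of Lemma \ref{lemma:bounded} (and hence on Assumption \ref{ass:monotone}), should yield an estimate of the form
\begin{equation*}
\EE\bigl\{|G_{\Gamma\setminus W_x}(z;x,v)|^t\,|G_\Gamma(z;u,y)|^t\bigr\} \le \frac{C\,\Xi_s(\lambda)}{\lambda^{s/\lvert\Theta\rvert}}\, \EE\bigl\{|G_{\Gamma\setminus W_x}(z;x,v)|^t\bigr\}\, \sup_{u'\in\partial^{\rm i} W_x}\EE\bigl\{|G_\Gamma(z;u',y)|^t\bigr\}.
\end{equation*}
Summing over the at most $O(L^{d-1})$ adjacent boundary pairs, with further polynomial-in-$L$ factors absorbed into the $L^{3(d-1)}$, and inserting the hypothesis $b_s(\lambda, L, \Gamma) < b$ produces the one-step inequality
\begin{equation*}
\EE\bigl\{|G_\Gamma(z;x,y)|^t\bigr\} \le b\cdot \sup_{u \in \partial^{\rm i} W_x} \EE\bigl\{|G_\Gamma(z;u,y)|^t\bigr\}.
\end{equation*}

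Iterating this inequality $n$ times replaces $x$ by a point $u_n$ whose $|\cdot|_\infty$-distance from $x$ is at most $n(L + \diam\Theta + 2)$; the triangle inequality gives $|u_n - y|_\infty \ge |x-y|_\infty - n(L+\diam\Theta+2)$, so the one-step bound remains applicable as long as $n < |x-y|_\infty/(L+\diam\Theta+2) - 1$. Choosing the largest such $n$ and closing with the a-priori bound $\sup_u \EE\bigl\{|G_\Gamma(z;u_n,y)|^t\bigr\} \le C_s \Xi_s(\lambda)$ delivers $\EE\bigl\{|G_\Gamma(z;x,y)|^t\bigr\} \le b^n\, C_s\, \Xi_s(\lambda) \le A\,\euler^{-\mu|x-y|_\infty}$ with $A$ and $\mu$ as in the statement. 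The main technical obstacle will be the decoupling step: in the non-monotone alloy model each $V_\omega(x)$ couples up to $\lvert\Theta\rvert$ distinct random variables, so the two resolvent factors share dependencies that must be disentangled while preserving the decisive $\lambda^{-s/\lvert\Theta\rvert}$ smallness — this is where Assumption \ref{ass:monotone}(A2) enters, via the a-priori bound after conditioning on the shared shell variables.
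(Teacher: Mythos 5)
Your outline correctly identifies the overall skeleton — a one-step inequality from a geometric resolvent formula, iteration, and closing with the a-priori bound — but the decoupling step as you propose it does not go through, and this is precisely where the paper's argument has structure you are missing.

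The claim that $G_\Gamma(z;u,y)$ shares only a ``thin shell'' of randomness with $G_{\Gamma\setminus W_x}(z;x,v)$ is false: $G_\Gamma$ is the \emph{full} resolvent of $H_\Gamma$, so it depends on every $\omega_k$, in particular on all the variables inside the inner hole $\hat\Lambda_x$ that govern $G_{\Gamma\setminus W_x}(z;x,v)$. Therefore the two factors in your first-order decomposition are not nearly independent, and conditioning on a shell cannot produce the claimed factorized bound. This is exactly why the paper does not stop at the first-order resolvent identity. It uses the \emph{second-order} identity \eqref{eq:secondorder} with $\Lambda=\hat W_x$, giving a three-factor sum $\sum G_\Gamma^{\hat W_x}(z;x,u)\,G_\Gamma(z;u',v)\,G_\Gamma^{\hat W_x}(z;v',y)$, and then applies the first-order identity once more with $\Lambda = W_x$ (the thicker annulus) to each outer factor. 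After this two-layer peeling, the outermost factors $G_\Gamma^{W_x}(z;x,w)$ and $G_\Gamma^{W_x}(z;r',y)$ are depleted Green functions, genuinely independent of $\{\omega_k\}_{k\in B_x^+}$ and of each other, and can be taken outside the conditional expectation $\mathbb{E}_{B_x^+}$. The three inner factors are then bounded by H\"older (this is the origin of the restriction $s<1/3$), with Lemma~\ref{lemma:bounded}(a) for the middle $G_\Gamma$ and Lemma~\ref{lemma:bounded}(b) (the boundary-monotonicity bound, giving the $\lambda^{-t}$ factors) for the two depleted $G_\Gamma^{\hat W_x}$ terms; this is where the product $\lambda^{-s/\lvert\Theta\rvert}\Xi_s(\lambda)$ in the criterion actually comes from.

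A second, related gap: your proposed one-step inequality keeps $G_\Gamma(z;u,y)$ on the right-hand side, on the same domain $\Gamma$. The paper's one-step inequality \eqref{eq:protobound} instead has $G_{\Gamma\setminus\Lambda_x}(z;r,y)$ on the right. This is not cosmetic — the shrinkage to $\Gamma\setminus\Lambda_x$ is what makes the remaining Green function stochastically independent of the variables already averaged, and it is exactly why the hypothesis of the theorem is stated for \emph{arbitrary} $\Lambda\subset\Gamma$, so the bound can be fed back into the next iteration step on the smaller domain. Without the two-layer depletion producing $G_{\Gamma\setminus\Lambda_x}$ on the right, the iteration you describe cannot be closed.
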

\begin{remark}[Lifshitz-tail regime]
Apart from the strong disorder regime, a typical situation where the finite volume criterion can be verified are energies in a sufficiently small
neighbourhood of a
fluctuation boundary of the spectrum.

By this we mean that there is an energy $ E_0 \in \RR$, a neighbourhood size $\epsilon_0 >0$, and a diameter scaling exponent $ D \in \NN$
such that for any power $k \in \NN$ there exists a finite $ C_k  \in \NN$ and a scale $L_0\in \NN$, such that
\begin{equation}
 \label{eq:Lifshitz} \forall \ L > L_0, \epsilon \in (0,\epsilon_0)  \ : \
\PP\left \{  \omega  \mid \dist(\sigma(H_{\Lambda_{L,0}}), E_0) <\epsilon\right \} \le C_k \ \epsilon^k  \ L^D.
\end{equation}
In this situation one can use the a-priori bound in Lemma
\ref{lemma:bounded} and Combes-Thomas bound \cite {CombesT-73} along the lines of the
argument carried out in Section 5 of \cite{AizenmanENSS2006} to
establish the hypothesis of Theorem \ref{thm:result4}. Here a few
more comments are in order, since our model does not satisfy the
stochastic regularity assumptions on the random potential required
in \cite{AizenmanENSS2006}. Combes-Thomas estimates are
deterministic in nature, thus they remain unaffected by this change.
The mentioned regularity assumptions are needed to make sure that an
a-priori bound holds and that potential values at large distances
are independent. These two facts hold (for other reasons) for models
considered here. (Also, for our finite volume criterion one needs a
larger value of $\xi$ compared to Theorem 5.3 in
\cite{AizenmanENSS2006}. This is no obstacle since in the
Lifshitz-tail regime one can choose arbitrarily large $\xi$, by
taking the power $k$ in  \eqref{eq:Lifshitz} appropriately large.)
\end{remark}

Our paper is organized as follows. In Section~\ref{sec:bounded} we
show the boundedness of an averaged fractional power of the Green
function, which is an important ingredient of the finite volume
criterion proven in Section~\ref{sec:exp_decay}. In Section~\ref{sec:exp_decay}
we prove the finite volume criterion and Theorem~\ref{thm:result1} which
follows from the a-priori bound and the finite volume criterion. In
Section~\ref{sec:loc} we establish Theorems ~\ref{thm:result2} and
\ref{thm:result3}.
%
%
%
%----------------------------------------------
%
%%%%%%%%%%%%%%%%%%%%%%%%%%%%%%%%%%%%%%%%%%%%%%%
% B O U N D E D N E S S
%%%%%%%%%%%%%%%%%%%%%%%%%%%%%%%%%%%%%%%%%%%%%%%
%
%----------------------------------------------
%
%
%
\section{Boundedness of fractional moments} \label{sec:bounded}
In this section we prove the boundedness of an averaged fractional power of the Green function. The right hand side of the estimate depends in a quantitative way on the disorder. In particular it implies that the bound gets small in the high disorder regime. The estimate on the fractional moment of the Green function is used iteratively in the next section, where we prove exponential decay of the Green function.
\par
In this section we consider the situation when Assumption \ref{ass:monotone} holds.
Let us define $R = \max \{ \lvert \inf \supp \rho \rvert , \lvert \sup \supp \rho \rvert \}$ where $\rho$ is the density of $\mu$.
Our main result of this section is Lemma~\ref{lemma:bounded}. In the proof we will use several lemmata whose formulation is postponed to the second part of this section.
\par
First, let us introduce some more notation. For $x \in \ZZ^d$ we denote by $\cN (x) = \{k \in \ZZ^d : |x-k|_1 = 1\}$ the neighborhood of $x$. For $\Lambda \subset \ZZ^d$ and $x \in \ZZ^d$ we define $\Lambda^+ = \Lambda \cup \partial^{\rm o} \Lambda$, $\Lambda_x = \Lambda + x = \{k \in \ZZ^d : k-x \in \Lambda\}$ and $u_{\rm min}^\Lambda = \min_{k \in \Lambda} \lvert u(k) \rvert$.
\begin{lemma}[A-priori bound] \label{lemma:bounded}
Let Assumption~\ref{ass:monotone} be satisfied, $\Gamma \subset \ZZ^d$, $m > 0$ and $s \in (0,1)$.
\begin{enumerate}[(a)]
 \item Then there is a constant $C_s$, depending only on $d$, $\rho$, $u$, $m$ and $s$, such that for all $z \in \CC \setminus \RR$ with $|z| \leq m$, all $x,y \in \Gamma$ and all $b_x,b_y \in \ZZ^d$ with $x \in \Theta_{b_x}$ and $y \in \Theta_{b_y}$
\[
\EE_{N} \Bigl\{ \bigl\lvert G_\Gamma (z;x,y) \bigr\rvert^{s/(2|\Theta|)} \Bigr\} \leq C_s \Xi_s (\lambda),
\]
where $\Xi_s (\lambda) = \max \{ \lambda^{- s/(2 \lvert \Theta \rvert)} , \lambda^{-2s} \}$ and $N = \{b_x , b_y\} \cup \cN (b_x) \cup \cN (b_y)$.
\item Then there is a constant $D_s$, depending only on $d$, $\rho$, $u$ and $s$, such that for all $z \in \CC \setminus \RR$, all $x,y \in \Gamma$ and all $b_x,b_y \in \ZZ^d$ with
\[
x \in \Theta_{b_x} \cap \Gamma \subset \partial^{\rm i} \Theta_{b_x} \quad \text{and} \quad
y \in \Theta_{b_y} \cap \Gamma \subset \partial^{\rm i} \Theta_{b_y}
\]
we have
\[
 \EE_{\{b_x , b_y\}} \Bigl\{ \bigl\lvert G_\Gamma (z;x,y) \bigr\rvert^s \Bigr\} \leq D_s \lambda^{-s} .
\]
\end{enumerate}
\end{lemma}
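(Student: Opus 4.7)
I would handle part (b) first, since it is the monotone case that part (a) will ultimately reduce to. Under the hypothesis $\Theta_{b_x}\cap\Gamma\subset\partial^{\rm i}\Theta_{b_x}$, the random variable $\omega_{b_x}$ enters $H_\Gamma$ only through the rank $\le|\Theta|$ operator
\[
P_{b_x} := \sum_{k\in\Theta_{b_x}\cap\Gamma} u(k-b_x)\,|\delta_k\rangle\langle\delta_k|,
\]
which is positive thanks to Assumption (A2) ($u>0$ on $\partial^{\rm i}\Theta$). On the range of $P_{b_x}$ the dependence of $H_\Gamma$ on $\omega_{b_x}$ is genuinely monotone, so a rank-$r$ Aizenman--Molchanov-type fractional moment bound applies: for $s<1$, with $\delta_x,\delta_y$ in the ranges of $P_{b_x},P_{b_y}$, one has $\int |G_\Gamma(z;x,y)|^s \rho(\omega_{b_x})\,d\omega_{b_x}\le C\lambda^{-s}$. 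Averaging successively over $\omega_{b_x}$ and $\omega_{b_y}$ (via Fubini, handling $b_x=b_y$ and $b_x\ne b_y$ separately) yields the claimed $D_s\lambda^{-s}$.

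For part (a) the new difficulty is that $x\in\Theta_{b_x}$ need not lie on $\partial^{\rm i}\Theta_{b_x}$, so $\omega_{b_x}$ may act non-monotonically at $x$. I would combine a spatial decoupling with a non-monotone local estimate. First, a geometric resolvent identity / Schur complement expresses $G_\Gamma(z;x,y)$ as a sum of products
\[
G_{\Theta_{b_x}\cap\Gamma}(z;x,v_x)\,\widetilde G(z;v_x',v_y')\,G_{\Theta_{b_y}\cap\Gamma}(z;v_y,y),
\]
with $v_x\in\partial^{\rm i}(\Theta_{b_x}\cap\Gamma)$, $v_x'$ its neighbour in $\Gamma\setminus\Theta_{b_x}$, and $\widetilde G$ the resolvent on an appropriately depleted region; the neighbours $\cN(b_x)\cup\cN(b_y)$ appear in $N$ because the Laplacian hops across $\partial\Theta_{b_x}$ land precisely in those sites. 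Second, the inner factor $G_{\Theta_{b_x}\cap\Gamma}(z;x,v_x)$ is a matrix element of the inverse of a matrix of size at most $|\Theta|$, so by Cramer's rule it is a ratio of polynomials of degree $\le|\Theta|$ in the $\omega_k$'s; elementary integrability of $|P(t)|^{-s/\deg P}$ for $s<1$ against a bounded density yields
\[
\EE_N \bigl\{ |G_{\Theta_{b_x}\cap\Gamma}(z;x,v_x)|^{s/|\Theta|} \bigr\}\le c\,\lambda^{-s/|\Theta|},
\]
and analogously for the $y$-block. Finally, H\"older's inequality with conjugate exponents $2,2$ splits the expectation into an $x$-half and a $y$-half; combined with the $1/|\Theta|$ exponent this produces the claimed $s/(2|\Theta|)$. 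The two regimes in $\Xi_s(\lambda)=\max\{\lambda^{-s/(2|\Theta|)},\lambda^{-2s}\}$ correspond to whether the non-monotone local contributions dominate, or whether the middle factor $\widetilde G$ is itself treatable by two independent applications of part (b), each gaining an extra $\lambda^{-s}$.

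The main obstacle I anticipate is the non-monotone local bound in the second step: one must control the Cramer representation of the inner Green's function uniformly in the non-integrated $\omega$-variables, entirely without any positivity. The small exponent $s/|\Theta|$ is essentially forced by the degree of the denominator polynomial. Some combinatorial care is also required to coordinate which $\omega$'s are integrated at which step, so that the inner blocks and the middle factor $\widetilde G$ remain independent and part (b) is legitimately applicable to the latter.
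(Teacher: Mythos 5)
Your sketch for part~(b) essentially matches the paper: restrict to $\Lambda = (\Theta_{b_x}\cup\Theta_{b_y})\cap\Gamma$ via the Schur complement (Lemma~\ref{lemma:schur1}), then apply the monotone spectral averaging of Lemma~\ref{lemma:monotone}. One caveat: the paper does not run a rank-$r$ bound driven by $V_x$ alone, since $V_x$ vanishes on $\Lambda\setminus\Theta_{b_x}$ and is therefore not strictly positive on $\Lambda$; it works with $V_x+V_y$, which \emph{is} strictly positive on all of $\Lambda$, after a shear change of variables that reduces the double average to a single monotone one.

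For part~(a) your route is genuinely different from the paper's, and it has a gap. Rather than factorizing $G_\Gamma(z;x,y)$ into a product of local and middle Green functions, the paper applies the two-step Schur complement (Lemma~\ref{lemma:schur2}) with $\Lambda_1=(\Theta_{b_x}\cup\Theta_{b_y})\cap\Gamma$ and $\Lambda_2=\Lambda_1^+\cap\Gamma$, obtaining the \emph{additive} representation \eqref{eq:lemma:bounded1}: a single $\lvert\Lambda_1\rvert\times\lvert\Lambda_1\rvert$ inverse, with the exterior randomness packed into the operator $(K-z)^{-1}$ acting on $\ell^2(\partial^{\rm o}\Lambda_1)$. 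The norm of $(K-z)^{-1}$ is controlled by monotone averaging over $N'=\cN(b_x)\cup\cN(b_y)$ (Lemma~\ref{lemma:monotone2}), while $\omega_{b_x},\omega_{b_y}$ are averaged in the $\Lambda_1$-inverse via the determinant bound (Lemma~\ref{lemma:averagenorm}) after the change of variables $\omega_{b_x}=\zeta_x$, $\omega_{b_y}=\alpha\zeta_x+\zeta_y$, chosen so that $V_x+\alpha V_y$ is invertible on all of $\Lambda_1$. That choice of $\alpha$ is a key step absent from your sketch --- neither $V_x$ nor $V_y$ alone is invertible on $\Lambda_1$. Your multiplicative decomposition plus H\"older stumbles because the three factors are not stochastically independent given only the averaged variables $N=\{b_x,b_y\}\cup\cN(b_x)\cup\cN(b_y)$: the inner block $G_{\Theta_{b_x}\cap\Gamma}$ depends on $\omega_j$ for $j\in\cN(b_x)$, and so does the middle resolvent, so H\"older by itself cannot separate them; one would still have to bound each factor \emph{uniformly} in the shared variables, which is exactly what the additive Schur structure lets the paper sidestep. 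Also note that $\Xi_s(\lambda)$ does not reflect a dichotomy between local and middle contributions, but arises from combining the rates $\lambda^{-t}$, $\lambda^{-t/\lvert\Lambda_1\rvert}$, $\lambda^{-t(2\lvert\Lambda_1\rvert-1)/\lvert\Lambda_1\rvert}$ with $t=s\lvert\Lambda_1\rvert/(2\lvert\Theta\rvert)$, maximized over $\lvert\Lambda_1\rvert\in\{1,\dots,2\lvert\Theta\rvert\}$ and the cases $\lambda\ge1$, $\lambda<1$.
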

\begin{proof}
First we prove (a). Fix $x,y\in\Gamma$ and choose $b_x, b_y \in \ZZ^d$ in such a way that $x \in \Theta_{b_x}$ and $y \in \Theta_{b_y}$. This is always possible, and sometimes even with a choice $b_x = b_y$. However, we assume $b_x \not = b_y$. The case $b_x = b_y$ is similar but easier. Let us note that $\Theta_{b_x}$ and $\Theta_{b_y}$ are not necessarily disjoint. We apply Lemma~\ref{lemma:schur2} with $\Lambda_1 = \Theta_{b_x} \cup \Theta_{b_y} \cap \Gamma$ and $\Lambda_2 = \Lambda_1^+ \cap \Gamma$ and obtain
\begin{equation} \label{eq:lemma:bounded1}
\Pro_{\Lambda_1}^{\Gamma}(H_\Gamma - z)^{-1} (\Pro_{\Lambda_1}^{\Gamma})^* =
\bigl( H_{\Lambda_1} - z + \Pro_{\Lambda_1} \Delta \Pro_{\partial^{\rm o} \Lambda_1}^* (K-z)^{-1} \Pro_{\partial^{\rm o} \Lambda_1} \Delta \Pro_{\Lambda_1}^* \bigr)^{-1}
\end{equation}
where
\[
 K = H_{\partial^{\rm o} \Lambda_1} - \Pro_{\partial^{\rm o} \Lambda_1}^{\Lambda_1^+} B_\Gamma^{\Lambda_1^+} (\Pro_{\partial^{\rm o} \Lambda_1}^{\Lambda_1^+})^* .
\]
We note that $B_\Gamma^{\Lambda_1^+}$ depends only on the potential values $V_\omega (k)$, $k \in \Gamma \setminus \Lambda_1^+$ and is hence independent of $\omega_k$, $k \in \{b_x , b_y\} \cup \cN (b_x) \cup \cN (b_y)$. We also note that $K$ is independent of $\omega_{b_x}$ and $\omega_{b_y}$, and that the potential values $V_\omega (k)$, $k \in \partial^{\rm o} \Lambda_1$ depend monotonically on $\omega_k$, $k \in \cN (b_x) \cup \cN (b_y) =: N'$, by Assumption~\ref{ass:monotone}. More precisely, we can decompose $K : \ell^2 (\partial^{\rm o} \Lambda_1) \to \ell^2 (\partial^{\rm o} \Lambda_1)$ according to
\[
 K = A + \lambda \sum_{k \in N'} \omega_k V_k
\]
with some $A,V_k : \ell^2 (\partial^{\rm o} \Lambda_1) \to \ell^2 (\partial^{\rm o} \Lambda_1)$ and the properties that $A$ is independent of $\omega_k$, $k \in N'$, and $V := \sum_{k \in N'} V_k$ is diagonal and strictly positive definite with $V \geq u_{\rm min}^{\partial^{\rm i} \Theta}$. We fix $v \in N'$ and obtain with the transformation $\omega_v = \zeta_v$ and $\omega_i = \zeta_v + \zeta_i$ for $i \in N' \setminus \{v\}$ for all $t \in (0,1)$
\begin{align}
\EE_{N'} \Bigl\{ \bigl\lVert (K-z)^{-1} \bigr\rVert^t \Bigr\}
& = \!\!\!\!\!\!\!\!\! \int\limits_{[-R , R]^{\lvert N' \rvert}} \!\!\!\!\! \bigl\lVert (A - z + \lambda \sum_{k \in N'} \omega_k V_k)^{-1} \bigr\rVert^t \prod_{k \in N'} \rho (\omega_k) \drm \omega_k \nonumber \\[1ex]
& \leq \lVert \rho \rVert_\infty^{\lvert N' \rvert - 1} \!\!\!\!\!\!\!\!\! \int\limits_{[-S,S]^{\lvert N' \rvert}} \!\!\!\! \bigl\lVert (\tilde A + \zeta_v \lambda V)^{-1} \bigr\rVert^t \rho(\zeta_v) \drm \zeta_v \!\!\!\! \prod_{i \in N' \setminus \{v\}} \!\!\!\! \drm \zeta_i \label{eq:K}
\end{align}
where $S = 2R$ and $\tilde A = A - z + \lambda \sum_{k \in N' \setminus \{v\}} \zeta_i V_i$. The monotone spectral averaging estimate in Lemma~\ref{lemma:monotone2} gives for $t \in (0,1)$
\begin{equation*}
\EE_{N'} \Bigl\{ \bigl\lVert (K-z)^{-1} \bigr\rVert^t \Bigr\} \leq
\frac{\lVert \rho \rVert_\infty^{|N'|-1} (4R)^{|N'|-1} (C_{\rm W} \lvert \partial^{\rm o} \Lambda_1 \rvert \lVert \rho \rVert_\infty)^t}{(u_{\rm min}^{\partial^{\rm i} \Theta} \lambda)^t (1-t)} .
\end{equation*}
Hence there is a constant $C_1 (t)$ depending only on $\rho$, $u$, $d$, $\Lambda_1$ and $t$, such that
\begin{equation}  \label{eq:lemma:bounded2}
 \EE_{N'} \Bigl\{ \bigl\lVert (K-z)^{-1} \bigr\rVert^t \Bigr\} \leq \frac{C_1 (t)}{\lambda^t} .
\end{equation}
We use the notation $u_j$ for the translates of $u$, i.\,e. $u_j (x) = u(x-j)$ for all $j,x \in \ZZ^d$, as well as for the corresponding multiplication operator. The operator $H_{\Lambda_1} = -\Delta_{\Lambda_1} + V_{\Lambda_1}$ can be decomposed in $H_{\Lambda_1} = \tilde A' + \lambda \omega_{b_x} V_x + \lambda \omega_{b_y}  V_y$, where the multiplication operators $V_x,V_y : \ell^2 (\Lambda_1) \to \ell^2 (\Lambda_1)$ are given by $V_x (k) = u_{b_x} (k)$ and $V_y (k) = u_{b_y} (k)$, and where $\tilde A' = -\Delta_{\Lambda_1} + \lambda\sum_{k \in \ZZ^d \setminus \{b_x,b_y\}} \omega_k u_k$. Notice that $V_x$ is invertible on $\Theta_{b_x}$ and $V_y$ is invertible on $\Theta_{b_y}$. Hence there exists an $\alpha \in (0,1]$ such that $V_x + \alpha V_y$ is invertible on $\Lambda_1$. By Eq. \eqref{eq:lemma:bounded1} and this decomposition we have for all $t \in (0,1)$
\begin{align*}
 E &:= \EE_{\{b_x,b_y\}} \Bigl\{\bigl\lVert \Pro_{\Lambda_1}^{\Gamma}(H_\Gamma - z)^{-1} (\Pro_{\Lambda_1}^{\Gamma})^* \bigr\rVert^{t / |\Lambda_1|} \Bigr\} \\[1ex]
&= \int_{-R}^R \int_{-R}^R \bigl\lVert ( A' + \lambda \omega_{b_x} V_x + \lambda \omega_{b_y} V_y )^{-1} \bigr\rVert^{t/|\Lambda_1|} \rho(\omega_{b_x})\rho(\omega_{b_y}) \drm \omega_{b_x} \drm \omega_{b_y} ,
\end{align*}
where
\[
 A' = \tilde A' - z + \Pro_{\Lambda_1} \Delta \Pro_{\partial^{\rm o} \Lambda_1}^* (K-z)^{-1} \Pro_{\partial^{\rm o} \Lambda_1} \Delta \Pro_{\Lambda_1}^* .
\]
Notice that $\tilde A'$ and $K$ are independent of $\omega_{b_x}$ and $\omega_{b_y}$. Set $V := V_x + \alpha V_y$. We use the transformation $\omega_{b_x} = \zeta_{x}$, $\omega_{b_y} = \alpha \zeta_x + \zeta_y$ and obtain by Lemma~\ref{lemma:averagenorm}
\begin{align*}
 E &\leq \lVert \rho \rVert_\infty \int_{-2R}^{2R} \int_{-2R}^{2R} \bigl\lVert (A' + \zeta_y \lambda V_y + \zeta_x \lambda V)^{-1} \bigr\rVert^{t/|\Lambda_1|} \rho (\zeta_x) \drm \zeta_x \drm \zeta_y \\[1ex]
& \leq \lVert \rho \rVert_\infty \int_{-2R}^{2R}
  \frac{\lVert \rho \rVert_\infty^t \bigl( \lVert A' + \zeta_y \lambda V_y \rVert + 2R \lambda \lVert V \rVert \bigr)^{t(|\Lambda_1| - 1)/|\Lambda_1|}}
  {t^t 2^{-t} (1-t) \lambda^t \lvert \det V \rvert^{t/|\Lambda_1|}} \drm \zeta_y \\[1ex]
& \leq
  \frac{4R \lVert \rho \rVert_\infty^{t+1} \bigl( \lVert A' \rVert + 2R \lambda \lVert V_y \rVert + 2R \lambda \lVert V \rVert \bigr)^{t(|\Lambda_1| - 1)/|\Lambda_1|}}
  {t^t 2^{-t} (1-t) \lambda^t \lvert \det V \rvert^{t/|\Lambda_1|}} .
\end{align*}
The norm of $A'$ can be estimated as
\[
 \lVert  A' \rVert \leq 2d + (|\Theta| - 1) \lVert u \rVert_\infty + m + (2d)^2 \lVert (K-z)^{-1} \rVert .
\]
For the norm of $V_y$ and $V$ we have $\lVert V_y \rVert \leq \lVert u \rVert_\infty$ and $\lVert V \rVert \leq 2 \lVert u \rVert_\infty$. To estimate the determinant of $V$ we set $v_i = (u (i - b_x) , u(i - b_y))^{\rm T} \in \RR^2$ for $i \in \Lambda_1$, and $r = (1,\alpha)^{\rm T} \in \RR^2$. Then,
\[
 \lvert \det V \rvert =  \prod_{i \in \Lambda_1} \bigl\lvert u(i-b_x) + \alpha u(i-b_y) \bigr\rvert = \prod_{i \in \Lambda_1} \lVert v_i \rVert \bigl\lvert \langle r , v_i / \lVert v_i \rVert \rangle \bigr\rvert .
\]
Since we can choose $\alpha \in (0,1]$ in such a way that the distance of $r$ to each hyperplane $H_i = \{x_1,x_2 \in \RR : u(i-b_x)x_1 + u(i-b_y) x_2 = 0\}$, $i \in \Lambda_1$, is at least $d_0 = \sqrt{2} / (4(\lvert \Lambda_1 \rvert + 1))$, we conclude using $\lVert v_i \rVert \geq \sqrt{2} u_{\rm \min}^\Theta$
\[
  \lvert \det V \rvert \geq \prod_{i \in \Lambda_1} \lVert v_i \rVert d_0 \geq \left( \frac{u_{\rm \min}^\Theta}{2(\lvert \Lambda_1 \rvert + 1)} \right)^{\lvert \Lambda_1 \rvert} .
\]
Putting all together we see that there are constants $C_2 (t)$, $C_3 (t)$ and $C_4 (t)$ depending only on $\rho$, $u$, $d$, $m$, $\Lambda_1$ and $t$, such that
\begin{equation} \label{eq:withoutA}
 E \leq \frac{C_2(t)}{\lambda^t} + \frac{C_3(t)}{\lambda^{t/\lvert \Lambda_1 \rvert}} + \frac{C_4(t)}{\lambda^t} \lVert (K-z)^{-1} \rVert^{t \frac{|\Lambda_1| - 1}{|\Lambda_1|}} .
\end{equation}
If we average with respect to $\omega_k$, $k \in \cN(b_x)\cup \cN (b_y)$ we obtain by Eq.~\eqref{eq:lemma:bounded2}
\[
  \EE_{\cN(b_x)\cup \cN (b_y)} \bigl\{ E \bigr\} \leq \frac{C_2(t)}{\lambda^t} + \frac{C_3 (t)}{\lambda^{t/\lVert \Lambda_1 \rVert}} + \frac{C_4 (t) C_1 (t(|\Lambda_1| - 1)/|\Lambda_1|)}{\lambda^t \lambda^{t(|\Lambda_1| - 1)/|\Lambda_1|}} .
\]
Notice that $1 \leq \lvert \Lambda_1 \rvert \leq 2 \lvert \Theta
\rvert$. Now we choose $t = s|\Lambda_1|/(2|\Theta|)$ and eliminate
$\Lambda_1$ from the constants $C_1(t)$, $C_2(t)$, $C_3(t)$ and
$C_4(t)$ by maximizing them with respect to $\lvert \Lambda_1 \rvert
\in \{1, \dots , 2 \lvert \Theta \rvert\}$. We obtain that there are
constants $\tilde C_1 (s) , \tilde C_2 (s)$ and $\tilde C_3 (s)$,
depending only on $\rho$, $u$, $d$, $m$, and $s$, such that
\begin{align*}
 \EE_N \Bigl\{\bigl\lVert \Pro_{\Lambda_1}^{\Gamma}(H_\Gamma - z)^{-1} (\Pro_{\Lambda_1}^{\Gamma})^* \bigr\rVert^{\frac{s}{2\lvert \Theta \rvert}} \Bigr\}
& \leq
\frac{\tilde C_1 (s)}{\lambda^{s\frac{\lvert \Lambda_1 \rvert}{2\lvert \Theta \rvert}}}+
\frac{\tilde C_2 (s)}{\lambda^{\frac{s}{2\lvert \Theta \rvert}}}+
\frac{\tilde C_3 (s)}{\lambda^{s \frac{2\lvert \Lambda_1 \rvert - 1}{2 \lvert \Theta \rvert}}} \\[1ex]
& \leq
(\tilde C_1 (s)+\tilde C_2 (s)+\tilde C_3 (s)) \Xi_s (\lambda) .
\end{align*}
In the last estimate we have distinguished the cases $\lambda \geq 1$ and $\lambda < 1$ and used the fact that $1 \leq \lvert \Lambda_1 \rvert \leq 2 \lvert \Theta
\rvert$. This completes the proof of part (a).
\par
To prove (b) we fix $x,y \in \Gamma$ and $b_x,b_y \in \ZZ^d$ with $x \in \Theta_{b_x} \cap \Gamma \subset \partial^{\rm i} \Theta_{b_x}$ and $y \in \Theta_{b_y} \cap \Gamma \subset \partial^{\rm i} \Theta_{b_y}$. We again assume $b_x \not = b_y$. The case $b_x = b_y$ is similar but easier. We apply Lemma \ref{lemma:schur1} with $\Lambda = (\Theta_{b_x} \cup \Theta_{b_y})\cap \Gamma$ and obtain
\[
\Pro_\Lambda^\Gamma (H_\Gamma - z)^{-1} (\Pro_\Lambda^\Gamma)^* =
(H_\Lambda - B_\Gamma^\Lambda - z)^{-1}.
\]
Notice that $B_\Gamma^\Lambda$ is independent of $\omega_k$, $k \in \{b_x,b_y\}$.
By assumption, the potential values in $\Lambda$ depend \textit{monotonically} on $\omega_{b_x}$ and $\omega_{b_y}$. More precisely, we can rewrite the potential in the form $V_\Lambda = A + \omega_{b_x} \lambda V_x + \omega_{b_y} \lambda V_y$ with the properties that $A$ is independent of $\omega_k$, $k \in \{b_x,b_y\}$, and $V = V_x + V_y$ is strictly positive definite with $V \geq u_{\rm min}^{\partial^{\rm i} \Theta}$. We proceed similarly as in Ineq. \eqref{eq:K} and obtain using Lemma~\ref{lemma:monotone}
\[
 \EE_{\{b_x,b_y\}} \Bigl\{ \bigl\lVert \Pro_\Lambda^\Gamma (H_\Gamma - z)^{-1} (\Pro_\Lambda^\Gamma)^* \bigr\rVert^s \Bigr\} \leq \lVert \rho \rVert_\infty 4R \frac{(\lvert \Lambda \rvert u_{\rm min}^{\partial^{\rm i} \Theta} \lVert \rho \rVert_\infty)^s}{\lambda^s (1-s)} .
\]
We estimate $\lvert \Lambda \rvert \leq 2 \lvert \Theta \rvert$ and obtain part (b).
\end{proof}
\begin{remark}
 Note that even if Assuption \ref{ass:monotone} is not satisfied we obtain the bound \eqref{eq:withoutA}, namely
\begin{multline*}
 \EE_{\{b_x,b_y\}} \Bigl\{\bigl\lVert \Pro_{\Lambda_1}^{\Gamma}(H_\Gamma - z)^{-1} (\Pro_{\Lambda_1}^{\Gamma})^* \bigr\rVert^{t / |\Lambda_1|} \Bigr\} \\ \leq \frac{C_2(t)}{\lambda^t} + \frac{C_3(t)}{\lambda^{t/\lvert \Lambda_1 \rvert}} + \frac{C_4(t)}{\lambda^t} \lVert (K-z)^{-1} \rVert^{t \frac{|\Lambda_1| - 1}{|\Lambda_1|}} .
\end{multline*}
\end{remark}
Next we state and prove the tools used in the proof of Lemma \ref{lemma:bounded}.
The first set of these auxiliary results concerns spectral averaging, both in the monotone and in the non-monotone case. We start with an averaging lemma for determinants.
\begin{lemma} \label{lemma:det}
 Let $n \in \NN$ and $A, V \in \CC^{n \times n}$ be two matrices and assume that $V$ is invertible. Let further $0 \leq \rho \in L^1(\RR) \cap L^\infty (\RR)$ and $s \in (0,1)$. Then we have for all $\lambda > 0$ the bound
\begin{align}
 \int_{\RR} \abs{\det (A + rV)}^{-s/n} \rho (r) \drm r
&\leq \abs{\det V}^{-s/n} \Vert \rho \Vert_{L^1}^{1-s} \Vert\rho\Vert_{\infty}^{s} \frac{2^{s} s^{-s}}{1-s} \label{eq:det1} \\[1ex]
&\leq \abs{\det V}^{-s/n}\Bigl( \lambda^{-s} \Vert\rho\Vert_{L^1} + \frac{2 \lambda^{1-s}}{1-s} \Vert\rho\Vert_\infty  \Bigr) \label{eq:det2} .
\end{align}
\end{lemma}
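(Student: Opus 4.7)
The strategy is to reduce the $n$-dimensional determinantal integral to a one-dimensional singular integral on the line, via a factorisation and an AM--GM step.

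First I would factor out $V$. Since $V$ is invertible, $\det(A+rV) = \det(V)\det(V^{-1}A+rI)$, so
\[
\abs{\det(A+rV)}^{-s/n} = \abs{\det V}^{-s/n}\abs{\det(rI + V^{-1}A)}^{-s/n}.
\]
Let $z_1,\dots,z_n \in \CC$ be the eigenvalues of $-V^{-1}A$ (counted with multiplicity), so that $\det(rI+V^{-1}A) = \prod_{i=1}^n (r-z_i)$ and
\[
\abs{\det(rI+V^{-1}A)}^{-s/n} = \prod_{i=1}^n \abs{r-z_i}^{-s/n}.
\]
Applying the AM--GM inequality with $a_i := \abs{r-z_i}^{-s}$ gives $\prod_i a_i^{1/n} \leq \frac{1}{n}\sum_i a_i$, so pointwise in $r$,
\[
\prod_{i=1}^n \abs{r-z_i}^{-s/n} \leq \frac{1}{n}\sum_{i=1}^n \abs{r-z_i}^{-s}.
\]

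The remaining step is a bound on $\int_{\RR} \abs{r-z}^{-s}\rho(r)\,\drm r$ uniform in $z \in \CC$. Since $\abs{r-z} \geq \abs{r-\re z}$, it suffices to treat real $z$, after which translation invariance reduces us to $z=0$. Splitting the integral at radius $\lambda$ from $z$, using $\rho \leq \norm{\rho}_\infty$ on the inner region where $\int_{|r-z|\le\lambda}|r-z|^{-s}\drm r = 2\lambda^{1-s}/(1-s)$, and using $\abs{r-z}^{-s}\le \lambda^{-s}$ with $\int\rho = \norm{\rho}_{L^1}$ on the outer region, I obtain
\[
\int_{\RR}\abs{r-z}^{-s}\rho(r)\,\drm r \leq \lambda^{-s}\norm{\rho}_{L^1} + \frac{2\lambda^{1-s}}{1-s}\norm{\rho}_\infty.
\]
Combining this with the AM--GM bound and the factorisation (the $1/n$ and the sum over $n$ eigenvalues cancel) yields \eqref{eq:det2} at once.

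For \eqref{eq:det1} I would optimise the right-hand side of \eqref{eq:det2} in $\lambda>0$. Differentiating shows the minimiser is $\lambda_* = s\norm{\rho}_{L^1}/(2\norm{\rho}_\infty)$. Substituting back, one gets
\[
\norm{\rho}_{L^1}^{1-s}\norm{\rho}_\infty^s\, 2^s\left(\frac{s^{1-s}}{1-s} + s^{-s}\right),
\]
and the identity $s^{1-s}/(1-s) + s^{-s} = s^{-s}\bigl(s/(1-s) + 1\bigr) = s^{-s}/(1-s)$ collapses the bracket to $2^s s^{-s}/(1-s)$, giving \eqref{eq:det1}. The only potentially delicate step is the AM--GM reduction, but since the exponent $s/n$ matches the number of factors perfectly, the inequality is tight enough to produce the desired $n$-independent right-hand side; the rest is bookkeeping.
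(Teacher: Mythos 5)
Your proof is correct and structurally parallel to the paper's. The one genuine difference is in how you pass from the product $\prod_{j=1}^n |r-z_j|^{-s/n}$ to a bound in terms of one-dimensional integrals: the paper keeps the product and applies H\"older's inequality across the $n$ factors, obtaining $\prod_j \bigl(\int_\RR |r-\re z_j|^{-s}\rho(r)\,\drm r\bigr)^{1/n}$, whereas you apply AM--GM pointwise in $r$ before integrating, obtaining $\frac{1}{n}\sum_j \int_\RR |r-\re z_j|^{-s}\rho(r)\,\drm r$. Since the single-factor integral is bounded by a constant independent of $j$, the two routes give exactly the same bound; AM--GM is the slightly more elementary tool (no measure-theoretic machinery), while H\"older would retain more information if the $n$ integrals were of different sizes. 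Everything else — factoring out $\det V$, reducing to the real part via $|r-z|\geq|r-\re z|$, the $\lambda$-split of the one-dimensional integral, and the optimisation $\lambda_* = s\Vert\rho\Vert_{L^1}/(2\Vert\rho\Vert_\infty)$ with the algebraic collapse $s^{-s}+s^{1-s}/(1-s)=s^{-s}/(1-s)$ — agrees with the paper.
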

\begin{proof}
 Since $V$ is invertible, the function $r \mapsto \det (A + rV)$ is a polynomial of order $n$ and thus the set $\{r \in \mathbb{R} \colon A + rV \text{ is singular}\}$ is a discrete subset of $\mathbb{R}$ with Lebesgue measure zero. We denote the roots of the polynomial by $z_1,\dots , z_n \in \CC$. By multilinearity of the determinant we have
\[
 \abs{\det (A + rV)} = \abs{\det V} \prod_{j=1}^n |r - z_j| \geq
 \abs{\det V} \prod_{j=1}^n |r - \re{z_j}| .
\]
The H\"older inequality implies for $s \in (0,1)$ that
\begin{equation*}
 \int_\RR \abs{\det (A + rV)}^{-s/n} \rho (r) \drm r \leq \abs{\det V}^{-s/n} \prod_{j = 1}^n \left( \int_\RR |r - \re z_j|^{-s} \rho (r) \drm r  \right)^{1/n} .
\end{equation*}
For arbitrary $\lambda > 0$ and all $z \in \RR$ we have
\begin{align*}
\int_\RR \frac{1}{\abs{r - z}^{s}} \rho(r)\drm r &=  \int\limits_{\abs{r - z} \geq \lambda} \frac{1}{\abs{r - z}^{s}} \rho(r)\drm r + \int\limits_{\abs{r - z} \leq \lambda} \frac{1}{\abs{r - z}^{s}} \rho(r)\drm r \\[1ex]
& \leq \lambda^{-s} \Vert\rho\Vert_{L^1} + \Vert\rho\Vert_\infty \frac{2 \lambda^{1-s}}{1-s}
\end{align*}
which gives Ineq.{} \eqref{eq:det2}. We now choose $\lambda = s \Vert \rho \Vert_{L^1} / (2 \Vert \rho \Vert_\infty)$ (which minimises the right hand side of Ineq.{} \eqref{eq:det2}) and obtain Ineq.{} \eqref{eq:det1}.
\end{proof}
The last lemma can be used to obtain bounds on averages of resolvents.
\begin{lemma}\label{lemma:averagenorm}
 Let $n \in \mathbb{N}$, $A \in \mathbb{C}^{n \times n}$ an arbitrary matrix, $V \in \mathbb{C}^{n \times n}$ an invertible matrix and $s \in (0,1)$. Let further $0 \leq \rho \in L^1(\RR) \cap L^\infty (\RR)$ with $\supp \rho \subset [-R,R]$ for some $R>0$. Then we have the bounds
\begin{equation} \label{eq:norm_estimate}
\Vert V^{-1} \Vert \leq \frac{\Vert V \Vert^{n-1}}{\abs{\det V}}
\end{equation}
and
\begin{equation} \label{eq:average_norm}
\int_{-R}^R \bigl\Vert (A+rV)^{-1} \bigr\Vert^{s/n} \rho(r) {\rm d}r \leq \frac{\norm{\rho}_{L^1}^{1-s} \norm{\rho}_\infty^s (\Vert A \Vert + R \Vert V \Vert)^{s(n-1)/n}}{s^s 2^{-s} (1-s) \abs{\det V}^{s/n}} .
\end{equation}
\end{lemma}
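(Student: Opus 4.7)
My plan is to handle the two inequalities separately, with the first being the algebraic engine that drives the second.

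For inequality \eqref{eq:norm_estimate}, the natural route is through the singular value decomposition. Writing $V = U \Sigma W^*$ with singular values $s_1 \geq s_2 \geq \cdots \geq s_n > 0$ (invertibility ensures $s_n > 0$), one has $\lVert V \rVert = s_1$, $\lVert V^{-1} \rVert = 1/s_n$, and $\lvert \det V \rvert = \prod_{j=1}^n s_j$. Therefore
\[
 \lVert V^{-1} \rVert = \frac{1}{s_n} = \frac{\prod_{j=1}^{n-1} s_j}{\lvert \det V \rvert} \leq \frac{s_1^{n-1}}{\lvert \det V \rvert} = \frac{\lVert V \rVert^{n-1}}{\lvert \det V \rvert}.
\]
Equivalently, one could use the adjugate identity $V^{-1} = (\det V)^{-1} \operatorname{adj}(V)$ together with the fact that each cofactor is (up to sign) the determinant of an $(n-1) \times (n-1)$ submatrix, which is bounded by $\lVert V \rVert^{n-1}$. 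Either route yields \eqref{eq:norm_estimate} in a few lines.

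For inequality \eqref{eq:average_norm} the idea is to combine \eqref{eq:norm_estimate} pointwise in $r$ with the determinant average from Lemma~\ref{lemma:det}. Since $V$ is invertible, the map $r \mapsto \det (A + rV)$ is a polynomial of degree $n$ with leading coefficient $\det V \neq 0$, so $A + rV$ is invertible for Lebesgue a.e.\ $r \in \mathbb{R}$; for such $r$, inequality \eqref{eq:norm_estimate} applied to $A + rV$ gives
\[
 \bigl\lVert (A + rV)^{-1} \bigr\rVert \leq \frac{\lVert A + rV \rVert^{n-1}}{\lvert \det (A + rV) \rvert} \leq \frac{(\lVert A \rVert + R \lVert V \rVert)^{n-1}}{\lvert \det (A + rV) \rvert},
\]
where the triangle inequality and $\lvert r \rvert \leq R$ on $\supp \rho$ control the numerator. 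Raising to the $s/n$ power and integrating against $\rho$ yields
\[
 \int_{-R}^{R} \bigl\lVert (A + rV)^{-1} \bigr\rVert^{s/n} \rho(r)\, \drm r \leq (\lVert A \rVert + R \lVert V \rVert)^{s(n-1)/n} \int_{\mathbb R} \lvert \det (A + rV) \rvert^{-s/n} \rho(r)\, \drm r.
\]
Now Lemma~\ref{lemma:det}, specifically inequality \eqref{eq:det1}, estimates the remaining integral by $\lvert \det V \rvert^{-s/n} \lVert \rho \rVert_{L^1}^{1-s} \lVert \rho \rVert_\infty^{s} \, 2^{s} s^{-s}/(1-s)$, and rearranging the constants produces exactly the right-hand side of \eqref{eq:average_norm}.

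The only conceptual subtlety, rather than an obstacle, is justifying the a.e.\ application of \eqref{eq:norm_estimate} — but this is automatic from the polynomial structure of $r \mapsto \det(A+rV)$. Everything else is a clean composition of the SVD bound with the previously established determinant average, so no new decoupling ideas are needed.
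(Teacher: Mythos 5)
Your proof is correct and follows essentially the same route as the paper: inequality \eqref{eq:norm_estimate} via singular values (the paper orders them increasingly, you decreasingly, but the argument is identical), then applying it pointwise to $A+rV$ for a.e.\ $r$, bounding the numerator by $\lVert A\rVert + R\lVert V\rVert$, and finishing with Lemma~\ref{lemma:det}. The adjugate-matrix alternative you mention is a valid variant but not needed.
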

\begin{proof}
To prove Ineq. \eqref{eq:norm_estimate} let $0< s_1 \leq s_2 \leq \ldots \leq s_n$ be the singular values of $V$. Then we have $\prod_{i=1}^n s_i \leq s_1 s_n^{n-1}$, that is,
\begin{equation}\label{eq:2}
 \frac{1}{s_1} \leq \frac{s_n^{n-1}}{\prod_{i=1}^n s_i} .
\end{equation}
For the norm we have $\Vert V^{-1} \Vert = 1/s_1$ and $\Vert V \Vert = s_n$. For the determinant of $V$ there holds $\abs{\det V} = \prod_{i=1}^n s_i$. Hence, Ineq. \eqref{eq:norm_estimate} follows from Ineq. \eqref{eq:2}. To prove Ineq. \eqref{eq:average_norm} recall that, since $V$ is invertible, the set $\{r \in \RR \colon \text{$A+rV$ is singular}\}$ is a discrete set. Thus, for almost all $r \in [-R,R]$ we may apply Ineq. \eqref{eq:norm_estimate} to the matrix $A+rV$ and obtain
\[
 \bigl\Vert (A+rV)^{-1} \bigr\Vert^{s/n} \leq \frac{(\Vert A \Vert + R \Vert V \Vert)^{s(n-1)/n}}{\abs{\det (A+rV)}^{s/n}} .
\]
Inequality \eqref{eq:average_norm} now follows from Lemma \ref{lemma:det}.
\end{proof}
The assumption that the single-site potential $u$ is monotone at the boundary allows us to use monotone spectral averaging at some stage. For this purpose we cite a special case of \cite[Proposition 3.1]{AizenmanENSS2006}. Recall, a densely defined operator $T$ on some Hilbert space $\mathcal{H}$ with inner product $\langle \cdot , \cdot \rangle_{\mathcal{H}}$ is called \emph{dissipative} if $\im \langle x,Tx \rangle_{\mathcal{H}} \geq 0$ for all $x \in D(T)$.
\begin{lemma} \label{lemma:monotone}
Let $A \in \CC^{n \times n}$ be a dissipative matrix, $V \in \RR^{n \times n}$ diagonal and strictly positive definite and $M_1 , M_2 \in \CC^{n \times n}$ be arbitrary matrices. Then there exists a constant $C_{\rm W}$ (independent of $A$, $V$, $M_1$ and $M_2$), such that
\[
\cL \bigl\{ r \in \RR : \lVert M_1 (A + r V)^{-1} M_2 \rVert_{\rm HS} > t \bigr\} \leq C_{\rm W} \lVert M_1 V^{-1/2} \rVert_{\rm HS} \lVert M_2 V^{-1/2} \rVert_{\rm HS} \frac{1}{t} .
\]
Here, $\cL$ denotes the Lebesgue-measure and $\lVert \cdot \rVert_{\rm HS}$ the Hilbert Schmidt norm.
\end{lemma}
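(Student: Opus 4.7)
The plan is to reduce the weighted statement to the special case $V=I$ and then invoke the general monotone spectral averaging estimate for dissipative operators (AENSS 2006, Prop.\ 3.1). Since $V$ is diagonal and strictly positive definite, its square root $V^{1/2}$ and its inverse $V^{-1/2}$ are well-defined, Hermitian and positive. The conjugated matrix
\[
 B := V^{-1/2} A V^{-1/2}
\]
is again dissipative: for any $x$, $\im\sprod{x}{Bx} = \im\sprod{V^{-1/2}x}{A V^{-1/2}x}\ge 0$ because $A$ is dissipative and $V^{-1/2}$ is Hermitian. Moreover,
\[
 A + rV \;=\; V^{1/2}(B+rI)V^{1/2},
\]
which, upon inversion, gives $(A+rV)^{-1} = V^{-1/2}(B+rI)^{-1}V^{-1/2}$.

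With this identity I can rewrite the object of interest as
\[
 M_1(A+rV)^{-1}M_2 \;=\; N_1\,(B+rI)^{-1}\,N_2, \qquad N_1 := M_1 V^{-1/2},\ \ N_2 := V^{-1/2}M_2.
\]
Thus the distributional inequality sought is equivalent to
\[
 \cL\bigl\{r\in\RR : \lVert N_1(B+rI)^{-1}N_2\rVert_{\rm HS} > t\bigr\} \;\leq\; \frac{C_{\rm W}\,\lVert N_1\rVert_{\rm HS}\,\lVert N_2\rVert_{\rm HS}}{t},
\]
which is the content of the ``identity-perturbation'' form of monotone spectral averaging. This last bound follows from the general statement of \cite{AizenmanENSS2006}, whose proof proceeds by bounding the weak-$L^1$ norm of the scalar function $r\mapsto\sprod{\phi}{(B+rI)^{-1}\psi}$ via a Boole-type equality for dissipative $B$, and then promoting this rank-one statement to Hilbert--Schmidt operators by singular value (or orthonormal basis) expansions $N_1 = \sum_j \sigma_j\lvert\phi_j\rangle\langle\psi_j\rvert$ together with Cauchy--Schwarz in the indices.

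The main obstacle is precisely this promotion step: to keep track of how the rank-one weak-$L^1$ constant aggregates into a bound involving only the product of Hilbert--Schmidt norms and not, say, trace norms. I would handle it by fixing orthonormal bases diagonalizing $N_1^*N_1$ and $N_2 N_2^*$, applying the scalar Naboko/Boole bound to each matrix entry of $N_1(B+rI)^{-1}N_2$, and using the layer-cake representation of $\cL\{\lvert f\rvert>t\}$ together with the fact that, for a sum of functions each in weak-$L^1$, the weak-$L^1$ quasi-norm of the sum is controlled by the $\ell^2$ sum of the individual quasi-norms when paired with Cauchy--Schwarz over orthonormal bases. Since all this is carried out in \cite{AizenmanENSS2006}, it suffices in the present paper to invoke their result after the substitution above, noting that the constant $C_{\rm W}$ produced is independent of $A$, $V$, $M_1$, $M_2$ (it arises entirely from the scalar Boole-type bound).
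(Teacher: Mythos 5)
Your proposal matches the paper's treatment: the paper offers no independent proof of this lemma but simply cites it as a special case of Proposition~3.1 in \cite{AizenmanENSS2006}, and your conjugation by $V^{\pm 1/2}$, carrying the problem to the identity-perturbation form $N_1 (B + r I)^{-1} N_2$, is exactly the substitution that exhibits it as such a special case before deferring to the AENSS weak-$L^1$ bound and its Hilbert--Schmidt promotion. One small discrepancy is worth flagging: your reduction yields $N_2 = V^{-1/2} M_2$ and therefore produces the factor $\lVert V^{-1/2} M_2 \rVert_{\rm HS}$, whereas the lemma as printed has $\lVert M_2 V^{-1/2} \rVert_{\rm HS}$; these Hilbert--Schmidt norms need not coincide for general $M_2$, though in every application the paper takes $M_1 = M_2 = I$, so the distinction is immaterial there (and the printed version may simply carry a typo inherited from the cited statement).
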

As a corollary we have
\begin{lemma} \label{lemma:monotone2}
Let $A \in \CC^{n \times n}$ be a dissipative matrix, $V \in \RR^{n \times n}$ diagonal and strictly positive definite and $M_1 , M_2 \in \CC^{n \times n}$ be arbitrary matrices. Then there exists a constant $C_{\rm W}$ (independent of $A$, $V$, $M_1$ and $M_2$), such that
\[
\int_\RR \lVert M_1 (A + r V)^{-1} M_2 \rVert^s \rho (r) \drm r \leq
\frac{(n C_{\rm W} \lVert M_1 V^{-1/2} \rVert \lVert M_2 V^{-1/2} \rVert \lVert \rho \rVert_\infty )^s}{1-s} .
\]
\end{lemma}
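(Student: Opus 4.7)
\smallskip
\noindent\textbf{Proof plan.} The strategy is routine: derive the fractional-moment bound from the weak-type (Wegner-style) estimate of Lemma~\ref{lemma:monotone} via a layer-cake decomposition, a standard upgrade of a distributional bound into an $L^s$ bound for $s\in(0,1)$.

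First, I will translate the Hilbert--Schmidt estimate of Lemma~\ref{lemma:monotone} into an operator-norm estimate. Using $\lVert X\rVert \leq \lVert X\rVert_{\rm HS}$ one gets an inclusion of sublevel sets, and using $\lVert M_i V^{-1/2}\rVert_{\rm HS}\leq \sqrt{n}\,\lVert M_i V^{-1/2}\rVert$ one gets the bound on the right. Setting
\[
K := n\,C_{\rm W}\,\lVert M_1 V^{-1/2}\rVert\,\lVert M_2 V^{-1/2}\rVert,
\]
Lemma~\ref{lemma:monotone} yields $\cL\{r\in\RR:\lVert M_1(A+rV)^{-1}M_2\rVert>t\}\leq K/t$, so denoting $f(r):=\lVert M_1(A+rV)^{-1}M_2\rVert$,
\[
\int_{\{f>t\}}\rho(r)\,\drm r \leq \lVert\rho\rVert_\infty\, \frac{K}{t}\qquad\text{for every }t>0.
\]

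Second, I will apply the layer-cake formula
\[
\int_\RR f^s\,\rho\,\drm r \;=\; s\int_0^\infty t^{s-1}\Bigl(\int_{\{f>t\}}\rho(r)\,\drm r\Bigr)\drm t,
\]
splitting at a threshold $t_0>0$. On $[0,t_0]$ I use the trivial bound $\int_{\{f>t\}}\rho\,\drm r\leq \lVert\rho\rVert_{L^1}=1$, which is allowed because throughout this paper $\rho$ is the density of a probability measure (Assumption~(A1)); this is the sole place where that normalisation enters. On $[t_0,\infty)$ I use the weak-type bound from Step~1. The resulting estimate
\[
\int_\RR f^s\,\rho\,\drm r \;\leq\; t_0^{s} \;+\; \frac{s}{1-s}\,\lVert\rho\rVert_\infty K\, t_0^{s-1}
\]
is then minimised over $t_0>0$ by the choice $t_0=\lVert\rho\rVert_\infty K$, at which both terms coalesce to $(\lVert\rho\rVert_\infty K)^s/(1-s)$, giving exactly the claimed bound.

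I do not anticipate any genuine obstacle: the argument is completely analogous to the standard derivation of $L^s$ spectral-averaging bounds from Aizenman--Molchanov-type Wegner estimates. The only subtlety worth flagging is the implicit use of $\lVert\rho\rVert_{L^1}\leq 1$, without which the stated bound would be inconsistent under the rescaling $\rho\mapsto c\rho$.
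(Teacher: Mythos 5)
Your proof is correct and follows essentially the same route as the paper: both upgrade the weak-type (Wegner-style) bound of Lemma~\ref{lemma:monotone} to an $L^s$ bound via the layer-cake formula, split at an adjustable threshold, use $\lVert\rho\rVert_{L^1}=1$ on the small-$t$ piece and the weak-type estimate on the tail, and then optimise the threshold (the paper writes the layer cake as $\int_0^\infty \cL\{f^s>t\}\,\drm t$ and picks $\kappa=(\lVert\rho\rVert_\infty K)^s$, which is just your $t_0=\lVert\rho\rVert_\infty K$ after the substitution $t=u^{1/s}$). Your explicit remark that the bound tacitly uses $\lVert\rho\rVert_{L^1}\le 1$ is a fair observation; the paper also relies on this implicitly when bounding the first piece by $\kappa$.
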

\begin{proof}
 First note that for a matrix $T \in \CC^{n \times n}$ we have $\lVert T \rVert \leq \lVert T \rVert_{\rm HS} \leq \sqrt{n} \lVert T \rVert$. With the use of the layer cake representation, see e.\,g. \cite[p. 26]{LiebL2001}, and Lemma~\ref{lemma:monotone} we obtain for all $\kappa > 0$
\begin{align*}
I &= \int_\RR \lVert M_1 (A + r V)^{-1} M_2 \rVert^s \rho (r) \drm r
= \int_0^\infty \int_\RR \mathbf{1}_{\{\lVert M_1 (A + r V)^{-1} M_2 \rVert^s > t \}} \rho (r) \drm r \drm t \\[1ex]
& \leq \kappa + \int_\kappa^\infty \lVert \rho \rVert_\infty n C_{\rm W} \lVert M_1 V^{-1/2} \rVert \lVert M_2 V^{-1/2} \rVert \frac{1}{t^{1/s}} \drm t \\[1ex]
& = \kappa + \lVert \rho \rVert_\infty n C_{\rm W} \lVert M_1 V^{-1/2} \rVert \lVert M_2 V^{-1/2} \rVert \frac{s}{1-s} \kappa^{(s-1)/s} .
\end{align*}
If we choose $\kappa = (\lVert \rho \rVert_\infty n C_{\rm W} \lVert M_1 V^{-1/2} \rVert \lVert M_2 V^{-1/2} \rVert)^s$ we obtain the statement of the lemma.
\end{proof}
Note that all lemmata so far concerned finite dimensional matrices only.
In order to use them for our infinite dimensional operator $G_\omega (z)$ we will apply a special case of the Schur complement formula (also known as Feshbach formula or Grushin problem), see e.\,g. \cite[appendix]{BellissardHS2007}.
\begin{lemma} \label{lemma:schur1}
Let $\Lambda \subset \Gamma \subset \ZZ^d$ and $\Lambda$ finite. Then we have for all $z \in \CC\setminus\RR$ the identity
\begin{equation*}
 \Pro_{\Lambda}^\Gamma(H_\Gamma - z)^{-1} (\Pro_{\Lambda}^\Gamma)^* = \bigl(H_{\Lambda} - B_\Gamma^\Lambda - z \bigr)^{-1} ,
\end{equation*}
where $B_\Gamma^\Lambda : \ell^2 (\Lambda) \to (\Lambda)$ is specified in Eq. \eqref{eq:bij}. Moreover, the operator $B_\Gamma^\Lambda$ is independent of $V_\omega (k)$, $k \in \Lambda$.
\end{lemma}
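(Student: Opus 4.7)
The plan is a direct application of the Schur complement (Feshbach) formula with respect to the orthogonal decomposition $\ell^2(\Gamma) = \ell^2(\Lambda) \oplus \ell^2(\Gamma \setminus \Lambda)$. Since the potential $V_\Gamma$ is diagonal and $\Lambda \cap (\Gamma \setminus \Lambda) = \emptyset$, the only contribution to the off-diagonal blocks of $H_\Gamma - z$ comes from the Laplacian, so in this decomposition
\[
H_\Gamma - z \;=\; \begin{pmatrix} H_\Lambda - z & C \\ C^{*} & H_{\Gamma \setminus \Lambda} - z \end{pmatrix},
\]
where $C := -\Pro_\Lambda \Delta \Pro_{\Gamma \setminus \Lambda}^{*}$ encodes the kinetic hopping across the boundary. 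Because $\Delta$ connects only nearest neighbours and $\Lambda$ is finite, $C$ has finite rank and its range lies in $\ell^2(\partial^{\rm i}\Lambda)$, while its cokernel lies in $\ell^2(\partial^{\rm o}\Lambda)$.

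Since $z \in \CC \setminus \RR$ and $H_\Gamma, H_{\Gamma \setminus \Lambda}$ are self-adjoint, both $(H_\Gamma - z)^{-1}$ and $(H_{\Gamma \setminus \Lambda} - z)^{-1}$ exist as bounded operators. Standard block-matrix inversion (the Schur complement identity) then yields
\[
\Pro_\Lambda^\Gamma (H_\Gamma - z)^{-1} (\Pro_\Lambda^\Gamma)^{*} \;=\; \bigl( H_\Lambda - z - C (H_{\Gamma \setminus \Lambda} - z)^{-1} C^{*} \bigr)^{-1},
\]
with invertibility of the right-hand side on the finite-dimensional space $\ell^2(\Lambda)$ being a consequence of the identity itself. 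Identifying
\[
B_\Gamma^\Lambda \;=\; C (H_{\Gamma \setminus \Lambda} - z)^{-1} C^{*} \;=\; \Pro_\Lambda \Delta \Pro_{\Gamma \setminus \Lambda}^{*} (H_{\Gamma \setminus \Lambda} - z)^{-1} \Pro_{\Gamma \setminus \Lambda} \Delta \Pro_\Lambda^{*}
\]
then matches the announced definition in Eq.~\eqref{eq:bij} and gives the claimed formula.

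Independence of $B_\Gamma^\Lambda$ from $\{V_\omega(k)\}_{k \in \Lambda}$ is immediate from the explicit expression: the projections $\Pro_\Lambda$, $\Pro_{\Gamma \setminus \Lambda}$ and the Laplacian $\Delta$ are deterministic, while $H_{\Gamma \setminus \Lambda} = -\Delta_{\Gamma \setminus \Lambda} + \Pro_{\Gamma \setminus \Lambda} V_\omega \Pro_{\Gamma \setminus \Lambda}^{*}$ depends on the disorder only through $V_\omega(k)$ with $k \in \Gamma \setminus \Lambda$. The verification is really just a block-matrix computation with no serious obstacle; the one conceptual point worth emphasizing is that diagonality of the potential is essential for making the off-diagonal block $C$ purely kinetic, so that the Schur complement confines all dependence on $\{V_\omega(k)\}_{k \in \Lambda}$ to the $H_\Lambda$ summand on the right-hand side.
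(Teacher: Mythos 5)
Your proof is correct and follows essentially the same route as the paper: both decompose $\ell^2(\Gamma) = \ell^2(\Lambda) \oplus \ell^2(\Gamma\setminus\Lambda)$, apply the Schur complement (Feshbach) formula, identify $B_\Gamma^\Lambda$ as the correction term $\Pro_\Lambda \Delta \Pro_{\Gamma\setminus\Lambda}^*(H_{\Gamma\setminus\Lambda}-z)^{-1}\Pro_{\Gamma\setminus\Lambda}\Delta\Pro_\Lambda^*$, and read off the independence from $V_\omega(k)$, $k\in\Lambda$, from the fact that $(H_{\Gamma\setminus\Lambda}-z)^{-1}$ depends only on potential values outside $\Lambda$. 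The only cosmetic difference is that the paper writes $\Delta_\Gamma$ rather than $\Delta$ in its defining formula for $B_\Gamma^\Lambda$, which is harmless since the projections sandwich both to the same operator.
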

\begin{proof}
An application of the Schur complement formula gives
\begin{equation*} %\label{eq:schur1}
 \Pro_\Lambda^\Gamma(H_\Gamma - z)^{-1} (\Pro_\Lambda^\Gamma)^* =
 \Bigl[ H_\Lambda-z - \Pro_\Lambda^\Gamma \Delta_\Gamma (\Pro_{\Gamma\setminus\Lambda}^\Gamma)^* \bigl(H_{\Gamma \setminus \Lambda} - z \bigr)^{-1} \Pro_{\Gamma\setminus\Lambda}^\Gamma \Delta_\Gamma (\Pro_{\Lambda}^\Gamma)^* \Bigr]^{-1} ,
\end{equation*}
compare, e.\,g., \cite[Appendix]{BellissardHS2007}. For $\Lambda \subset \Gamma \subset \ZZ^d$ we define
\begin{subequations} \label{eq:bij}
\begin{equation} \label{eq:bij1}
 B_\Gamma^\Lambda := \Pro_\Lambda^\Gamma \Delta_\Gamma (\Pro_{\Gamma\setminus\Lambda}^\Gamma)^* \bigl(H_{\Gamma \setminus \Lambda} - z \bigr)^{-1} \Pro_{\Gamma\setminus\Lambda}^\Gamma \Delta_\Gamma (\Pro_{\Lambda}^\Gamma)^* .
\end{equation}
For the matrix elements of $B_\Gamma^\Lambda$ one calculates
\begin{equation} \label{eq:bij2}
\sprod{\delta_x^{\Lambda}}{B_\Gamma^\Lambda \delta_y^{\Lambda}} =
\begin{cases}
   \quad \!\!0  & \text{if $x \not \in \partial^{\rm i} \Lambda \vee y \not \in \partial^{\rm i} \Lambda$,} \\[1ex]
   \sum\limits_{\genfrac{}{}{0pt}{2}{k \in \Gamma\setminus\Lambda :}{\abs{k-x} = 1} }
   \sum\limits_{\genfrac{}{}{0pt}{2}{l \in \Gamma\setminus\Lambda :}{\abs{l-y} = 1} } G_{\Gamma\setminus\Lambda}(z;k,l) & \text{if $x \in \partial^{\rm i} \Lambda \wedge y \in \partial^{\rm i} \Lambda$.}
\end{cases}
\end{equation}
\end{subequations}
$G_{\Gamma\setminus\Lambda}$ is independent of $V_\omega (k)$, $k\in \Lambda$. Thus it is $B_\Gamma^\Lambda$ likewise.
\end{proof}
\begin{lemma}\label{lemma:schur2}
 Let $\Gamma \subset \ZZ^d$ and $\Lambda_1 \subset \Lambda_2 \subset \Gamma$. We assume that $\Lambda_1$ and $\Lambda_2$ are finite sets and that $\partial^{\rm i} \Lambda_2 \cap \Lambda_1 = \emptyset$. Then we have for all $z \in \CC\setminus\RR$ the identity
\begin{multline*}
\Pro_{\Lambda_1}^\Gamma(H_\Gamma - z)^{-1} \bigl(\Pro_{\Lambda_1}^\Gamma\bigr)^*
= \Bigl[ H_{\Lambda_1} - z \\ - \Pro_{\Lambda_1} \Delta \Pro_{\Lambda_2 \setminus \Lambda_1}^*
\Bigl(H_{\Lambda_2 \setminus \Lambda_1} - z - \Pro_{\Lambda_2 \setminus \Lambda_1}^{\Lambda_2} B_\Gamma^{\Lambda_2} \bigl(\Pro_{\Lambda_2 \setminus \Lambda_1}^{\Lambda_2}\bigr)^*\Bigr)^{-1}
\Pro_{\Lambda_2 \setminus \Lambda_1} \Delta \Pro_{\Lambda_1}^*
\Bigr]^{-1}.
\end{multline*}
\end{lemma}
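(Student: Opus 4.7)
The plan is a two-step Schur complement (Feshbach) argument. The first step reduces from $\Gamma$ to $\Lambda_2$ using Lemma \ref{lemma:schur1}; the second step reduces from $\Lambda_2$ to $\Lambda_1$ by block-inverting the resulting finite-dimensional operator in the decomposition $\ell^2(\Lambda_2) = \ell^2(\Lambda_1) \oplus \ell^2(\Lambda_2 \setminus \Lambda_1)$.

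First I would apply Lemma \ref{lemma:schur1} with $\Lambda = \Lambda_2$ to obtain
\[
\Pro_{\Lambda_2}^\Gamma (H_\Gamma - z)^{-1} (\Pro_{\Lambda_2}^\Gamma)^* = (H_{\Lambda_2} - B_\Gamma^{\Lambda_2} - z)^{-1}.
\]
Since $\Pro_{\Lambda_1}^\Gamma = \Pro_{\Lambda_1}^{\Lambda_2}\Pro_{\Lambda_2}^\Gamma$, the left-hand side of the desired identity equals the $(\Lambda_1,\Lambda_1)$-block of the inverse on the right. It remains to compute that block by one more Schur complement.

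Next I would decompose $H_{\Lambda_2} - B_\Gamma^{\Lambda_2} - z$ as a $2\times 2$ block operator along $\ell^2(\Lambda_1) \oplus \ell^2(\Lambda_2 \setminus \Lambda_1)$. From Eq.~\eqref{eq:bij2}, the operator $B_\Gamma^{\Lambda_2}$ is supported in $\partial^{\rm i}\Lambda_2 \times \partial^{\rm i}\Lambda_2$; combined with the hypothesis $\partial^{\rm i}\Lambda_2 \cap \Lambda_1 = \emptyset$, this gives $\partial^{\rm i}\Lambda_2 \subset \Lambda_2 \setminus \Lambda_1$, so the $\Lambda_1$-rows and -columns of $B_\Gamma^{\Lambda_2}$ vanish and the $(2,2)$-block is exactly $\Pro_{\Lambda_2\setminus\Lambda_1}^{\Lambda_2} B_\Gamma^{\Lambda_2} (\Pro_{\Lambda_2\setminus\Lambda_1}^{\Lambda_2})^*$. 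The $(1,1)$-block of $H_{\Lambda_2}$ equals $H_{\Lambda_1}$ since the matrix elements of $-\Delta_{\Lambda_2}$ and $-\Delta_{\Lambda_1}$ coincide for index pairs in $\Lambda_1$; and the off-diagonal blocks of $H_{\Lambda_2}$ equal $-\Pro_{\Lambda_1}\Delta\Pro_{\Lambda_2\setminus\Lambda_1}^*$ and its adjoint, where one may use the ambient $\Delta$ (rather than $\Delta_{\Lambda_2}$) precisely because the hypothesis $\partial^{\rm i}\Lambda_2 \cap \Lambda_1 = \emptyset$ forces all neighbours of points in $\Lambda_1$ to lie in $\Lambda_2$.

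Finally I would invoke the standard $2\times 2$ Schur complement formula (exactly as in the proof of Lemma \ref{lemma:schur1}). The $(1,1)$-block of the inverse equals the inverse of the Schur complement
\[
(H_{\Lambda_1} - z) - (-\Pro_{\Lambda_1}\Delta\Pro_{\Lambda_2\setminus\Lambda_1}^*)\, K^{-1}\, (-\Pro_{\Lambda_2\setminus\Lambda_1}\Delta\Pro_{\Lambda_1}^*),
\]
where $K = H_{\Lambda_2\setminus\Lambda_1} - z - \Pro_{\Lambda_2\setminus\Lambda_1}^{\Lambda_2} B_\Gamma^{\Lambda_2} (\Pro_{\Lambda_2\setminus\Lambda_1}^{\Lambda_2})^*$ is the $(2,2)$-block identified above. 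The two minus signs cancel, yielding exactly the formula in the statement. For $z \in \CC \setminus \RR$, invertibility of $K$ follows from the invertibility of $H_{\Lambda_2} - B_\Gamma^{\Lambda_2} - z$ ensured by Lemma \ref{lemma:schur1}, so the Schur complement formula applies.

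There is no real obstacle beyond bookkeeping: the entire argument is a composition of two Schur reductions, and the buffering hypothesis $\partial^{\rm i}\Lambda_2 \cap \Lambda_1 = \emptyset$ plays the dual role of localising $B_\Gamma^{\Lambda_2}$ to the $(2,2)$-block and allowing the off-diagonal kinetic coupling to be written using the unrestricted Laplacian $\Delta$.
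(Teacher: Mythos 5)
Your proof is correct and takes essentially the same two-step Schur-complement route as the paper: first Lemma~\ref{lemma:schur1} with $\Lambda=\Lambda_2$, then a block Schur complement of $H_{\Lambda_2}-z-B_\Gamma^{\Lambda_2}$ along $\ell^2(\Lambda_1)\oplus\ell^2(\Lambda_2\setminus\Lambda_1)$ using that $\partial^{\rm i}\Lambda_2\cap\Lambda_1=\emptyset$ localizes $B_\Gamma^{\Lambda_2}$ to the $(2,2)$-block. One small remark: the identification of the off-diagonal blocks as $-\Pro_{\Lambda_1}\Delta\Pro_{\Lambda_2\setminus\Lambda_1}^*$ does not actually require the buffering hypothesis -- it follows directly from $\Delta_{\Lambda_2}=\Pro_{\Lambda_2}\Delta\Pro_{\Lambda_2}^*$ -- so the hypothesis is used only to confine $B_\Gamma^{\Lambda_2}$ to the $(2,2)$-block.
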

\begin{proof}
We decompose $\Lambda_2 = \Lambda_1 \cup (\Lambda_2 \setminus \Lambda_1)$ and notice that $\langle \delta_x , B_\Gamma^{\Lambda_2} \delta_y \rangle = 0$ if $x \in \Lambda_1$ or $y \in \Lambda_1$ by Eq. \eqref{eq:bij2}. Due to this decomposition we write $H_{\Lambda_2} - z - B_{\Gamma}^{\Lambda_2}$ as the block operator matrix
\[
H_{\Lambda_2} - z - B_{\Gamma}^{\Lambda_2} =
\begin{pmatrix}
  H_{\Lambda_1} - z & -\Pro_{\Lambda_1} \Delta \Pro_{\Lambda_2 \setminus \Lambda_1}^* \\[2ex]
  -\Pro_{\Lambda_2 \setminus \Lambda_1} \Delta \Pro_{\Lambda_1}^* & H_{\Lambda_2 \setminus \Lambda_1} - z - \Pro_{\Lambda_2 \setminus \Lambda_1}^{\Lambda_2} B_\Gamma^{\Lambda_2} \bigl(\Pro_{\Lambda_2 \setminus \Lambda_1}^{\Lambda_2}\bigr)^*
\end{pmatrix} .
\]
The Schur complement formula gives $\Pro_{\Lambda_1}^{\Lambda_2}(H_{\Lambda_2} - z - B_{\Gamma}^{\Lambda_2})^{-1} (\Pro_{\Lambda_1}^{\Lambda_2})^* = S^{-1}$ where $S$ equals
\[
 H_{\Lambda_1} - z - \Pro_{\Lambda_1} \Delta \Pro_{\Lambda_2 \setminus \Lambda_1}^*
\bigl(H_{\Lambda_2 \setminus \Lambda_1} - z - \Pro_{\Lambda_2 \setminus \Lambda_1}^{\Lambda_2} B_\Gamma^{\Lambda_2} \bigl(\Pro_{\Lambda_2 \setminus \Lambda_1}^{\Lambda_2}\bigr)^*\bigr)^{-1}
\Pro_{\Lambda_2 \setminus \Lambda_1} \Delta \Pro_{\Lambda_1}^* .
\]
Since $\Pro_{\Lambda_1}^{\Lambda_2}(H_{\Lambda_2} - z - B_{\Gamma}^{\Lambda_2})^{-1} (\Pro_{\Lambda_1}^{\Lambda_2})^* = \Pro_{\Lambda_1}^\Gamma (H_\Gamma - z)^{-1} (\Pro_{\Lambda_1}^\Gamma)^*$ by Lemma~\ref{lemma:schur1}, we obtain the statement of the lemma.
\end{proof}
%
%
%
%----------------------------------------------
%
%%%%%%%%%%%%%%%%%%%%%%%%%%%%%%%%%%%%%%%%%%%%%%%
% E X P    D E C A Y
%%%%%%%%%%%%%%%%%%%%%%%%%%%%%%%%%%%%%%%%%%%%%%%
%
%----------------------------------------------
%
%
%
\section{Exponential decay of fractional moments through the finite volume
criterion} \label{sec:exp_decay}
In this section we show that the so called finite volume criterion implies exponential decay of the Green function. Together with the a-priori bound from Lemma \ref{lemma:bounded} this gives us Theorem~\ref{thm:result1}, which will be proven at the end of this section.
\par
We shall consider ``depleted'' Hamiltonians to formulate a geometric
resolvent formula. Such Hamiltonians are obtained by setting to zero
the ``hopping terms'' of the Laplacian along a collection of bonds.
More precisely, let $\Lambda \subset \Gamma \subset \ZZ^d$ be
arbitrary sets. We define the depleted Laplace operator
$\Delta_\Gamma^\Lambda :\ell^2 (\Gamma) \to \ell^2 (\Gamma)$ by
\begin{equation*} \label{eq:de1}
 \sprod{\delta_x}{\Delta_\Gamma^\Lambda \delta_y} :=
\begin{cases}
  0 & \text{if $x \in \Lambda$, $y \in \Gamma \setminus \Lambda$ or $y \in \Lambda$,
  $x \in \Gamma \setminus \Lambda$} , \\
  \bigl \langle \delta_x , \Delta_\Gamma \delta_y \bigr \rangle & \text{else} .
\end{cases}
\end{equation*}
In other words, the hopping terms which connect $\Lambda$ with
$\Gamma \setminus \Lambda$ or vice versa are deleted. The depleted
Hamiltonian $H_\Gamma^\Lambda : \ell^2 (\Gamma) \to \ell^2 (\Gamma)$
is then defined by
\begin{equation*} \label{eq:depl}
 H_\Gamma^\Lambda := -\Delta_\Gamma^\Lambda + V_\Gamma .
\end{equation*}
Let further $T_\Gamma^\Lambda := \Delta_\Gamma -
\Delta_\Gamma^\Lambda$ be the difference between the the ``full''
Laplace operator and the depleted Laplace operator. For $z \in \CC \setminus \RR$ and $x,y \in \Gamma$ we use the notation $G_\Gamma^\Lambda (z) = (H_\Gamma^\Lambda - z)^{-1}$ and $G_\Gamma^\Lambda (z;x,y) = \bigl \langle \delta_x, G_\Gamma^\Lambda(z) \delta_y \bigr \rangle$. To formulate a geometric resolvent formula we apply the second resolvent identity and obtain for arbitrary sets $\Lambda \subset \Gamma \subset \ZZ^d$
\begin{equation}
G_\Gamma (z) = G_\Gamma^\Lambda (z) + G_\Gamma (z) T_\Gamma^\Lambda  G_\Gamma^\Lambda (z) = G_\Gamma^\Lambda (z) + G_\Gamma^\Lambda (z) T_\Gamma^\Lambda G_\Gamma (z) .
\label{eq:firstorder}
\end{equation}
%
% However, one then reaches an obstacle, since the quantity
% whose mean needs to be estimated is a product of two Green
% functions which are not independent. The co-dependence
% problem is solved here through a few more application of
% the resolvent identity (followed by a decoupling argument).
For our purposes it will be necessary to use an iterated version of this formula.
Namely, the two applications of the resolvent identity give
\begin{equation}
G_{\Gamma} (z)  = G_{\Gamma}^{\Lambda} (z) + G_{\Gamma }^{\Lambda} (z) T^{\Lambda} G_{\Gamma}^{\Lambda} (z) + G_{\Gamma }^{\Lambda} (z)  T^{\Lambda} G_{\Gamma} (z) T^{\Lambda} G_{\Gamma}^{\Lambda} (z) .
\label{eq:secondorder}
\end{equation}
\begin{remark} \label{remark:depleted}
Notice that $G_\Gamma^\Lambda (z;x,y) = G_\Lambda (z;x,y)$ if $x,y \in \Lambda$, $G_\Gamma^\Lambda (z;x,y) = 0$ if $x \in \Lambda$ and $y \not \in \Lambda$ or vice versa, and that $G_\Gamma^\Lambda (z) = G_\Gamma^{\Lambda^{\rm c}} (z)$. If $\Gamma \setminus \Lambda$ decomposes into at least two components which are not connected, and $x$ and $y$ are not in the same component, then we also have $G_\Gamma^\Lambda (z;x,y) = 0$.
\par
Since $\Gamma$ is not necessarily the whole lattice
$\ZZ^d$, it may be that terms of the type $G_{\Gamma} (z;i,j)$ occur for
some $\Gamma \subset \ZZ^d$ and some $i \not \in \Gamma$ or $j \not \in \Gamma$. In this case we use the
convention that $G_\Gamma (z;i,j) = 0$.
\end{remark}
To formulate the results of this section we will need the following
notation: For finite $\Gamma \subset \ZZ^d$ we denote by $\diam \Gamma$ the diameter of $\Gamma$ with respect to the supremum norm, i.\,e. $\diam \Gamma = \sup_{x,y\in \Gamma} \lvert x-y \rvert_\infty$. Let $\Gamma \subset \ZZ^d$, fix $L \ge \diam \Theta + 2$, let $\Lambda_L = [-L,L]^d \cap \ZZ^d$ be a cube of size $2L+1$, let
\[
B = \partial^{\rm i} \Lambda_L,
\]
and define the sets
\[
\hat \Lambda_x = \{ k \in \Gamma : k \in \Theta_b \text{ for some $b \in \Lambda_{L,x}$} \}
\]
and
\begin{equation} \label{eq:Wx}
 \hat W_x = \{ k \in \Gamma : k \in \Theta_b \text{ for some $b \in B_x$} \} .
\end{equation}
Recall that for $\Gamma \subset \ZZ^d$ we denote by $\Gamma_x = \Gamma + x = \{k \in \ZZ^d : k-x \in \Gamma\}$ the translate of $\Gamma$.
Hence $(\Lambda_L)_x=\Lambda_{L,x}$ and
$\hat W_x$ is the union of translates of $\Theta$ along the sides of $B_x$, restricted to the set $\Gamma$. For $\Gamma \subset \ZZ^d$ we can now introduce the sets
\[
\Lambda_x: = \hat \Lambda_x^+ \cap \Gamma \quad \text{and} \quad W_x = \hat W_x^+ \cap \Gamma
\]
which will play a role in the assertions below.
\begin{theorem}[Finite volume criterion] \label{thm:exp_decay}
Suppose that Assumption \ref{ass:monotone} is satisfied, let $\Gamma \subset \ZZ^d$, $z
\in \CC \setminus \RR$ with $\lvert z \rvert \leq m$ and $s \in (0,1/3)$. Then there
exists a constant $B_s$ which depends only on $d$, $\rho$, $u$, $m$, $s$,
such that if the condition
\begin{equation}\label{eq:fin_cond}
b_s(\lambda, L,\Lambda): = \frac{B_s L^{3(d-1)} \Xi_s (\lambda)}{\lambda^{2s/(2\lvert
\Theta \rvert)}}\, \sum_{w\in\partial^{\rm o} W_x}\mathbb{E}
\bigl\{\lvert G_{\Lambda\setminus W_x} (z;x,w)\rvert^{s/(2\lvert
\Theta \rvert)}\bigr\}< b
\end{equation}
is satisfied for some $b \in (0,1)$, arbitrary $\Lambda \subset \Gamma$, and all $x\in
\Lambda$, then for all $x,y \in \Gamma$
\begin{equation*} %\label{eq:exp_decay}
\mathbb{E} \bigl\{\lvert G_\Gamma (z;x,y)\rvert^{s/(2\lvert \Theta
\rvert)} \bigr\}\leq A \euler^{-\mu|x-y|_\infty} .
\end{equation*}
Here
\[
A=\frac{C_s \Xi_s (\lambda)}{b} \quad \text{and} \quad
\mu=\frac{\lvert \ln b \rvert}{L+\diam \Theta + 2}\,,\] with $C_s$ inherited
from the a-priori bound (Lemma \ref{lemma:bounded}).
\end{theorem}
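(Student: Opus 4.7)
The plan is to derive a renormalization inequality of the form
\[
\mathbb{E}|G_\Gamma(z;x,y)|^{s/(2|\Theta|)} \leq b\cdot\sup_{w'}\mathbb{E}|G_{\Gamma'}(z;w',y)|^{s/(2|\Theta|)}
\]
for a suitable subset $\Gamma'\subset\Gamma$ and a new source point $w'$ that lies farther from $x$, and then to iterate this inequality about $|x-y|_\infty/(L+\diam\Theta+2)$ times, closing off the recursion with the a-priori bound (Lemma~\ref{lemma:bounded}(a)). First suppose $|x-y|_\infty > L+\diam\Theta+2$ (the opposite case follows directly from Lemma~\ref{lemma:bounded}(a) by enlarging $A$). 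Apply the second-order resolvent identity \eqref{eq:secondorder} with $\Lambda = W_x$. The annulus $W_x$ disconnects $\Gamma\setminus W_x$ into an inner component containing $x$ and an outer component containing $y$; by Remark~\ref{remark:depleted}, both $G_\Gamma^{W_x}(z;x,y)$ and $(G_\Gamma^{W_x} T^{W_x} G_\Gamma^{W_x})(z;x,y)$ vanish, leaving
\[
G_\Gamma(z;x,y)=\sum_{(u,u')}\sum_{(v,v')} G_{\Gamma\setminus W_x}(z;x,u)\,G_\Gamma(z;u',v)\,G_{\Gamma\setminus W_x}(z;v',y),
\]
where the sums range over nearest-neighbor pairs with $u$ in the inner component, $u'\in W_x$, $v\in W_x$, and $v'$ in the outer component.

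Next I raise both sides to the power $t:=s/(2|\Theta|)\in(0,1/6)$, apply the subadditivity $|a+b|^t\leq|a|^t+|b|^t$, and take expectation. The decisive technical step is decoupling. Because $W_x$ has thickness of order $\diam\Theta$, no single-site potential $u(\cdot-k)$ can reach simultaneously from the inner to the outer component of $\Gamma\setminus W_x$, so the outer factors $G_{\Gamma\setminus W_x}(z;x,u)$ and $G_{\Gamma\setminus W_x}(z;v',y)$ depend on disjoint (hence independent) collections of random variables. Moreover, the random variables needed for the a-priori bound on the middle factor $G_\Gamma(z;u',v)$ -- namely $\omega_k$ with $k\in\{b_{u'},b_v\}\cup\cN(b_{u'})\cup\cN(b_v)$ -- can be chosen so that the corresponding single-site potentials sit entirely inside $W_x$, and are therefore independent of both outer factors. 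Conditioning on the outer variables, applying Lemma~\ref{lemma:bounded}(a) to bound the middle factor uniformly by $C_s\Xi_s(\lambda)$, and then summing over the $O(L^{d-1})$ boundary pairs on each side leads to
\[
\mathbb{E}|G_\Gamma(z;x,y)|^t \leq \frac{B_s L^{3(d-1)}\Xi_s(\lambda)}{\lambda^{s/|\Theta|}}\sum_{w\in\partial^{\rm o}W_x}\mathbb{E}|G_{\Gamma\setminus W_x}(z;x,w)|^t\cdot\sup_{w'\in\partial^{\rm o}W_x}\mathbb{E}|G_{\Gamma\setminus W_x}(z;w',y)|^t,
\]
where the combinatorial prefactor $L^{3(d-1)}$ absorbs the $|\partial^{\rm o}W_x|$ factors from the two boundary sums together with a replacement $\sum_{v'}\leq|\partial^{\rm o}W_x|\sup_{v'}$, and $\lambda^{-s/|\Theta|}$ accounts for two further uses of the a-priori bound. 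The hypothesis $b_s(\lambda,L,\Gamma)<b$ then bounds the prefactor times the first sum by $b$, yielding the renormalization inequality.

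The iteration is then straightforward: with $\Gamma':=\Gamma\setminus W_x\subset\Gamma$ and the new source $w'\in\partial^{\rm o}W_x$ satisfying $|x-w'|_\infty\geq L+\diam\Theta+2$, the hypothesis transfers to $\Gamma'$ since it was assumed for every subset of $\Gamma$. After $n:=\lfloor|x-y|_\infty/(L+\diam\Theta+2)\rfloor$ iterations, the residual source is within $L+\diam\Theta+2$ of $y$, where Lemma~\ref{lemma:bounded}(a) yields a terminal bound $C_s\Xi_s(\lambda)$. Collecting factors gives
\[
\mathbb{E}|G_\Gamma(z;x,y)|^t \leq b^n\cdot C_s\Xi_s(\lambda) \leq \frac{C_s\Xi_s(\lambda)}{b}\,\euler^{-|x-y|_\infty\,|\ln b|/(L+\diam\Theta+2)},
\]
which is exactly the claimed bound with $A$ and $\mu$ as stated. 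The main obstacle is the decoupling step: one must verify that three groups of random variables -- those driving the inner depleted Green function, those driving the outer depleted Green function, and the small set averaged by the a-priori bound on the middle factor -- can be arranged to be pairwise independent. This exploits both the annular geometry of $W_x$ and the finite-range structure of the alloy-type potential, and is also the reason the fractional exponent has to be reduced from $s$ down to $s/(2|\Theta|)$: the exponent must be small enough to swallow the combinatorial blow-up produced by the support $\Theta$ of the single-site potential.
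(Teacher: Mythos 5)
Your high-level strategy — derive a ``renormalization'' inequality by a resolvent expansion across an annulus, close it with the a-priori bound, and iterate — is indeed the paper's strategy (Lemma~\ref{lemma:iteration1} followed by iteration). However, the decisive decoupling step contains a genuine gap that the paper's proof is specifically designed to circumvent.

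Your decoupling claim is that, after applying the second-order identity with $\Lambda=W_x$, the variables needed by the a-priori bound for the middle factor $G_\Gamma(z;u',v)$, namely $\omega_k$ for $k\in\{b_{u'},b_v\}\cup\cN(b_{u'})\cup\cN(b_v)$, ``can be chosen so that the corresponding single-site potentials sit entirely inside $W_x$, and are therefore independent of both outer factors.'' This fails in general. The points $u',v$ sit on $\partial^{\rm i}W_x$, that is, at the very edge of the annulus, so any $b_{u'}$ with $u'\in\Theta_{b_{u'}}$ necessarily has $\Theta_{b_{u'}}$ (or $\Theta_k$ for $k\in\cN(b_{u'})$) protruding into $\Gamma\setminus W_x$. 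A concrete check: take $d=1$ and $\Theta=\{0,\dots,D\}$. Then $W_x$ on the $+$ side is $\{x+L-1,\dots,x+L+D+1\}$, $u'=x+L-1$, and containment of $\Theta_k\subset W_x$ for all $k\in\{b_{u'}\}\cup\cN(b_{u'})$ forces $b_{u'}=x+L$, contradicting $u'\in\Theta_{b_{u'}}$. So you cannot simultaneously condition on the outer factors and apply Lemma~\ref{lemma:bounded}(a) to the middle one: their dependency sets must overlap.

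The paper resolves this by a \emph{two-stage} expansion: first a second-order expansion across the thinner annulus $\hat W_x$, so that $u',v\in\hat W_x$ and $b_{u'},b_v$ may be taken in $B_x$, making the $B_x^+$-average valid for Lemma~\ref{lemma:bounded}(a) on the middle factor; second, a first-order expansion of the two outer depleted Green functions across the thicker $W_x=\hat W_x^+$. This produces a \emph{five}-factor decomposition in which the two outermost factors $G_{\Gamma\setminus W_x}(z;x,w)$ and $G_{\Gamma\setminus W_x}(z;r',y)$ are genuinely independent of $\{\omega_k\}_{k\in B_x^+}$ (and of each other), while the two intermediate boundary factors $G_{\Gamma\setminus\hat W_x}(z;w',u)$ and $G_{\Gamma\setminus\hat W_x}(z;v',r)$ \emph{still} depend on $B_x^+$. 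These are then controlled not by part (a) but by part (b) of Lemma~\ref{lemma:bounded}, whose proof rests precisely on the boundary-monotonicity Assumption~(A2). Your proposal never invokes Lemma~\ref{lemma:bounded}(b) at all; the only role it assigns to Assumption~\ref{ass:monotone} is indirect, through part (a). This is the crux of why the paper needs that assumption in this theorem, and it is missing from your argument. Relatedly, the restriction $s<1/3$ arises because the conditional expectation of a triple product (the three inner factors) is estimated by H\"older's inequality with exponents $3s/n$ each, needing $3s<1$; this justification is absent in your version, where only one factor is averaged.
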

\begin{remark} \label{remark:finite_volume}
Note that $\Gamma \setminus W_x$ decomposes into  two components
which are not connected, so that the sum in \eqref{eq:fin_cond} runs
over the sites $r$ related to only one of these components, which is
always compact, regardless of the choice of $\Gamma$. It then
follows that in order to establish the exponential falloff of the
Green function it suffices to consider the decay properties of the
Green function for the Hamiltonians defined on finite sets. The
finite volume criterion derives its name from this fact.
\end{remark}
%
%
% \begin{remark} \label{remark:finite_volume1}
% In the application section below we will set $\Gamma=\ZZ^d$\hm{In the strong disorder case we use it with $\Gamma = \Lambda_{k,L}$ - M.}, but for
% the proof it will be convenient to consider the general case.
% \end{remark}
%
The strategy for the proof is reminiscent of the one developed in
\cite{AizenmanFSH2001} and is aimed to derive a following bound on
the average Green function.
\begin{lemma} \label{lemma:iteration1}
Let $\Gamma \subset \ZZ^d$, $s \in (0, 1/3)$, $m > 0$, Assumption \ref{ass:monotone} be satisfied and $b_s (\lambda,L,\Lambda)$ be the constant from Theorem \ref{thm:exp_decay}. Then we have for all $x,y \in \Gamma$ with $y \not \in \Lambda_x$ and all $z \in \CC \setminus \RR$ with $\lvert z \rvert \leq m$ the bound
\begin{equation}\label{eq:protobound}
\mathbb{E} \bigl\{\lvert G_\Gamma (z;x,y)\rvert^{\frac{s}{2\lvert
\Theta \rvert}}\bigr\}\leq
\frac{b_s(\lambda, L,\Gamma)}{|\partial^{\rm o} \Lambda_x|} \sum_{r\in\partial^{\rm o}
\Lambda_x} \mathbb{E}\bigl\{ \lvert G_{\Gamma\setminus
\Lambda_x}(z;r,y)\rvert^{\frac{s}{2\lvert \Theta \rvert}} \bigr\} .
\end{equation}
\end{lemma}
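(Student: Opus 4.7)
The idea is to apply the iterated geometric resolvent formula \eqref{eq:secondorder} with $\Lambda = W_x$, take fractional moments, and then decouple in expectation using the $\Theta$-thickening built into $W_x$ and $\Lambda_x$.

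First I would observe that, under the depleted Hamiltonian $H_\Gamma^{W_x}$, the set $\Gamma \setminus W_x$ disconnects into an inner component containing $x$ and the outer component $\Gamma \setminus \Lambda_x$ containing $y$: this uses the hypothesis $y \notin \Lambda_x$ together with the construction of $W_x$ as the $\Theta$-thickened annulus along $B_x = \partial^{\rm i}\Lambda_{L,x}$, and the bound $L \geq \diam\Theta + 2$ which guarantees that the annulus is genuinely disconnecting. By Remark \ref{remark:depleted}, $G_\Gamma^{W_x}(z;x,y)=0$, and the same reasoning forces the first order term $[G_\Gamma^{W_x} T^{W_x} G_\Gamma^{W_x}](x,y)$ in \eqref{eq:secondorder} to vanish: any surviving summand requires both endpoints of $T^{W_x}$ to straddle $W_x$, but then one of the $G_\Gamma^{W_x}$-factors connects sites in distinct components and is itself zero. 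Only the second order term remains, and identifying $G_\Gamma^{W_x}(x,u_1)=G_{\Gamma\setminus W_x}(x,u_1)$ and $G_\Gamma^{W_x}(v_2,y)=G_{\Gamma\setminus\Lambda_x}(v_2,y)$ via the block-diagonal structure of the depleted resolvent, one obtains
\begin{equation*}
 G_\Gamma(z;x,y) = \sum_{u_1,v_1,u_2,v_2} G_{\Gamma\setminus W_x}(z;x,u_1)\, G_\Gamma(z;v_1,u_2)\, G_{\Gamma\setminus\Lambda_x}(z;v_2,y),
\end{equation*}
where the sum runs over $u_1$ in the inner component with $v_1 \in W_x$ and $\lvert u_1-v_1\rvert_1=1$, and over $u_2 \in W_x$ with $v_2 \in \Gamma \setminus \Lambda_x$ and $\lvert u_2-v_2\rvert_1=1$.

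Next, setting $\tau := s/(2\lvert\Theta\rvert) \in (0,1)$ and using $\lvert a+b\rvert^\tau \leq \lvert a\rvert^\tau + \lvert b\rvert^\tau$ pushes the fractional power inside the sum. The central step is then to decouple the three resulting Green function factors in expectation. For each admissible quadruple I would choose $b_{v_1}, b_{u_2} \in B_x$ with $v_1 \in \Theta_{b_{v_1}}$ and $u_2 \in \Theta_{b_{u_2}}$ (possible because $\hat W_x = \bigcup_{b\in B_x}\Theta_b$), and set $N' := \{b_{v_1},b_{u_2}\}\cup\cN(b_{v_1})\cup\cN(b_{u_2})$. The $\Theta$-thickenings in $W_x$ and $\Lambda_x$ together with $L \geq \diam\Theta + 2$ guarantee that the random variables supporting the inner factor $G_{\Gamma\setminus W_x}(x,u_1)$ and the outer factor $G_{\Gamma\setminus\Lambda_x}(v_2,y)$ are pairwise disjoint and both disjoint from $\{\omega_k : k \in N'\}$. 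Conditioning on $\{\omega_k : k \notin N'\}$, applying Lemma~\ref{lemma:bounded}(a) to the middle factor, and then invoking the independence of inner and outer yields
\begin{equation*}
\EE\bigl\{\lvert G_{\Gamma\setminus W_x}(x,u_1)\rvert^\tau \lvert G_\Gamma(v_1,u_2)\rvert^\tau \lvert G_{\Gamma\setminus\Lambda_x}(v_2,y)\rvert^\tau\bigr\} \leq C_s\Xi_s(\lambda)\, \EE\bigl\{\lvert G_{\Gamma\setminus W_x}(x,u_1)\rvert^\tau\bigr\}\EE\bigl\{\lvert G_{\Gamma\setminus\Lambda_x}(v_2,y)\rvert^\tau\bigr\}.
\end{equation*}

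Finally I would sum the constrained indices. For each $u_1$ on the inner side of $\partial^{\rm o} W_x$ there are at most $2d$ admissible $v_1$, and extending the sum to all of $\partial^{\rm o} W_x$ adds only zero contributions; symmetrically the admissible $v_2$ lie in $\partial^{\rm o}\Lambda_x$ with at most $2d$ choices of $u_2$ each. Multiplying and dividing by $\lvert\partial^{\rm o}\Lambda_x\rvert$ and absorbing the resulting combinatorial and geometric prefactors, together with an additional $\lambda^{-s/\lvert\Theta\rvert}$ extracted from a sharper application of the a-priori bound at the two ends of the annulus, into the single constant $B_s$, identifies the right-hand side with $\frac{b_s(\lambda,L,\Gamma)}{\lvert\partial^{\rm o}\Lambda_x\rvert}\sum_{r\in\partial^{\rm o}\Lambda_x}\EE\bigl\{\lvert G_{\Gamma\setminus\Lambda_x}(z;r,y)\rvert^\tau\bigr\}$, which is the claimed bound \eqref{eq:protobound}. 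The main obstacle is the decoupling in expectation: one must check rigorously that the $\Theta$-thickenings built into $W_x$ and $\Lambda_x$ really partition the $\omega$-dependence into three pairwise-disjoint pieces for admissible choices of $b_{v_1},b_{u_2}$, and then track the prefactors carefully enough to recognise them in the form $B_s L^{3(d-1)}\Xi_s(\lambda)/\lambda^{s/\lvert\Theta\rvert}$ with $B_s$ independent of $L$ and $\lambda$.
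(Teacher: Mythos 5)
Your overall strategy is the right one (geometric resolvent expansion around the annulus built from $B_x$, decoupling into an inner, a middle, and an outer piece, and a-priori bounds on the middle), and the independence/disjointness bookkeeping you flag as the ``main obstacle'' is in fact not where the difficulty lies. The genuine gap is the appearance of the factor $\lambda^{-s/\lvert\Theta\rvert}=\lambda^{-2s/(2\lvert\Theta\rvert)}$ in $b_s$, which you invoke as ``an additional $\lambda^{-s/\lvert\Theta\rvert}$ extracted from a sharper application of the a-priori bound at the two ends of the annulus'' without saying what that sharper application is. There is no such sharpening available in your single-expansion setup: you apply \eqref{eq:secondorder} once with $\Lambda=W_x$, so the only Green function the $B_x^+$-average sees is the full $G_\Gamma(z;v_1,u_2)$, and Lemma~\ref{lemma:bounded}(a) applied to it gives exactly $C_s\Xi_s(\lambda)$ and nothing more. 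Lemma~\ref{lemma:bounded}(b), which is the source of the extra powers of $\lambda^{-1}$, cannot be applied to $G_\Gamma(z;v_1,u_2)$ because its hypothesis $\Theta_{b}\cap\Gamma\subset\partial^{\rm i}\Theta_{b}$ fails when $\Gamma$ is the whole ambient set.

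The paper's proof resolves this by expanding \emph{twice}: first \eqref{eq:secondorder} with $\Lambda=\hat W_x$ (the union of the $\Theta_b$, $b\in B_x$, without the extra layer), which produces the three factors $G_\Gamma^{\hat W_x}(x,u)\,G_\Gamma(u',v)\,G_\Gamma^{\hat W_x}(v',y)$; then, because the end factors still depend on $\{\omega_k\}_{k\in B_x^+}$, a second application of \eqref{eq:firstorder} with $\Lambda=W_x=\hat W_x^+\cap\Gamma$ peels off genuinely independent factors $G_\Gamma^{W_x}(x,w)$ and $G_\Gamma^{W_x}(r',y)$, leaving a \emph{product of three} $B_x^+$-dependent Green functions in the middle: $G_\Gamma^{\hat W_x}(w',u)\,G_\Gamma(u',v)\,G_\Gamma^{\hat W_x}(v',r)$. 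The two depleted ones are now Green functions on $\Gamma\setminus\hat W_x$, whose endpoints lie in $\Theta_b\cap(\Gamma\setminus\hat W_x)\subset\partial^{\rm i}\Theta_b$ for suitable $b\in\partial^{\rm o}B_x$, so Lemma~\ref{lemma:bounded}(b) applies and each yields a factor $D_t\lambda^{-t}$. A H\"older split of the triple product at exponent $3s/(2\lvert\Theta\rvert)$ (this is where the restriction $s<1/3$ enters, another sign your argument is structurally off) combines these with Lemma~\ref{lemma:bounded}(a) on $G_\Gamma(u',v)$ to give exactly $E_s\,\lambda^{-2s/(2\lvert\Theta\rvert)}\,\Xi_s(\lambda)$, producing the $\lambda^{-2s/(2\lvert\Theta\rvert)}$ in $b_s$. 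Without this two-step mechanism your proof establishes at best a weaker inequality with $\lambda^{-2s/(2\lvert\Theta\rvert)}$ removed from $b_s$, which is not the statement to be proved. As a lesser point, your choice ``$b_{v_1},b_{u_2}\in B_x$ with $v_1\in\Theta_{b_{v_1}}$'' can also fail, since $v_1\in\partial^{\rm i}W_x$ need not lie in $\hat W_x=\bigcup_{b\in B_x}\Theta_b$; this is fixable by allowing $b_{v_1}\in B_x^+$, but it shows the independence bookkeeping also needs more care than you allotted it.
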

\begin{remark}
Equation~\eqref{eq:protobound} with $b_s(\lambda, L,\Gamma) < b < 1$ is akin to
the statement that the expectation $\EE \{ |G_\Gamma(z;x,\cdot)|^s
\}$ is a strictly subharmonic function, and thus, since it is also
uniformly bounded by the a-priori bound from
Lemma~\ref{lemma:bounded}, it decays exponentially. Indeed, since
the sum is normalized by the prefactor $1/\lvert \partial^{\rm o}
\Lambda_x \rvert$, Ineq.~\eqref{eq:protobound} permits to improve
the a-priori bound by the factor $b_s(\lambda, L,\Gamma) <b$. Furthermore,
the inequality may be iterated, each iteration resulting in an
additional factor of $b_s(\lambda, L,\Gamma)$. Also note that each
iteration step brings in Green functions that correspond to modified
domains.
\end{remark}
The finite volume criterion is a direct corollary of Lemma~\ref{lemma:iteration1}:
\begin{proof}[Proof of Theorem \ref{thm:exp_decay}]
Inequality~\eqref{eq:protobound} can be iterated as long as
the resulting sequences ($x, r^{(1)}, \ldots, r^{(n)}$) do not get
closer to $y$ than the distance $\tilde L = L + \diam \Theta + 2$.
\par
If $\lvert x-y \rvert_\infty \geq \tilde L$, we iterate Ineq.~\eqref{eq:protobound}
exactly $\lfloor \lvert x-y \rvert_\infty/ \tilde L \rfloor$ times, use the a-priori bound from
Lemma~\ref{lemma:bounded} and obtain
\begin{equation*}
\EE \Bigl \{ \bigl| G_\Gamma(z;x,y) \bigr|^{\frac{s}{2\lvert \Theta \rvert}} \Bigr\}
\leq
% C_s \Xi_s (\lambda) \cdot b^{\bigl\lfloor \frac{|x-y|_\infty}{\tilde L} \bigr\rfloor}
% =
C_s \Xi_s (\lambda) \cdot b^{\textstyle \lfloor \lvert x-y \rvert_\infty/ \tilde L \rfloor}
\leq
\frac{C_s \Xi_s (\lambda)} {b} {\rm e}^{- \mu |x-y|_\infty} , \label{eq:thm1optimalbound}
\end{equation*}
with $\mu = \lvert \ln b \rvert / \tilde L$. If
$\lvert x-y \rvert_\infty < \tilde L$, we use
Lemma~\ref{lemma:bounded} and see that
\[
 \EE \Bigl \{ \bigl| G_\Gamma(z;x,y) \bigr|^{s/(2\lvert \Theta \rvert)} \Bigr\}
  \leq C_s \Xi_s (\lambda) \leq \frac{C_s \Xi_s (\lambda)}{b} {\rm e}^{- \mu |x-y|_\infty} . \qedhere
\]
\end{proof}
To facilitate the proof of Lemma \ref{lemma:iteration1} we introduce
some extra notation first. Namely, for a set $\Lambda \subset
\ZZ^d$, we define the bond-boundary $\partial^{\rm B} \Lambda$ of
$\Lambda$ as
\[
\partial^{\rm B} \Lambda = \left\{(u,u') \in \ZZ^d \times \ZZ^d :
u\in \Lambda,\ u'\in \ZZ^d \setminus \Lambda,\ \text{and} \ \lvert
u-u' \rvert_1 = 1\right\}\,.
\]
\begin{proof}[Proof of Lemma~\ref{lemma:iteration1}]
Fix $x,y \in \Gamma$ with $y \not \in \Lambda_x$ and set $n = 2\lvert \Theta \rvert$.
It follows from our definition, that the randomness of $H_\Gamma$ at sites $\partial^{\rm o} \hat W_x \cap \Gamma$ does not depend on the random variables $\omega_b$ for any $b\in B_x$, and depends {\it monotonically} on the random variables $\omega_k$ for $k\in \partial^{\rm o} B_x$ (by Assumption \ref{ass:monotone}). A similar statement holds for the randomness at sites $\partial^{\rm o} W_x \cap \Gamma$. We also note that $x,y \not \in W_x$ by our definition of $L$ and since $0 \in \Theta$.
We now choose $\Lambda = \hat W_x$ in Eq.~\eqref{eq:secondorder} and compute the Green function at $(x,y)$:
\begin{multline*}
G_{\Gamma}(z;x,y) =  G_{\Gamma}^{\hat W_x}(z;x,y) + \langle \delta_x , G_{\Gamma}^{\hat W_x} (z) T_{\Gamma}^{\hat W_x} G_{\Gamma}^{\hat W_x} (z) \delta_y \rangle \\
+ \langle \delta_x , G_{\Gamma}^{\hat W_x} (z) T_{\Gamma}^{\hat W_x} G_{\Gamma} (z) T_{\Gamma}^{\hat W_x} G_{\Gamma}^{\hat W_x} \delta_y \rangle .
\label{eq:secondordermatelts}
\end{multline*}
Using Remark \ref{remark:depleted} one can easily check that the first two contributions vanish, thus
\begin{equation}
G_{\Gamma}(z;x,y) = \sum_{\genfrac{}{}{0pt}{2}{(u',u) \in \partial^{\rm B} \hat W_x}{(v,v') \in  \partial^{\rm B} \hat W_x}}
    G_{\Gamma}^{\hat W_x}(z;x,u) G_\Gamma(z;u',v) G_{\Gamma}^{\hat W_x}(z;v',y) .
\label{eq:diagramatic}
\end{equation}
\begin{figure}[t]
\begin{tikzpicture}[scale=0.3]
\usetikzlibrary{patterns}
\draw[very thick] (10,10) rectangle (-10,-10);
\draw[very thin] (0,12)--(12,12)--(12,-4)--(14,-4)--(14,14)--(0,14)--(0,16)--(16,16)--(16,-6)--(12,-6)--(12,-12)--(0,-12)--(0,-8)--(8,-8)--(8,-6)--(0,-6)--(0,-4)--(8,-4)--(8,8)--(0,8)--(0,12);
\filldraw[very thin, opacity=0.07] (0,12)--(12,12)--(12,-4)--(14,-4)--(14,14)--(0,14)--(0,16)--(16,16)--(16,-6)--(12,-6)--(12,-12)--(0,-12)--(0,-8)--(8,-8)--(8,-6)--(0,-6)--(0,-4)--(8,-4)--(8,8)--(0,8)--(0,12);
\draw[dotted] (0,-10)--(0,15);
\filldraw (0,0) circle (5pt); \draw (1,0) node {$x$};
\filldraw (5,7.3) circle (5pt); \draw (3.9,7.3) node {$u$};
\filldraw (5,8.7) circle (5pt); \draw (4,8.7) node {$u'$};
\filldraw (15.3,4) circle (5pt); \draw (15.3,2.85) node {$v$};
\filldraw (16.7,4) circle (5pt); \draw (16.7,3) node {$v'$};
\filldraw (22,8) circle (5pt); \draw (23,8) node {$y$};
\draw (0,0)--(5,7.3);
\draw[very thin, opacity=0.5] (5,7.3)--(5,8.7);
\draw[very thick] (5,8.7)--(15.3,4);
\draw[very thin, opacity=0.5] (15.3,4)--(16.7,4);
\draw (16.7,4)--(22,8);
\end{tikzpicture}
\caption{Illustration of the geometric setting and Eq. \eqref{eq:diagramatic}
in the case $d = 2$, $\Gamma = \{x \in \mathbb{Z}^2 : x_1 \geq 0\}$, $x = 0$
and $\Theta = ([-2,2]^2 \cup [4,6]^2)\cap \mathbb{Z}^2$.
The light grey region is the set $\hat W_x$ and the black square is the sphere $B_x$.}
\label{fig:geometric2}
\end{figure}
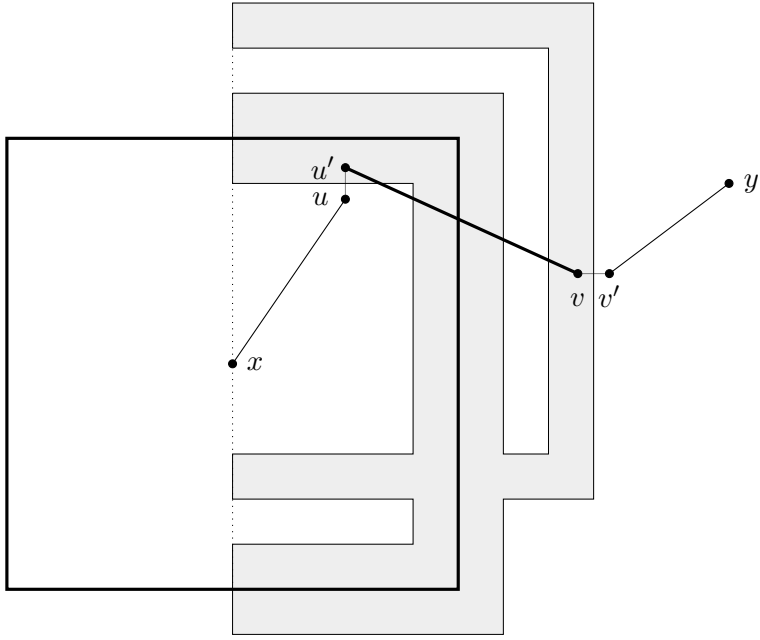

See Fig.~\ref{fig:geometric2} for the geometric setting and an illustration of Eq. \eqref{eq:diagramatic}.
Note that $u, v' \in \partial^{\rm o} \hat W_x$, while $u', v \in
\hat W_x$. By  construction, the set $\Gamma \setminus \hat W_x$
decomposes into at least two components which are not connected:
One of them contains $x$, another $y$. More than two components may occur if $\Gamma$ or $\Theta$ are not connected, see again Fig. \ref{fig:geometric2}. By Remark
\ref{remark:depleted}, the summands in Eq. \eqref{eq:diagramatic}
are only non-zero if $u$ is in the $x$-component of $\Gamma
\setminus \hat W_x$ and $v'$ is in the $y$-component of $\Gamma
\setminus \hat W_x$. This leads us to the definition of a subset of
$\partial^{\rm B} \hat W_x$. We say that $(u,u') \in
\partial_x^{\rm B} \hat W_x$ if $(u,u') \in \partial^{\rm B} \hat
W_x$ and $u'$ is in the $x$-component of $\Gamma \setminus \hat
W_x$. For $\partial_y^{\rm B} \hat W_x$, $\partial_x^{\rm B} W_x$
and $\partial_y^{\rm B} W_x$ we use the analogous definitions.
\par
To get the estimate \eqref{eq:protobound} we want to first average
the fractional moment of the Green function with respect to random
variables $\{\omega_k\}_{k\in B_x^+}$. Note that Lemma~\ref{lemma:bounded} part (a) then guarantees that
\begin{equation} \label{eq:apriori}
\mathbb{E}_{B_x^+} \bigl\{\lvert G_\Gamma (z;u',v)\rvert^{s/n}\bigr\} \leq C_s \Xi_s (\lambda) .
\end{equation}
However, although the first and the last Green functions in \eqref{eq:diagramatic} do not depend on the random variables $\{\omega_k\}_{k\in B_x}$, they still depend on the random variables $\{\omega_k\}_{k\in B_x^+}$. To factor out this dependence, we apply \eqref{eq:firstorder} again, this time with $\Lambda = W_x$. Then we have for $u$, $v'$ as
above the equalities
\begin{align*}
G_{\Gamma}^{\hat W_x} (z;x,u) &= \sum_{(w',w) \in \partial_x^{\rm B} W_x}  G_{\Gamma}^{W_x}(z;x,w)G_{\Gamma}^{\hat W_x}(z;w',u)
\intertext{and}
G_{\Gamma}^{\hat W_x}(z;v',y)  &= \sum_{(r,r') \in \partial_y^{\rm B} W_x} G_{\Gamma}^{\hat W_x}(z;v',r) G_{\Gamma}^{W_x}(z;r',y) .
\end{align*}
Notice that for $w$ and $r'$ as above, the Green functions $G_{\Gamma}^{W_x}(z;x,w)$ and $G_{\Gamma}^{W_x}(z;r',y)$ are independent of $\{\omega_k\}_{k \in B_x^+}$. Putting everything together, we obtain
\begin{multline}
\mathbb{E}_{B_x^+} \bigl\{\lvert G_\Gamma (z;x,y)\rvert^{s/n}\bigr\} \leq
\sum \lvert G_{\Gamma}^{W_x}(z;x,w)\rvert^{s/n} \, \lvert G_{\Gamma}^{W_x}(z;r',y)\rvert^{s/n} \\
\times \mathbb{E}_{B_x^+} \bigl\{\lvert G_{\Gamma}^{\hat W_x}(z;w',u) G_\Gamma (z;u',v) G_{\Gamma}^{\hat W_x}(z;v',r) \rvert^{s/n}\bigr\} ,
\label{eq:optimalbound}
\end{multline}
where  the sum on the right hand side runs over the bonds
\[
(u',u) \in \partial_x^{\rm B} \hat W_x, \ (v,v') \in  \partial_y^{\rm B} \hat W_x, \
(r,r') \in \partial_y^{\rm B} W_x,\ (w',w) \in \partial_x^{\rm B} W_x .
\]
To estimate the expectation of the product on the right hand side we
note that by H\"older inequality it suffices to show that each of
the Green functions raised to the fractional power $3s/n$ and averaged with respect to $B_x^+$ is bounded in an appropriate way. For $\EE_{B_x^+} (|G_\Gamma (z;u',v)|^{3s/n})$ this follows from the a-priory bound \eqref{eq:apriori}.
For the remaining two Green functions it seems at the first glance that we have a problem,
since only we average over $\{\omega_k\}_{k\in B_x^+}$, and Lemma~\ref{lemma:bounded}
in this context requires averaging with respect to $\{\omega_k\}_{k\in
B_x^{++}}$. What comes to our rescue is Assumption \ref{ass:monotone},
which ensures that the dependence on $\{\omega_k\}_{k\in B_x^+}$ is
actually monotone for these Green functions, and the standard
argument of \cite{AizenmanENSS2006} for the monotone case establishes
the required bounds. More precisely, we argue as follows. Since $w',u \in \Gamma \setminus \hat W_x$, we have due to Remark~\ref{remark:depleted} that
\[
G_{\Gamma}^{\hat W_x}(z;w',u) = G_{\Gamma\setminus\hat W_x}(z;w',u).
\]
Notice that $w',u \in \partial^{\rm o} \hat W_x$. Hence there are $b_{1},b_{2} \in \partial^{\rm o} B_x$, such that $w' \in \Theta_{b_1} \cap (\Gamma\setminus\hat W_x) \subset \partial^{\rm i} \Theta_{b_1}$ and $u \in \Theta_{b_2} \cap (\Gamma\setminus\hat W_x) \subset \partial^{\rm i} \Theta_{b_2}$. For the Green function at $(v',r)$ there exist $b_3,b_4 \in \partial^{\rm o} B_x$ with analoguous properties. Thus we may apply Lemma~\ref{lemma:bounded} part (b) and obtain for all $t \in (0,1)$
\begin{equation*}
 \mathbb{E}_{B_x^+} \bigl\{\lvert G_{\Gamma}^{\hat W_x}(z;w',u)
 \rvert^{t}\bigr\} \leq  D_t \lambda^{-t} \quad \text{and} \quad \mathbb{E}_{B_x^+}
 \bigl\{\lvert G_{\Gamma}^{\hat W_x}(z;v',r)\rvert^{t}\bigr\} \leq D_t\lambda^{-t} .\label{eq:mon_est}
\end{equation*}
The net result is a bound
\[
 \mathbb{E}_{B_x^+} \bigl\{\lvert G_{\Gamma}^{\hat W_x}(z;w',u) G_\Gamma (z;u',v) G_{\Gamma}^{\hat W_x}(z;v',r) \rvert^{s/n}\bigr\} \leq E_s \lambda^{-\frac{2s}{n}} \Xi_s (\lambda)
\]
where $E_s = \max\{D_{3s/n} , C_{3s}\}$. Substitution into Ineq. \eqref{eq:optimalbound} leads to the estimate
\begin{multline} \label{eq:optimalbound1}
 \mathbb{E}_{B_x^+} \bigl\{\lvert G_\Gamma (z;x,y)\rvert^{s/n}\bigr\} \leq E_s \lambda^{-\frac{2s}{n}} \Xi_s (\lambda) \lvert \partial_x^{\rm B} \hat W_x \rvert \lvert \partial_y^{\rm B} \hat W_x \rvert \\
\times \sum_{\genfrac{}{}{0pt}{2}{(r,r') \in \partial_y^{\rm B} W_x}{(w',w) \in \partial_x^{\rm B} W_x}} \lvert G_{\Gamma}^{W_x}(z;x,w)\rvert^{s/n} \, \lvert G_{\Gamma}^{W_x}(z;r',y)\rvert^{s/n} .
\end{multline}
Now we are in position to perform the expectation with respect to
the rest of random  variables. Note that the two remaining  Green
functions in \eqref{eq:optimalbound1} are stochastically independent. We take expectation in Ineq. \eqref{eq:optimalbound} and use Remark \ref{remark:depleted} to get
\[
\mathbb{E}\bigl\{\lvert G_\Gamma (z;x,y)\rvert^{s/n}\bigr\}
 \leq  \frac{E_s \tilde \Phi (\Theta, L)}{\lambda^{2s/n} \Xi_s^{-1} (\lambda)}  \cdot
 \sum_{(r,r')\in\partial_y^{\rm B}  W_x}
 \mathbb{E}\bigl
 \{ \lvert G_{\Gamma\setminus W_x}(z;r',y)\rvert^{s/n} \bigr\} \\[1ex]
\]
where
 \[
\tilde \Phi (\Theta, L) = \lvert \partial_x^{\rm B} \hat W_x \rvert
\lvert
\partial_y^{\rm B} \hat W_x \rvert \sum_{ (w',w) \in
\partial_x^{\rm B} W_x} \EE \bigl\{ \lvert G_{\Gamma \setminus W_x}(z;x,w)\rvert^{s/n} \bigr\} .
\]
Now we use the fact that each point of  $\partial^{\rm o} \Lambda_x$ shares the bond with at most $2d$ neighbors. Hence, if we set
\[
\Phi (\Theta, L) = 4d^2\,\lvert \partial_x^{\rm B} \hat W_x \rvert
\lvert
\partial_y^{\rm B} \hat W_x \rvert \lvert\partial^{\rm o} \Lambda_x\rvert\,
\sum_{w \in\partial^{\rm o} W_x}
\mathbb{E}\bigl \{ \lvert G_{\Gamma\setminus W_x}(z;x,w)
\rvert^{s/n} \bigr\} ,
\]
we have the estimate
\begin{equation*}
\mathbb{E}\bigl\{\lvert G_\Omega (z;x,y)\rvert^{s/n}\bigr\}
 \leq  \frac{E_s \Phi (\Theta, L) }{ \lambda^{2s/n} \Xi_s^{-1} (\lambda)}
\frac{1}{|\partial^{\rm o} \Lambda_x|} \sum_{r \in\partial^{\rm o}
\Lambda_x} \mathbb{E}\bigl \{ \lvert G_{\Gamma\setminus
\Lambda_x}(z;r,y) \rvert^{s/n} \bigr\} .
\end{equation*}
Finally, we can bound $\lvert \partial_x^{\rm B} \hat W_x \rvert $,
$\lvert\partial_y^{\rm B} \hat W_x \rvert$ and $\lvert\partial^{\rm o} \Lambda_x\rvert$
by $C_{d,\Theta} L^{d-1}$ with a constant $C_{d,\Theta}$ depending only on $d$ and $\Theta$.
Lemma \ref{lemma:iteration1} now follows by putting everything together.
\end{proof}
\begin{proof}[Proof of Theorem \ref{thm:result1}]
Notice that by Assumption \ref{ass:monotone}
the random potential is uniformly bounded.
Thus $K := \sup_{\omega \in \Omega} \lVert H_\omega \rVert <\infty$.
Choose $M \geq 1 $ and $ m= K+M$. For $|z| \leq m$ and each $b \in (0,1)$
we infer from the a-priori bound (Lemma \ref{lemma:bounded})
that condition \eqref{eq:fin_cond} from Theorem~\ref{thm:exp_decay} is satisfied if
$\lambda$ sufficiently large.

For $|z| \geq m$ we have $\dist(z,\sigma (H_{\Gamma}))\geq M\geq 1$
for all $\omega$. A Combes-Thomas argument (see \cite{CombesT-73},
or Section 11.2 in \cite{Kirsch-08} for an explicit calculation in the discrete setting)
gives the bound
\[
 |G_\Gamma(z; x,y)| \le \frac{2}{M} \euler^{-\gamma |x-y|_1}
\]
for $|z| \geq m$ and arbitrary $x,y \in \Gamma$, where $\gamma :=\min \big(1, \ln \frac{M}{4d}\big)$. Now taking first the fractional power and then the mathematical
expectation gives the desired estimate on
$ \mathbb{E} \bigl\{\lvert G_\Gamma (z;x,y)\rvert^{s/(2\lvert \Theta \rvert)}\bigr\}$.
This finishes the proof.
\end{proof}
%
%
%
%----------------------------------------------
%
%%%%%%%%%%%%%%%%%%%%%%%%%%%%%%%%%%%%%%%%%%%%%%%
% L O C A L I Z A T I O N
%%%%%%%%%%%%%%%%%%%%%%%%%%%%%%%%%%%%%%%%%%%%%%%
%
%----------------------------------------------
%
%
%
\section{Exponential localization and application to the strong disorder regime} \label{sec:loc}
In this section we prove exponential localization in the case of
sufficiently large disorder, i.\,e. that the continuous spectrum of
$H_\omega$ is empty almost surely and that the eigenfunctions
corresponding to the eigenvalues of $H_\omega$ decay exponentially
at infinity.

The existing proofs of localization via the fractional moment method
use either the Simon Wolff criterion, see e.\,g.
\cite{SimonW1986,AizenmanM1993,AizenmanFSH2001}, or the
RAGE-Theorem, see e.\,g.
\cite{Aizenman1994,Graf1994,AizenmanENSS2006}. Neither dynamical nor
spectral localization can be directly inferred from the behavior of
the Green function using the existent methods for our model. The
reason is that the random variables $V_\omega (x)$, $x \in \ZZ^d$,
are not independent, while the dependence of $H_\omega$ on the
i.\,i.\,d. random variables $\omega_k$, $k \in \ZZ^d$, is not
monotone.
\par
However, for the discrete alloy-type model it is possible to show localization using the multiscale analysis. The two ingredients of the multiscale analysis are the initial length scale estimate and the Wegner estimate, compare assumptions (P1) and (P2) of \cite{DreifusK1989}. The initial length scale estimate is implied by the exponential decay of an averaged fractional power of Green function, i.\,e. Theorem \ref{thm:exp_decay}, using Chebyshev's inequality. A Wegner
estimate for the models considered here was established in \cite{Veselic2010}. Thus a variant of the multiscale analysis of \cite{DreifusK1989} yields pure point spectrum with exponential decaying eigenfunctions for almost all configurations of the randomness. We say a variant, since in our case the potential values are independent only for lattice sites having a minimal distance. It has been implemented in detail in the paper \cite{GerminetK2001} for random Schr\"odinger operators in the continuum, and holds similarly for discrete models. See also \cite{Krueger2010} for a proof of localization via MSA for a class of models including ours.
\par
In \cite{ElgartTV2010} we have established a new variant for concluding exponential localization from bounds on averaged fractional powers of Green function without using the multiscale
analysis. This is done by showing that fractional moment bounds
imply the ``typical output'' of the multiscale analysis, i.\,e. the
hypothesis of Theorem 2.3 in \cite{DreifusK1989}. Then one can
conclude localization using existent methods. However, the
assertions in \cite{ElgartTV2010} are tailored to the
one-dimensional discrete alloy-type model. In this section we
present the multidimensional extension of these results. Although
the arguments are similar to the ones in \cite{ElgartTV2010}, we
will give all the proofs for completeness.
\par
For $L >0$ and $x \in \ZZ^d$ we denote by $\Lambda_{L,x} = \{y \in \ZZ^d : \lvert x-y \rvert_\infty \leq L\}$ the cube of side length $2L+1$. Let further $m > 0$ and $E \in
\RR$. A cube $\Lambda_{L,x}$ is called \emph{$(m,E)$-regular} (for a fixed
potential), if $E \not \in \sigma (H_{\Lambda_{L,x}})$ and
\[
 \sup_{w \in \partial^{\rm i} \Lambda_{L,x}} \lvert G_{\Lambda_{L,x}} (E ; x,w) \rvert
\leq \euler^{-m L} .
\]
Otherwise we say that $\Lambda_{L,x}$ is \emph{$(m , E)$-singular}. The next
Proposition states that certain bounds on averaged fractional moments of Green function imply the hypothesis of Theorem 2.3 in \cite{DreifusK1989} (without applying the induction step of the multiscale analysis).
\begin{proposition} \label{prop:replace-msa}
Let $I \subset \RR$ be a bounded interval and $s \in (0,1)$. Assume the following two statements:
\begin{enumerate}[(i)]
\item There are constants $C,\mu \in (0,\infty)$ and $L_0 \in \NN_0$
such that
\[\EE \bigl\{ \lvert G_{\Lambda_{L,k}} (E;x,y) \rvert^{s}
\bigr\}\ \leq \  C \euler^{-\mu \lvert x-y \rvert_\infty}\]
for all $k \in \ZZ^d$, $L \in \NN$, $x,y \in \Lambda_{L,k}$ with
$\lvert x-y \rvert_\infty \geq L_0$, and all $E \in I$.
\item There is a constant $C' \in (0,\infty)$ such that
\[\EE \bigl\{ \lvert G_{\Lambda_{L,k}} (E+\i \epsilon ;x,x) \rvert^{s}
\bigr\} \leq C'\]
for all $k \in \ZZ^d$, $L \in \NN$, $x \in
\Lambda_{L,k}$, $E \in I$ and all $\epsilon\in (0,1]$ .
\end{enumerate}
Then we have for all $L \geq \max\{ 8\ln (8)/\mu , L_0 , -(8/5\mu)\ln (\lvert I \rvert / 2)\}$ and all $x,y \in \ZZ^d$ with $\lvert x-y\rvert_\infty \geq 2L+\diam \Theta + 1$ that
\begin{multline*}
 \PP \{\forall \, E \in I \text{ either $\Lambda_{L,x}$ or $\Lambda_{L,y}$ is
$(\mu/8,E)$-regular} \}  \\ \geq 1- 8 \lvert \Lambda_{L,x} \rvert(C\lvert I \rvert  + 4C'\lvert \Lambda_{L,x} \rvert / \pi ) \euler^{-\mu sL /8} .
\end{multline*}
\end{proposition}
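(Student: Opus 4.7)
The plan is to bound the probability of the complementary event—that there exists $E \in I$ at which both $\Lambda_{L,x}$ and $\Lambda_{L,y}$ are $(\mu/8,E)$-singular—by the quantity stated in the proposition. A crucial ingredient is the geometric hypothesis $|x-y|_\infty \geq 2L + \diam \Theta + 1$: it guarantees that the finite-volume Hamiltonians $H_{\Lambda_{L,x}}$ and $H_{\Lambda_{L,y}}$ depend on disjoint families of random variables $\{\omega_k\}$, so the events concerning the two boxes are independent. I will exploit this independence for the spectral part of the bound, which collapses a product of two small probabilities into a single exponential factor.

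For the matrix-element source of singularity, Markov's inequality applied to hypothesis (i) gives, for each real $E \in I$, each $k \in \{x,y\}$, and each $w \in \partial^{\rm i}\Lambda_{L,k}$ (for which $|k-w|_\infty = L$),
\[
\PP\bigl\{|G_{\Lambda_{L,k}}(E;k,w)| > \tfrac{1}{2}\euler^{-\mu L/8}\bigr\} \leq 2^s C\,\euler^{-7\mu s L/8}.
\]
To upgrade this pointwise-in-$E$ estimate into a uniform-in-$E$ statement, I pass to the complex-shifted Green function $G(E+\i\epsilon;k,w)$, which enjoys the Lipschitz bound $|\partial_E G(E+\i\epsilon;k,w)| \leq \epsilon^{-2}$: a non-empty superlevel set $\{E \in I : |G(E+\i\epsilon; k, w)| > \tfrac12 \euler^{-\mu L/8}\}$ must then have Lebesgue measure at least $\tfrac12 \epsilon^2 \euler^{-\mu L/8}$, and a Markov bound on its expected measure (after comparing $G(E)$ with $G(E+\i\epsilon)$ on the event that $E$ is away from $\sigma(H_{\Lambda_{L,k}})$) produces the $C|I|$ contribution. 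Hypothesis (ii), in the form $N_k(E,\eta) \leq 2\eta \, \Tr\,\im G_{\Lambda_{L,k}}(E+\i\eta)$ for the number of eigenvalues of $H_{\Lambda_{L,k}}$ in $(E-\eta, E+\eta)$, combined with the subadditivity of $t \mapsto t^s$ on $[0,\infty)$ and $|G|^s \geq (\im G)^s$, gives
\[
\EE\{N_k(E,\eta)^s\} \leq (2\eta)^s |\Lambda_{L,k}| C'.
\]
Independence of the two boxes multiplies the one-box near-eigenvalue probabilities, and integration in $E$ using the Poisson-kernel identity $\tfrac{1}{\pi}\int_\RR \eta/((E-t)^2+\eta^2)\,\drm t = 1$ produces the $4C'|\Lambda_{L,k}|^2/\pi$ term with its characteristic $1/\pi$ normalisation.

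The main obstacle is the interpolation step that couples hypothesis (i)—stated only at real $E$—to the complex-energy Green function needed both for the derivative bound and for the spectral estimate. On the event $\dist(E, \sigma(H_{\Lambda_{L,k}})) \geq \eta$, the spectral resolvent identity
\[
G(E;k,w) - G(E+\i\epsilon;k,w) = -\i\epsilon \sum_n \frac{\psi_n(k)\overline{\psi_n(w)}}{(E_n-E)(E_n-E-\i\epsilon)}
\]
combined with Cauchy--Schwarz gives $|G(E;k,w) - G(E+\i\epsilon;k,w)| \leq \epsilon/\eta^2$; the three small parameters $\epsilon$, $\eta$, and net spacing must therefore satisfy $\epsilon \leq \eta^2\euler^{-\mu L/8}/2$ in order that the transfer preserve the level $\euler^{-\mu L/8}$. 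It is precisely this balancing—with $\eta \sim \euler^{-\mu L/8}$ and $\epsilon$ chosen compatibly—that forces the three lower bounds on $L$ stated in the proposition: $L \geq 8(\ln 8)/\mu$ absorbs the logarithmic overhead from the Markov estimate, $L \geq L_0$ is the regime in which hypothesis (i) applies at distance $L$, and $L \geq -(8/(5\mu))\ln(|I|/2)$ marks the threshold at which the $C|I|$ Markov term begins to dominate the spectral $4C'|\Lambda|^2/\pi$ term.
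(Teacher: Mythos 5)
Your high-level decomposition — bound the complementary event by a matrix-element part controlled via hypothesis (i) plus a resonance part controlled via hypothesis (ii), exploiting independence of the two box Hamiltonians — matches the paper's strategy, and your constants even line up. But two of the key intermediate steps are not correctly carried out.

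First, the ``upgrade to uniform in $E$'' does not work the way you describe. You apply Markov to hypothesis (i) at each fixed real $E$, then propose to pass to $G(E+\i\epsilon)$ with the Lipschitz bound $\lvert\partial_E G(E+\i\epsilon)\rvert\leq\epsilon^{-2}$, and finally to compare $G(E)$ with $G(E+\i\epsilon)$ ``on the event that $E$ is away from $\sigma(H_{\Lambda_{L,k}})$.'' That last comparison requires the spectral-gap event to have been analyzed already, yet the spectral gap is exactly what the resonance part of your argument is trying to control — so the two halves are coupled and the estimate is circular as written. The paper avoids the complex shift here entirely: it introduces the auxiliary sets $\tilde\Delta_\omega^k=\{E\in I:\sup_w\lvert G_{\Lambda_{L,k}}(E;k,w)\rvert>\euler^{-\mu L/4}\}$ with the \emph{larger} threshold, and the event $\tilde B_k=\{\mathcal{L}\{\tilde\Delta_\omega^k\}>\euler^{-5\mu L/8}\}$, then controls $\PP\{\tilde B_k\}$ in one stroke by Fubini and Chebyshev: integrating hypothesis (i) over $E\in I$ and summing over $w\in\partial^{\rm i}\Lambda_{L,k}$ gives $\PP\{\tilde B_k\}<\lvert\Lambda_{L,k}\rvert\lvert I\rvert C\euler^{-\mu L/8}$. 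The Lipschitz (first-resolvent) argument is then used \emph{deterministically}, for fixed $\omega\notin\tilde B_k$, to show by contradiction that if $E\in\Delta_\omega^k$ and $\dist(E,\sigma(H_{\Lambda_{L,k}}))>2\euler^{-\mu L/8}$ then an interval around $E$ of length $4\euler^{-5\mu L/8}$ would lie inside $\tilde\Delta_\omega^k$ — too much measure. This cleanly decouples the matrix-element and spectral parts.

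Second, your interpretation of the constraint $L\geq-(8/(5\mu))\ln(\lvert I\rvert/2)$ is wrong: it is not a ``threshold where the Markov term dominates the Wegner term.'' It is needed so that $\lvert I\rvert\geq 2\euler^{-5\mu L/8}$, which guarantees that the interval produced in the contradiction argument just described actually has enough overlap with $I$ to violate $\omega\notin\tilde B_k$. Relatedly, for the resonance part you want a bound on $\EE\{\Tr P_{[a,b]}(H_{\Lambda_{L,x}})\}$ that is \emph{linear} in the eigenvalue count (so that a Chebyshev bound over the eigenvalues of the other box goes through), whereas you state a bound on $\EE\{N_k(E,\eta)^s\}$, which is not the right moment. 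The paper gets the correct Wegner estimate $\EE\{\Tr P_{[a,b]}\}\leq 4\pi^{-1}C'\lvert b-a\rvert^s\lvert\Lambda_{L,x}\rvert$ via the Stone-formula inequality and then optimizing $\epsilon=b-a$; you should do the same.
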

\begin{proof}
Set $n = \diam \Theta + 1$. Fix $L \in \NN$ with $L \geq \max\{8 \ln (8)/\mu , L_0\}$ and $x,y \in \ZZ^d$ such that $\lvert x-y \rvert_\infty \geq 2L+n$. For $\omega \in \Omega$ and $k \in
\{x,y\}$ we define the sets
\begin{align}
 \Delta_\omega^k &:= \{E \in I : \sup_{w \in \partial^{\rm i} \Lambda_{L,k}} \lvert
G_{\Lambda_{L,k}} (E ; k,w) \rvert > \euler^{-\mu L /8}\}, \nonumber \\
\tilde \Delta_\omega^k &:= \{E \in I : \sup_{w \in \partial^{\rm i} \Lambda_{L,k}}
\lvert G_{\Lambda_{L,k}} (E ; k,w) \rvert > \euler^{-\mu L/4 }\}, \nonumber \\
\text{and} \quad \tilde B_k &:= \{\omega \in \Omega : \mathcal{L} \{\tilde \Delta_\omega^k\} >
 \euler^{-5\mu L /8} \} . \label{eq:deltatilde}
\end{align}
For $\omega \in \tilde B_k$ we have
\begin{align*}
\sum_{w \in \partial^{\rm i} \Lambda_{L,k}}  \int_I \lvert G_{\Lambda_{L,k}} (E ; k,w) \rvert^{s/N} \drm E
& \geq
\int_I \sup_{w \in \partial^{\rm i} \Lambda_{L,k}} \lvert G_{\Lambda_{L,k}} (E ; k,w) \rvert^{s/N} \drm E \\[1ex]
& > \euler^{-5 \mu L / 8} \euler^{-\mu L s/ 4} > \euler^{-7 \mu L / 8}.
\end{align*}
Using $L \geq L_0$ and Hypothesis (i) of the assertion, we obtain
\begin{equation*}
 \PP \{\tilde B_k\} <  \lvert \Lambda_{L,k} \rvert \, \lvert I \rvert C \euler^{-\mu L /8} .
\end{equation*}
For $k \in \{x,y\}$ we denote by $\sigma (H_{\Lambda_{L,k}}) =
\{E_{\omega,k}^i\}_{i=1}^{\lvert \Lambda_{L,k} \rvert}$ the spectrum of
$H_{\Lambda_{L,k}}$. We claim that for $k \in \{x,y\}$,
\begin{equation} \label{eq:claim}
 \omega \in \Omega \setminus \tilde B_k \quad \Rightarrow
 \quad \Delta_\omega^k \subset \bigcup_{i=1}^{\lvert \Lambda_{L,k} \rvert}
 \bigl[E_{\omega,k}^i-\delta , E_{\omega,k}^i + \delta \bigr] =:
 I_{\omega,k}(\delta),
\end{equation}
where $\delta = 2\euler^{-\mu L / 8}$. Indeed, suppose that $E\in \Delta_\omega^k$ and $\dist\big(E,\sigma
(H_{\Lambda_{L,k}})\big)>\delta$. Then there exists $w \in \partial^{\rm i} \Lambda_{L,k}$ such
that $\lvert G_{\Lambda_{L,k}} (E;k,w) \rvert > \euler^{-\mu L / 8}$.
For any $E'$ with $\lvert E-E'\rvert \le 2\euler^{-5\mu L / 8}$ we have
$\delta -\lvert E-E'\rvert\ge \euler^{-\mu L / 8}\ge 2\euler^{-3\mu L / 8}  $ since $L > 8 \ln (8) / \mu$.
Moreover, the first resolvent identity and the estimate $\lVert (H-E)^{-1} \rVert \leq \dist
(E,\sigma (H))^{-1}$ for selfadjoint $H$ and $E \in \CC\setminus \sigma (H)$ implies
\begin{align*}
\lvert G_{\Lambda_{L,k}} (E ; k,w) - G_{\Lambda_{L,k}} (E' ; k,w)\rvert &\leq \ \lvert E-E'\rvert \cdot \lVert G_{\Lambda_{L,k}} (E)\rVert\cdot\lVert G_{\Lambda_{L,k}} (E') \rVert \\[1ex]
& \leq \frac{1}{2} \euler^{-\mu L / 8} ,
\end{align*}
 and hence
\[
 \lvert G_{\Lambda_{L,k}} (E' ; k,w)\rvert
\ > \ \frac{\euler^{-\mu L / 8}}{2} \geq \euler^{-\mu L / 4}
\]
for $L \geq 8 \ln (8) / \mu$.
We infer that $[E-2\euler^{-5\mu L / 8},E+2\euler^{-5\mu L / 8}]\cap I \subset  \tilde \Delta_\omega^k$ and conclude $\mathcal{L} \{\tilde \Delta_\omega^k\} \ge 2\euler^{-5\mu L / 8}$, since $\lvert I \rvert \geq 2 \euler^{-5\mu L / 8}$ by assumption.
This is however impossible if
$\omega\in \Omega \setminus \tilde B_k$ by \eqref{eq:deltatilde},
hence the claim \eqref{eq:claim} follows.

In the following step we use Hypothesis (ii) of the assertion to
deduce a Wegner-type estimate. Let $[a,b] \subset I$ with $0<b-a
\leq 1$. We denote by $P_{[a,b]} (H_{\Lambda_{L,x}})$ the spectral
projection corresponding to the interval $[a,b]$ and the operator
$H_{\Lambda_{L,x}}$. Since we have for any $\lambda \in \RR$ and
$0<\epsilon \leq b-a$
\[
 \arctan \left( \frac{\lambda - a}{\epsilon} \right) - \arctan \left( \frac{\lambda - b}{\epsilon} \right) \geq \frac{\pi}{4} \ \chi_{[a,b]}(\lambda) ,
\]
one obtains an inequality version of Stones formula:
\[
 \langle \delta_x , P_{[a,b]} (H_{\Lambda_{L,x}}) \delta_x \rangle
\leq \frac{4}{\pi} \int_{[a,b]} \im \left\{ G_{\Lambda_{L,x}} (E+ \i \epsilon ; x,x) \right\} \drm E \quad \forall \, \epsilon \in (0, b-a] .
\]
Using triangle inequality, $\lvert \im z\rvert \leq \lvert z\rvert$
for $z \in \CC$, Fubini's theorem, $\lvert G_{\Lambda_{L,x}} (E+\i
\epsilon ; x,x) \rvert^{1-s} \leq \dist (\sigma(H_{\Lambda_{L,x}}) ,
E+i \epsilon)^{s-1} \leq \epsilon^{s-1}$ and hypothesis  (ii) we
obtain for all $\epsilon \in (0,b-a]$
\begin{align*}
\EE \bigl\{ \Tr P_{[a,b]}(H_{\Lambda_{L,x}}) \bigr\} & \leq \EE \Bigl\{ \sum_{x \in \Lambda_{L,x}} \frac{4}{\pi} \int_{[a,b]} \im \left\{ G_{\Lambda_{L,x}} (E+\i \epsilon ; x,x) \right\} \drm E  \Bigr\} \\
&  \leq  \frac{\epsilon^{s-1}}{\pi / 4}  \sum_{x \in \Lambda_{L,x}} \int_{[a,b]} \EE \Bigl\{ \bigl|  G_{\Lambda_{L,x}} (E+\i \epsilon ; x,x)  \bigr|^{s} \Bigr\} \drm E   \\
& \leq 4\pi^{-1}\epsilon^{s-1}  \lvert \Lambda_{L,x} \rvert \, \lvert b-a \rvert C' .
\end{align*}
We minimize the right hand side by choosing $\epsilon = b-a$ and obtain for all $[a,b] \subset I$ with $0 < b-a \leq 1$ the Wegner estimate
\begin{equation} \label{eq:wegner}
\EE \bigl\{ \Tr P_{[a,b]}(H_{\Lambda_{L,x}}) \bigr\}
\leq 4\pi^{-1} C' \lvert b-a \rvert^{s}  \lvert \Lambda_{L,x} \rvert
=: C_{\rm W} \lvert b-a \rvert^{s}  \lvert \Lambda_{L,x} \rvert .
\end{equation}
Now we want to estimate the probability of the event $B_{\rm res} := \{\omega \in \Omega :  I \cap I_{\omega,x}(\delta) \cap I_{\omega,y}(\delta) \not = \emptyset \}$
that there are ``resonant'' energies for the two box Hamiltonians $H_{\Lambda_{L,x}}$ and $H_{\Lambda_{L,y}}$.
For this purpose we denote by $\Lambda_{L,x}'$ the set of all lattice sites $k \in \ZZ^d$
whose coupling constant $\omega_k$ influences the potential in $\Lambda_{L,x}$,
i.\,e. $\Lambda_{L,x}' = \cup_{x \in \Lambda_{L,x}} \{k \in \ZZ^d : u(x-k) \not = 0)\}$.
Notice that the expectation in Ineq. \eqref{eq:wegner} may therefore be replaced by $\EE_{\Lambda_{L,x}'}$.
Moreover, since $\lvert x-y \rvert_\infty \geq 2L + n$, the operator $H_{\Lambda_{L,y}}$ and hence the interval $I_{\omega , y} (\delta)$ is independent of $\omega_k$, $k \in \Lambda_{L,x}'$.
We use the product structure of the measure $\PP$, Chebyshev's inequality, and estimate \eqref{eq:wegner} to obtain
\begin{align}
\nonumber
\PP_{\Lambda_{L,x}'} \{B_{\rm res} \}& \leq \sum_{i=1}^{\lvert \Lambda_{L,y} \rvert}
\PP_{\Lambda_{L,x}'} \bigl\{ \omega \in \Omega : \Tr \bigl( P_{I \cap [E_{\omega,y}^i-2\delta , E_{\omega,y}^i + 2\delta ]} (H_{\Lambda_{L,x}}) \bigr) \geq 1 \bigr\} \\
& \leq \sum_{i=1}^{\lvert \Lambda_{L,y} \rvert}
\EE_{\Lambda_{L,x}'} \bigl\{ \Tr \bigl( P_{I \cap [E_{\omega,y}^i-2\delta , E_{\omega,y}^i + 2\delta ]}  (H_{\Lambda_{L,x}}) \bigr) \bigr\} \nonumber \\
&\leq \lvert \Lambda_{L,y} \rvert  C_{\rm W} (4\delta)^{s}\lvert \Lambda_{L,x} \rvert. \label{eq:wegner-application}
\end{align}
Notice that $4\delta \leq 1$, since $L \geq 8 \ln 8$. Consider now an $\omega \not \in \tilde B_x \cup \tilde B_y$. Recall that \eqref{eq:claim} tells us that $\Delta_\omega^x \subset  I_{\omega,x}(\delta)$
and $\Delta_\omega^y \subset  I_{\omega,y}(\delta)$. If additionally $\omega \not \in B_{\rm  res}$ then no $E \in I$ can be in
$\Delta_\omega^x $ and $\Delta_\omega^y$ simultaneously. Hence for each $E \in I$ either  $\Lambda_{L,x}$ or $\Lambda_{L,y}$
is $(\mu/8,E)$-regular. A contraposition gives us
\begin{align*}
\PP \bigl\{&\text{$\exists \, E \in I$,
$\Lambda_{L,x}$ and $\Lambda_{L,y}$ is $(\mu/8,E)$-singular} \bigr\} \\
&\le \PP \{\tilde B_x\} +\PP \{\tilde B_y \} + \PP \{B_{\rm res} \}
\\ &\leq 2\lvert \Lambda_{L,x} \rvert \, \lvert I \rvert C
\euler^{-\mu L /8} + \lvert \Lambda_{L,y} \rvert  C_{\rm W}
(4\delta)^{s}\lvert \Lambda_{L,x} \rvert,
\end{align*}
from which the result follows.
\end{proof}
In the proof of Proposition \ref{prop:replace-msa} its Hypothesis (ii)
was only used to obtain a Wegner estimate, {\it i.e.} Eq.~\eqref{eq:wegner}.
Hence, if we know that a Wegner estimate holds
for some other reason, e.g.~from \cite{Veselic2010},
 we can relinquish the Hypothesis (ii) and
skip the corresponding argument in the proof of Proposition
\ref{prop:replace-msa}. Specifically, the following assertion holds
true:
\begin{proposition} \label{prop:replace-msa-Wegner}
Let $I \subset \RR$ be a bounded interval and $s \in (0,1)$. Assume the following two statements:
\begin{enumerate}[(i)]
\item There are constants $C,\mu \in (0,\infty)$ and $L_0 \in \NN_0$ such that
\[\EE \bigl\{\lvert G_{\Lambda_{L,k}} (E;x,y) \rvert^{s} \bigr\}
\leq C \euler^{-\mu \lvert x-y \rvert_\infty}\]
for all $k \in \ZZ^d$, $L \in \NN$, $x,y \in \Lambda_{L,k}$ with
$\lvert x-y \rvert_\infty \geq L_0$, and all $E \in I$.
\item There are constants $C_{\rm W}\in (0,\infty)$, $ \beta \in (0,1]$, and $D \in \NN$
such that
\[\PP\bigl\{ \sigma(H_{\Lambda_{L,0}} ) \cap
[a,b]\not=\emptyset\bigr\} \leq C_{\rm W}{\lvert b-a\rvert}^\beta \,
L^D\]
for all $L \in \NN$ and all $[a,b]\subset I$.
\end{enumerate}
Then we have for all $L \geq \max\{8 \ln (2)/\mu , L_0 , -(8/5\mu)\ln (\lvert I \rvert / 2)\}$ and all $x,y \in \ZZ$ with $\lvert x-y\rvert_\infty \geq 2L+\diam \Theta + 1$ that
\begin{multline*}
 \PP \{\forall \, E \in I \text{ either $\Lambda_{L,x}$ or $\Lambda_{L,y}$ is
$(\mu/8,E)$-regular} \}  \\ \geq 1- 8(2L+1)^d\rvert(C \, \lvert I \rvert  + C_{\rm W}L^D  ) \euler^{-\mu \beta L/8} .
\end{multline*}
\end{proposition}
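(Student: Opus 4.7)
The proposal is to mimic the proof of Proposition \ref{prop:replace-msa} almost verbatim, replacing only the step in which a Wegner estimate was derived from hypothesis (ii) of that proposition. Since here the Wegner-type estimate is supplied directly as hypothesis (ii), we simply splice it into the same scheme.

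First, I would reintroduce the sets $\Delta_\omega^k$ and $\tilde\Delta_\omega^k$ for $k \in \{x,y\}$ exactly as in \eqref{eq:deltatilde}, together with the bad sets $\tilde B_k = \{\omega : \mathcal{L}(\tilde\Delta_\omega^k) > \euler^{-5\mu L/8}\}$. Hypothesis (i), combined with Chebyshev's inequality and Fubini (applied to the integral $\int_I \sup_{w \in \partial^{\rm i}\Lambda_{L,k}} |G_{\Lambda_{L,k}}(E;k,w)|^{s}\,\drm E$), yields $\PP\{\tilde B_k\} \leq |\Lambda_{L,k}|\,|I|\,C\,\euler^{-\mu L/8}$, exactly as before. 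The requirement $L \geq -(8/5\mu)\ln(|I|/2)$ guarantees $|I| \geq 2\euler^{-5\mu L/8}$, needed for the claim below.

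Next, I would reproduce the inclusion \eqref{eq:claim}, namely that for $\omega \notin \tilde B_k$ one has $\Delta_\omega^k \subset I_{\omega,k}(\delta)$ with $\delta = 2\euler^{-\mu L/8}$. The argument is identical to that in Proposition \ref{prop:replace-msa}: if $E \in \Delta_\omega^k$ and $\dist(E,\sigma(H_{\Lambda_{L,k}})) > \delta$, the first resolvent identity together with $L \geq 8\ln(2)/\mu$ propagates the inequality $|G_{\Lambda_{L,k}}(E';k,w)| > \euler^{-\mu L/4}$ to every $E'$ in the interval $[E - 2\euler^{-5\mu L/8}, E+2\euler^{-5\mu L/8}] \cap I$, forcing $\mathcal{L}(\tilde\Delta_\omega^k) \geq 2\euler^{-5\mu L/8}$, contradicting $\omega \notin \tilde B_k$.

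The only place the argument diverges is the bound on $\PP\{B_{\rm res}\}$, where $B_{\rm res} = \{\omega : I \cap I_{\omega,x}(\delta) \cap I_{\omega,y}(\delta) \neq \emptyset\}$. Let $\Lambda_{L,x}' = \bigcup_{p \in \Lambda_{L,x}}\{k \in \ZZ^d : u(p-k) \neq 0\}$, so that $H_{\Lambda_{L,x}}$ depends only on $\{\omega_k : k \in \Lambda_{L,x}'\}$. Because $|x-y|_\infty \geq 2L+\diam\Theta+1$, the sets $\Lambda_{L,y}'$ and $\Lambda_{L,x}'$ are disjoint, so $I_{\omega,y}(\delta)$ is independent of $\{\omega_k : k \in \Lambda_{L,x}'\}$. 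Conditioning on $\omega_k$, $k \notin \Lambda_{L,x}'$, and applying hypothesis (ii) of the present proposition to each of the $|\Lambda_{L,y}|$ intervals $[E_{\omega,y}^i - 2\delta, E_{\omega,y}^i + 2\delta]$ yields
\[
\PP\{B_{\rm res}\} \leq |\Lambda_{L,y}|\,C_{\rm W}(4\delta)^\beta L^D \leq (2L+1)^d\,C_{\rm W}\,8^\beta\,L^D\,\euler^{-\mu\beta L/8}.
\]
Combining the three bounds via $\PP\{\exists E \in I : \text{both boxes singular}\} \leq \PP\{\tilde B_x\} + \PP\{\tilde B_y\} + \PP\{B_{\rm res}\}$ gives the stated estimate. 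No step is genuinely hard; the only subtlety to check carefully is that the condition $L \geq 8\ln(2)/\mu$ (rather than the $8\ln(8)/\mu$ used previously) still suffices for the first-resolvent-identity propagation step, which it does, since $\euler^{-\mu L/8}/2 \geq \euler^{-\mu L/4}$ is equivalent to $L \geq 8\ln(2)/\mu$.
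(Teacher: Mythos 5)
Your proposal matches the paper's proof essentially verbatim: the paper simply says "we proceed as in the proof of Proposition \ref{prop:replace-msa}, but replace Ineq.~\eqref{eq:wegner-application}" by the direct application of hypothesis (ii) to each interval $I \cap [E_{\omega,y}^i-2\delta, E_{\omega,y}^i+2\delta]$, yielding $\PP_{\Lambda_{L,x}'}\{B_{\rm res}\} \leq |\Lambda_{L,y}|\,C_{\rm W}(4\delta)^\beta L^D$. You also correctly identify the one subtlety — that the $L \geq 8\ln(8)/\mu$ threshold in Proposition \ref{prop:replace-msa} was needed only to ensure $4\delta \leq 1$ for the short-interval Wegner estimate derived there, a constraint absent here, so $L \geq 8\ln(2)/\mu$ suffices.
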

\begin{proof}
We proceed as in the proof of Proposition \ref{prop:replace-msa}, but replace Ineq.{} \eqref{eq:wegner-application} by
\begin{align*}
\PP_{\Lambda_{L,x}'} \{B_{\rm res}\}&\leq \sum_{i=1}^{\lvert \Lambda_{L,y} \rvert}
\PP_{\Lambda_{L,x}'} \bigl\{  I \cap \sigma(H_{\Lambda_{L,x}}) \cap [E_{\omega,y}^i-2\delta , E_{\omega,y}^i + 2\delta ] \neq \emptyset\bigr\} \\
 &\leq \lvert \Lambda_{L,y} \rvert  C_{\rm W} (4\delta)^{\beta} L^D
\end{align*}
to obtain the desired bound.
\end{proof}
\begin{remark} \label{remark:assii}
Note that the conclusions of Proposition \ref{prop:replace-msa} and \ref{prop:replace-msa-Wegner} tell us that the probabilities of $\{\forall \, E \in I \text{ either $\Lambda_{L,x}$ or $\Lambda_{L,y}$ is $(\mu/8,E)$-regular} \}$ tend to one exponentially fast as $L$ tends to infinity. In particular, for any $p>0$ there is some $\tilde L \in \NN$ such that for all $L \ge \tilde L$:
\[
 \PP \{\forall \, E \in I \text{ either $\Lambda_{L,x}$ or $\Lambda_{L,y}$ is $(m,E)$-regular} \}\ge 1- L^{-2p}.
\]
\end{remark}
We will yield exponential localization from the estimates provided
by Proposition \ref{prop:replace-msa} /
\ref{prop:replace-msa-Wegner} using Theorem 2.3 in
\cite{DreifusK1989}. More precisely we need a slight extension of
the result, which can be proven with the same arguments as the
original result. What matters for the proof of
Theorem~\ref{thm:vDK-2.3} is that there is an $l_0 \in \NN$ such
that potential values at different lattice sites are independent if
their distance is larger or equal $l_0$.

\begin{theorem}[\cite{DreifusK1989}] \label{thm:vDK-2.3}
Let $I\subset \RR$ be an interval and let $p>d$, $ L_0>1$, $\alpha \in (1,2p/d)$ and $m>0$.
Set $L_{k} =L_{k-1}^\alpha$, for $k \in \NN$. Suppose that for any $k \in \NN_0$
\[
 \PP \{\forall \, E \in I \text{ either $\Lambda_{L_k,x}$ or $\Lambda_{L_k,y}$ is $(m,E)$-regular} \}\ge 1- L_k^{-2p}
\]
for any $x,y \in \ZZ^d$ with $\lvert x-y\rvert_\infty \geq 2L_k + \diam \Theta + 1$.
Then $H_\omega$ exhibits exponential localization in $I$ for almost all $\omega \in \Omega$.
\end{theorem}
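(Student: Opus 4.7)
The plan is to follow the classical von Dreifus--Klein argument, adapted to the $l_0$-dependent randomness of the alloy-type model (with $l_0 = \diam \Theta + 1$). The proof splits into a Borel--Cantelli reduction, a deterministic MSA-style iteration showing that polynomially bounded generalized eigenfunctions decay exponentially, and a final invocation of Sch'nol's theorem to upgrade this to pure point spectrum with exponentially decaying eigenfunctions.

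\textbf{Step 1 (Borel--Cantelli).} For each scale $k$, let $B_k$ be the event that there exist $x,y \in \Lambda_{L_{k+1}^\beta,0}$ with $\lvert x-y\rvert_\infty \geq 2L_k + \diam \Theta +1$ for which some $E \in I$ makes both $\Lambda_{L_k,x}$ and $\Lambda_{L_k,y}$ $(m,E)$-singular, where $\beta$ is a polynomial exponent to be fixed. A union bound over the $O(L_k^{2d\beta\alpha})$ such pairs combined with the hypothesis gives $\PP(B_k) \lesssim L_k^{2d\beta\alpha - 2p}$; the constraints $p > d$ and $\alpha \in (1,2p/d)$ permit a choice of $\beta$ making this summable, and Borel--Cantelli produces a set $\Omega' \subset \Omega$ of full measure such that for each $\omega \in \Omega'$ there is $k_0(\omega)$ beyond which no bad pair occurs.

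\textbf{Step 2 (Deterministic decay).} Fix $\omega \in \Omega'$ and let $\psi$ be a polynomially bounded solution of $H_\omega \psi = E\psi$ with $E \in I$. The geometric resolvent formula applied to $\psi$ on any box $\Lambda_{L_k,x}$ with $E \notin \sigma(H_{\Lambda_{L_k,x}})$ yields $\psi(x) = -\sum_{(w,w') \in \partial^{\rm B} \Lambda_{L_k,x}} G_{\Lambda_{L_k,x}}(E;x,w)\psi(w')$. On the good set, for $x$ of large enough norm the ``two-box'' statement from Step~1 furnishes, at each scale along a chain of nested annuli reaching out to $x$, at least one $(m,E)$-regular box; plugging in $(m,E)$-regularity of that box introduces a contracting factor of order $L_k^{d-1}\euler^{-mL_k}$. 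Iterating these contractions through scales $L_{k_0}, L_{k_0+1}, \ldots$ and absorbing the polynomial a priori bound on $\psi$ gives $\lvert \psi(x) \rvert \leq C\euler^{-m'\lvert x\rvert_\infty}$ for some $m' \in (0,m)$ and all sufficiently large $\lvert x\rvert_\infty$. Sch'nol's theorem, valid for our bounded discrete self-adjoint operator, tells us that the spectral measure of $H_\omega$ is supported on energies which admit a polynomially bounded generalized eigenfunction; each such eigenfunction lies in $\ell^2(\ZZ^d)$ by the previous bound, so $\sigma_{\rm c}(H_\omega) \cap I = \emptyset$ and all eigenfunctions with eigenvalue in $I$ decay exponentially.

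\textbf{Main obstacle.} The one departure from the original proof is the absence of pointwise independence of the potential values. This is precisely what the text preceding the theorem addresses: the events corresponding to $H_{\Lambda_{L_k,x}}$ and $H_{\Lambda_{L_k,y}}$ become stochastically independent as soon as $\lvert x-y\rvert_\infty \geq 2L_k + \diam \Theta + 1$, because the random variables $\omega_j$ entering these two restrictions are then drawn from disjoint index sets. This is exactly the separation built into the hypothesis of the theorem, so the union bound in Step~1 and the Wegner-type inputs implicit in the hypothesis (supplied either by Proposition~\ref{prop:replace-msa} or by the estimate of \cite{Veselic2010}) require no further modification, and the deterministic iteration of Step~2 is unaffected.
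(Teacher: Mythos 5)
The paper does not actually prove Theorem~\ref{thm:vDK-2.3}: it states it as a mild extension of Theorem~2.3 of von Dreifus and Klein \cite{DreifusK1989}, and in the paragraph preceding it remarks only that ``it can be proven with the same arguments as the original result,'' the single structural point being that the potential values are independent once the lattice sites are separated by more than $l_0 = \diam\Theta + 1$. So there is no in-paper proof against which to compare your sketch; it should be judged as a rendering of the von~Dreifus--Klein argument. Your Step~2 (deterministic iteration via the geometric resolvent identity on good boxes, plus Sch'nol/Schnol to pass from decay of polynomially bounded generalized eigenfunctions to pure point spectrum) and your identification of the separation $|x-y|_\infty \geq 2L_k + \diam\Theta + 1$ as exactly what restores the needed independence are both correct and are precisely what the paper has in mind.

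The Borel--Cantelli step as you have set it up, however, does not quite close. You let both $x$ and $y$ range over $\Lambda_{L_{k+1}^\beta,0}$, so the pair count is $O(L_k^{2d\alpha\beta})$ and summability of $\PP(B_k) \lesssim L_k^{2d\alpha\beta - 2p}$ forces $\beta < p/(d\alpha)$. But for the iteration in Step~2 to march from scale $L_k$ out to scale $L_{k+1}$, the exclusion region must reach to distance on the order of $L_{k+1} = L_k^{\alpha}$, which forces $\beta \geq 1$. The two requirements are compatible only when $\alpha < p/d$, strictly smaller than the permitted range $\alpha \in (1, 2p/d)$. The standard remedy --- and what von~Dreifus--Klein actually do --- is to fix one endpoint (the base point $x_0$) and let only the second point range over an annulus of radius $O(L_{k+1})$ around it. This reduces the pair count to $O(L_{k+1}^d) = O(L_k^{d\alpha})$, so $\PP(B_k) \lesssim L_k^{d\alpha - 2p}$, and the summability condition becomes exactly $\alpha < 2p/d$ with no auxiliary parameter $\beta$ to reconcile. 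Translating over a countable dense set of base points then covers all of $\ZZ^d$.

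A smaller remark on your final paragraph: no Wegner estimate enters the proof of Theorem~\ref{thm:vDK-2.3} itself. Once the two-box probability estimate in the hypothesis is granted, the argument is Borel--Cantelli plus the deterministic iteration plus Sch'nol, and no additional stochastic input is needed. The Wegner bound (supplied through Proposition~\ref{prop:replace-msa} or \ref{prop:replace-msa-Wegner}, or from \cite{Veselic2010}) is used in the paper only to \emph{verify} the hypothesis of Theorem~\ref{thm:vDK-2.3}, not in its proof.
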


\begin{proof}[Proof of Theorem \ref{thm:result2}]
We assume first that $I$ is a \emph{bounded} interval.
 Fix $E \in I$, $k \in \ZZ^d$ and $L \in \NN$.
By the assumption  of the theorem, Hypothesis (ii)  of \ref{prop:replace-msa} and thus a Wegner estimate hold.
Therefore, for any $L \in \NN$ and any $k \in \ZZ^d$ the probability of finding an eigenvalue of $H_{\Lambda_{L,k}}$ in $[a,b] \subset I$
shrinks to zero as $b-a\to 0$. Hence $E \in I$ is not an eigenvalue of $H_{\Lambda_{L,k}}$
and the resolvent of $H_{\Lambda_{L,k}}$  at $E$ is well defined for all $\omega\in \Omega_I$,
where $\Omega_I$  is a set of full measure.
Lebesgues Theorem now gives
\begin{align}
 C {\rm e}^{-\mu \lvert x-y \rvert_\infty} &\geq \lim_{\epsilon \to 0} \mathbb{E} \bigl\{\lvert G_{\Lambda_{L,k}} (E + \i \epsilon;x,y)\rvert^{s}\bigr\} \nonumber \\
&= \lim_{\epsilon \to 0} \int_{\Omega_I}  \lvert G_{\Lambda_{L,k}} (E + \i \epsilon;x,y)\rvert^{s} \PP (\drm \omega) \nonumber \\
&= \mathbb{E} \bigl\{\lvert G_{\Lambda_{L,k}} (E;x,y)\rvert^{s}\bigr\} \label{eq:bullet} .
\end{align}
For sets of measure zero, the integrand in \eqref{eq:bullet} may not
be defined. However, for the bounds on the expectation value this is
irrelevant. Hence the assumptions of
Proposition~\ref{prop:replace-msa} are satisfied. Combining the
latter with Theorem~\ref{thm:vDK-2.3} and Remark~\ref{remark:assii}
we arrive  to the desired result.

If $I$ is an \emph{unbounded} interval, we can cover it by a countable collection
of bounded intervals. In each of those, exponential localization holds
by the previous arguments for all $\omega$ outside a set of zero measure.
Since the collection of intervals is countable, we have exponential localizaition in $I$ almost surely.
\end{proof}

\begin{proof}[Proof of Theorem~\ref{thm:result3}]
We use Theorem~\ref{thm:result1}  to verify that the hypothesis of
Theorem~\ref{thm:result2}  is satisfied with $I=\RR$. This yields
the desired result.
\end{proof}
\appendix
\section{A non-local apriori bound}

An important step in the proof of exponential decay of fractional moments
is the so called a-priori bound, i.\,e.~a uniform bound on the expectation value
of a fractional power of Green's function elements, which depends in an appropriate
way on the disorder.
It was this step, where the boundary-monotonicity Assumption \ref{ass:monotone}
enters the proof of decay of fractional moments and exponential localization,
as presented in the main body of the paper.

Here in the Appendix we present an alternative a-priori bound
which holds under much milder hypotheses on $u$, see \ref{ass:averagepos} below.
By `milder' we do not mean that this cover the class of models where
\ref{ass:monotone} is satisfied, but rather that it holds generically
in the class of compactly supported single site potentials.

\begin{assumption} \label{ass:averagepos}
\leavevmode \\[-2ex]
\begin{enumerate}
 \item [(B1)] The measure $\mu$ has a density $\rho$ in the Sobolev space $ W^{1,1} (\RR)$.
\item [(B2)]
 The single site potential $u$ satisfies $\overline u := \sum_{k \in \ZZ^d} u(k) \neq 0$.
\end{enumerate}
\end{assumption}
\begin{remark}
Note that without loss of generality (B2) can be replaced by
$\overline u
> 0$, since
\[
V_\omega (x) := \sum_{k \in \ZZ^d} \omega_k u (x-k)= \sum_{k \in \ZZ^d}\big(- \omega_k\big) \, \big(-u (x-k)\big).
\]
\end{remark}
The purpose of this section is to prove
\begin{theorem} \label{theorem:apriori_average}
 Let $\Lambda \subset \ZZ^d$ finite, $s \in (0,1)$ and Assumption~\ref{ass:averagepos} be satisfied. Then we have for all $x,y \in \Lambda$ and $z \in \CC \setminus \RR$
\[
\mathbb{E} \Bigl\{ \bigl\lvert G_\Lambda (z;x,y) \bigr\rvert^s \Bigr\}  \leq
\frac{2^s}{1-s} \bigl(\overline{u}^{-1} \lVert \rho' \rVert_{L^1} C_{\rm W} D \bigr)^s \frac{1}{\lambda^s}
\]
where $D$ and $C_{\rm W}$ are the constants from Eq. \eqref{eq:D} and Lemma~\ref{lemma:monotone}.
\end{theorem}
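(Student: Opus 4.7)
The plan rests on the algebraic identity encoded by Assumption~(B2): since $\sum_{k \in \ZZ^d} u(x-k) = \overline{u}$ for every $x$, the simultaneous translation $\omega_k \mapsto \omega_k + \tau$ for all $k$ in the finite set $\Lambda' := \Lambda - \Theta$ (of coupling indices that actually influence $H_\Lambda$) induces the global shift $V_\Lambda \mapsto V_\Lambda + \overline{u}\,\tau\,I_\Lambda$, or equivalently $H_\Lambda \mapsto H_\Lambda + \lambda\overline{u}\,\tau\,I_\Lambda$. Thus, although the potential depends non-monotonically on the individual $\omega_k$, there is a single distinguished direction in $\omega$-space---namely $\mathbf{1}_{\Lambda'}/\overline{u}$---along which the family of Hamiltonians is a monotone one-parameter family of identity perturbations. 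This is precisely the structure to which the spectral-averaging estimate of Lemma~\ref{lemma:monotone} applies (with the strictly positive $V = \lambda I_\Lambda$), yielding the weak-$L^1$ bound $\mathcal{L}\{\tau \in \RR : |G_\Lambda(z;x,y)| > t\} \leq C_{\mathrm{W}}/(\lambda\overline{u}\,t)$ at fixed values of the transverse coordinates.

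Concretely, I would perform the change of variables $(\omega_k)_{k \in \Lambda'} \mapsto (\tau, \bar\omega)$ adapted to this direction, setting $\omega_{k_1} = \tau/\overline{u}$ and $\omega_k = \bar\omega_k + \tau/\overline{u}$ for a fixed reference index $k_1$ and $k \neq k_1$; a short calculation shows the Jacobian equals $1/|\overline{u}|$. In the new coordinates $G_\Lambda(z;x,y) = \langle\delta_x,(H_0(\bar\omega) - z + \lambda\tau I_\Lambda)^{-1}\delta_y\rangle$, so the $\tau$-dependence is exactly that of a resolvent with an identity perturbation of coefficient $\lambda$; meanwhile the joint density factorises as $\rho(\tau/\overline{u})\prod_{k \neq k_1}\rho(\bar\omega_k + \tau/\overline{u})$. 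For each fixed $\bar\omega$ I would then estimate the inner $\tau$-integral by combining the weak-$L^1$ bound above with a layer-cake decomposition in the spirit of Lemma~\ref{lemma:monotone2}. Assumption~(B1) enters at this point through the Sobolev-type inequality $\|\rho\|_\infty \leq \tfrac{1}{2}\|\rho'\|_{L^1}$ (automatic for $\rho \in W^{1,1}(\RR)$ because such densities vanish at $\pm\infty$), which replaces the $L^\infty$-norm appearing in the Aizenman--Molchanov-type bound by $\|\rho'\|_{L^1}$ and accounts for the numerical factor $2^s$ in the announced constant.

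The main obstacle---and the step that dictates how the layer-cake must be organised---is keeping the $\bar\omega$-integration free of any catastrophic dependence on $|\Lambda'|$. A naive $L^\infty$-bound on the full pushforward kernel $\rho(\tau/\overline{u})\prod_{k\neq k_1}\rho(\bar\omega_k + \tau/\overline{u})$ gives $\|\rho\|_\infty^{|\Lambda'|}$, which would destroy the bound in large volumes. The remedy is to absorb only a \emph{single} copy of $\rho$ into the $\tau$-spectral-averaging step (where the Sobolev inequality converts it into $\tfrac{1}{2}\|\rho'\|_{L^1}$), while the remaining $|\Lambda'|-1$ transverse $\rho$-factors are integrated out against unity in the $\bar\omega$-direction via Fubini and the normalisation $\int\rho = 1$. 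Putting the pieces together and tracking constants produces the target bound, with the volume-independent constant $D$ from equation~\eqref{eq:D} absorbing the dimensional factor inherited from $\|M_i V^{-1/2}\|$ in Lemma~\ref{lemma:monotone} together with the remaining bookkeeping of the layer-cake step.
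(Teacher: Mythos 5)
There is a genuine gap. Your uniform shift $\omega_k \mapsto \omega_k + \tau/\overline u$ for all $k\in\Lambda'$ does produce the clean identity perturbation $H_\Lambda \mapsto H_\Lambda + \lambda\tau I_\Lambda$, and you correctly identify the danger that the pushforward density grows with $|\Lambda'|$. But the remedy you propose does not close the gap. In the layer-cake step, after Fubini and the Sobolev inequality $\sup_\tau g(\tau,\bar\omega) \le \tfrac12\int_\RR |\partial_\tau g(\tau,\bar\omega)|\,\drm\tau$, the partial derivative of $g(\tau,\bar\omega)=\lvert\overline u\rvert^{-1}\rho(\tau/\overline u)\prod_{k\ne k_1}\rho(\bar\omega_k+\tau/\overline u)$ in the direction $\tau$ is a sum of $|\Lambda'|$ product-rule terms, \emph{each} carrying the same coefficient $1/\overline u$ because every coordinate is shifted by the same amount. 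Integrating back to the original variables, each term contributes $\|\rho'\|_{L^1}/|\overline u|$, so the bound on $\int_{\bar\omega}\sup_\tau g$ is proportional to $|\Lambda'|\,\|\rho'\|_{L^1}/|\overline u|^2$. You cannot "absorb only a single copy of $\rho$" and integrate the rest to unity: the sup over $\tau$ and the $\bar\omega$-integral do not factor that way, and there is no Fubini manoeuvre that removes the $|\Lambda'|$ terms generated by the product rule (the Green function itself depends on $\bar\omega$, so you cannot first integrate the transverse factors out).

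The paper avoids this by replacing the constant direction $\mathbf 1_{\Lambda'}/\overline u$ with the exponentially decaying weights $\alpha^{x,y}(k)=\tfrac12(\euler^{-c|k-x|_1}+\euler^{-c|k-y|_1})$ of Eq.~\eqref{eq:alpha}. The change of variables $\omega_v=\alpha^{x,y}(v)\zeta_v$, $\omega_k=\alpha^{x,y}(k)\zeta_v + \alpha^{x,y}(v)\zeta_k$ then produces, in the derivative of the pushforward kernel, a sum $\sum_{l\in\Lambda_+}|\alpha^{x,y}(l)|\,\|\rho'\|_{L^1}$, and $\sum_l|\alpha^{x,y}(l)|=D$ is volume-independent; this is exactly where the constant $D$ of Eq.~\eqref{eq:D} enters (it is \emph{not} a "dimensional factor inherited from $\|M_iV^{-1/2}\|$"). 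The price for decaying weights is that the effective potential direction $W^{x,y}(k)=\sum_j\alpha^{x,y}(j)u(k-j)$ is no longer a multiple of the identity, and one must prove it is still strictly positive with $W^{x,y}(x),W^{x,y}(y)\ge\overline u/4$; that is the content of Lemma~\ref{lemma:Wij}, and its proof is precisely what dictates the choice of the decay rate $c$ in terms of $\overline u$ and $\|u\|_{\ell^1}$. This positivity lemma is the piece your outline is missing, and without the non-constant weights the final constant cannot be made volume-free.
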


\begin{remark} This extends Theorem 2.3 of \cite{TautenhahnV-10}.
The drawback of the a-priori bound based on Assumption \ref{ass:averagepos}
is that it is `non-local' in the sense that it requires averaging over
the entire disorder present in the model.
At the moment we are not able to conclude exponential decay of fractional moments
relying in this version of the a-priori bound.
\end{remark}

The proof relies on a special transformation of the random variables
$\omega_k$, $k \in \Lambda_+$, where
$\Lambda_+ = \cup_{k \in \Lambda} \{x \in \ZZ^d \mid u(x-k) \not = 0\}$
denotes the set of lattice sites whose coupling constant influences the
potential in $\Lambda$.
\par
Let $n$ denote the diameter of $\Theta$ with respect to the $\ell^1$-norm, i.\,e. $n := \max_{i,j \in \Theta} \lvert i-j \rvert_1$. For $x,y \in \ZZ^d$ we define $\alpha^{x,y} : \ZZ^d \to \RR^+$ by
\begin{equation} \label{eq:alpha}
 \alpha^{x,y} (k) := \frac{1}{2} \left( {\rm e}^{-c \lvert k-x \rvert_1} + {\rm e}^{-c \lvert k-y \rvert_1} \right) \! \quad \! \text{with} \! \quad \! c:=  \frac{1}{n} \ln \left( 1 + \frac{\overline{u}}{2 \lVert u \rVert_{\ell^1}} \right) .
\end{equation}
Notice that the $\ell^1$-norm of $\alpha^{x,y}$ is independent of $x,y \in \ZZ^d$, i.\,e.
\begin{equation} \label{eq:D}
 D := D(n,\bar u , \lVert u \rVert_{\ell^1}) := \sum_{k \in \ZZ} \lvert \alpha^{x,y} (k) \rvert = \sum_{k \in \ZZ^d} {\rm e}^{-c\lvert k \rvert_1} = \left( \frac{{\rm e}^{c} + 1}{{\rm e}^{c} - 1} \right)^d .
\end{equation}
With the help of the coefficients $\alpha^{x,y} (k)$, $k \in \ZZ^d$, we will define a linear transformation of the variables $\omega_k$, $k \in \Lambda_+$, where $\Lambda_+$ denotes the set of lattice sites whose coupling constants influence the potential in $\Lambda$. Some part of the ``new'' potential will then be given by $W^{x,y} : \ZZ^d \to \RR$,
\begin{equation} \label{eq:Wij}
 W^{x,y} (k) := \sum_{j \in \ZZ^d} \alpha^{x,y} (k) u(k-j),
\end{equation}
where indeed only the values $k \in \Lambda$ are relevant. For our analysis it is important that $W^{x,y}$ is positive and that $W^{x,y} (k) \geq \delta > 0$ for $k \in \{x,y\}$ where $\delta$ is independent of $\Lambda$ and $x,y \in \Lambda$. This is done by
\begin{lemma} \label{lemma:Wij}
 Let Assumption~\ref{ass:averagepos} be satisfied. Then we have for all $x,y,k \in \ZZ^d$
\[
 W^{x,y} (k) \geq \alpha^{x,y} (k) \frac{\overline{u}}{2} > 0 .
\]
In particular, $W^{x,y} (k) \geq \overline{u} / 4$ for $k \in \{x,y\}$.
\end{lemma}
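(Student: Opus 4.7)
The plan is to interpret $W^{x,y}$ as the discrete convolution of the weight $\alpha^{x,y}$ with the single-site potential $u$ (the formula as typeset has $\alpha^{x,y}(k)$ rather than $\alpha^{x,y}(j)$ inside the sum, which would render the assertion trivial by the defining property of $\overline u$; I proceed with the evidently intended convolution formula $W^{x,y}(k) = \sum_{j} \alpha^{x,y}(j) u(k-j)$). The first step would be to write
\[
W^{x,y}(k) = \alpha^{x,y}(k)\, \overline u + \sum_{j \in \ZZ^d} \bigl(\alpha^{x,y}(j) - \alpha^{x,y}(k)\bigr) u(k-j),
\]
using $\sum_j u(k-j) = \overline u$ via the substitution $m = k-j$. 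The task reduces to controlling the error term by $\alpha^{x,y}(k)\overline u/2$ in absolute value.

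Next I would exploit the fact that $u(k-j)=0$ unless $|k-j|_1 \leq n := \diam_{\ell^1}\Theta$. For such $j$, the reverse triangle inequality gives $\bigl| |j-x|_1 - |k-x|_1 \bigr| \leq n$ and likewise for $y$, so termwise
\[
e^{-cn}\, \alpha^{x,y}(k) \leq \alpha^{x,y}(j) \leq e^{cn}\, \alpha^{x,y}(k),
\]
from which $|\alpha^{x,y}(j) - \alpha^{x,y}(k)| \leq (e^{cn}-1)\alpha^{x,y}(k)$. Consequently the error term is bounded in absolute value by $(e^{cn}-1)\alpha^{x,y}(k)\lVert u\rVert_{\ell^1}$.

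The concluding step is to plug in the calibrated constant $c = n^{-1}\ln(1 + \overline u /(2\lVert u\rVert_{\ell^1}))$ from \eqref{eq:alpha}, which is designed precisely so that $e^{cn}-1 = \overline u /(2\lVert u\rVert_{\ell^1})$. This collapses the error bound to $\alpha^{x,y}(k)\overline u/2$, giving
\[
W^{x,y}(k) \geq \alpha^{x,y}(k)\overline u - \alpha^{x,y}(k)\overline u/2 = \alpha^{x,y}(k)\overline u/2 > 0,
\]
as claimed. The special case $k\in\{x,y\}$ follows by noting that $\alpha^{x,y}(x) = \tfrac12(1 + e^{-c|x-y|_1}) \geq 1/2$, and symmetrically at $k=y$, which sharpens the bound to $\overline u/4$. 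The only mildly delicate point is verifying that the Lipschitz-type control of $\alpha^{x,y}$ on the shifted support of $u$ is tight enough that the specific choice of $c$ makes the two halves of $\overline u$ balance; no genuine obstacle arises, and the argument is completely elementary once the decomposition in the first step is written down.
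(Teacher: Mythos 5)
Your proof is correct and follows essentially the same route as the paper's: the same decomposition $W^{x,y}(k)=\alpha^{x,y}(k)\overline u+\sum_j(\alpha^{x,y}(j)-\alpha^{x,y}(k))u(k-j)$, the same termwise control of $\lvert\alpha^{x,y}(j)-\alpha^{x,y}(k)\rvert$ by $(\euler^{cn}-1)\alpha^{x,y}(k)$ on the support constraint $\lvert k-j\rvert_1\le n$, and the same calibration of $c$ to balance the error; the only cosmetic difference is that you obtain the Lipschitz-type bound via the two-sided multiplicative comparison $\euler^{-cn}\alpha^{x,y}(k)\le\alpha^{x,y}(j)\le\euler^{cn}\alpha^{x,y}(k)$ (using $\euler^{cn}-1\ge 1-\euler^{-cn}$), whereas the paper appeals to convexity of $t\mapsto\euler^{-ct}$. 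You also correctly read past the typo in Eq.~\eqref{eq:Wij} (which should have $\alpha^{x,y}(j)$ inside the sum, as the paper's own proof makes clear).
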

A linear combination with appropriately chosen, exponentiall decaying coefficients,
resp.~a convolution with an exponentially decreasing function is useful also
for other spectral averaging bounds.
See \cite{Veselic-gauss} for an application in the context of Gaussian random potentials in continuum space
and Section 3 in \cite{KostrykinV2006} for abstract criteria, when monotone contributions can be extracted
from a general alloy-type potential.

\begin{proof}
Recall that $n := \max_{i,j \in \Theta} \lvert i-j \rvert_1$ and that we have assumed $0 \in \Theta$. For $k \in \ZZ^d$ let $B_n(k) = \{j \in \ZZ^d : \lvert j-k \rvert_1 \leq n\}$. The triangle inequality gives us for all $k \in \ZZ^d$
\begin{align*}
 M &= \max_{j \in B_n (k)} \bigl\lvert \alpha^{x,y}(k) - \alpha^{x,y} (j) \bigr\rvert \\ &\leq
 \frac{1}{2}\max_{j \in B_n (k)} \bigl\lvert \euler^{-c\lvert k-x \rvert_1} - \euler^{-c\lvert j-x \rvert_1} \bigr\rvert + \frac{1}{2}\max_{j \in B_n (k)} \bigl\lvert \euler^{-c\lvert k-y \rvert_1} - \euler^{-c\lvert j-y \rvert_1} \bigr\rvert .
\end{align*}
Since $\RR \ni t \mapsto {\rm e}^{-ct}$ is a convex and strictly decreasing function, we have for all $k \in \ZZ^d$
\begin{align}
  M &\leq
  \frac{1}{2} \bigl\lvert \euler^{-c\lvert k-x \rvert_1} - \euler^{-c(\lvert k-x \rvert_1 - n)} \bigr\rvert
 +\frac{1}{2} \bigl\lvert \euler^{-c\lvert k-y \rvert_1} - \euler^{-c(\lvert k-y \rvert_1 - n)} \bigr\rvert \nonumber
 \\
&\leq \alpha^{x,y} (k) ({\rm e}^{c n} - 1) . \label{eq:alphamax}
\end{align}
We use Ineq. \eqref{eq:alphamax} and that $u(k-j) = 0$ for $k-j \not \in \Theta$, and obtain the estimate
\begin{align*}
 W^{x,y} (k) &= \sum_{j \in \ZZ^d} \alpha^{x,y} (k) u (k-j) + \sum_{j \in \ZZ^d} \bigl[\alpha^{x,y} (j) - \alpha^{x,y} (k)\bigr] u (k-j) \\
&\geq \alpha^{x,y} (k) \overline u - \sum_{j \in \ZZ^d} \bigl\lvert \alpha^{x,y} (k) - \alpha^{x,y} (j) \bigr\rvert  \bigl\lvert u (k-j) \bigr\rvert \\
& \geq \alpha^{x,y} (k) \overline u - \alpha^{x,y} (k) (\euler^{cn} - 1) \lVert u \rVert_{\ell^1} .
\end{align*}
This implies the statement of the lemma due to the choice of $c$.
\end{proof}
\begin{proof}[Proof of Theorem~\ref{theorem:apriori_average}] Without loss of generality we assume $z \in \CC^-:=\{z \in \CC \mid \Im (z) <0\}$.
 Fix $x,y \in \Lambda$ and recall that  $\Lambda_+$ is the set of lattice sites whose coupling constant influences the potential in $\Lambda$.
We consider the expectation
\[
 E = \EE \Bigl \{ \bigl| G_\Lambda (z;x,y) \bigr|^s \Bigr\} =
\int_{\Omega_{\Lambda_+}} \bigl| \bigl\langle \delta_x , (H_\Lambda - z)^{-1} \delta_y \bigr\rangle \bigr|^s k(\omega_{\Lambda_+}) \drm \omega_{\Lambda_+} ,
\]
where $\Omega_{\Lambda_+} = \times_{k \in \Lambda_+} \RR$, $\omega_{\Lambda_+} = (\omega_k)_{k \in \Lambda_+}$, $k (\omega_{\Lambda_+}) = \prod_{k \in \Lambda_+} \rho (\omega_k)$ and $\drm \omega_{\Lambda_+} = \prod_{k \in \Lambda_+} \drm \omega_k$. Fix $v \in \Lambda_+$. We introduce the change of variables
\[
 \omega_v = \alpha^{x,y} (v) \zeta_v , \quad \text{and} \quad \omega_k = \alpha^{x,y} (k) \zeta_v + \alpha^{x,y} (v) \zeta_k
\]
for $k \in \Lambda_+ \setminus \{v\}$, where $\alpha^{x,y} : \ZZ^d \to \RR^+$ is defined in Eq. \eqref{eq:alpha}. With this transformation we obtain
\begin{align}
 E &= \int_{\Omega_{\Lambda_+}} \bigl| \bigl\langle \delta_x , (-\Delta_\Lambda + \lambda V_\Lambda - z)^{-1} \delta_y \bigr\rangle \bigr|^s k(\omega_{\Lambda_+}) \drm \omega_{\Lambda_+} \nonumber \\
   &= \int_{\Omega_{\Lambda_+}} \bigl|\bigl\langle \delta_x , \bigl(A + \zeta_v \lambda W^{x,y} \bigr)^{-1} \delta_y \bigr\rangle \bigr|^s \tilde k (\zeta_{\Lambda_+})  \drm \zeta_{\Lambda_+} , \label{eq:E}
\end{align}
where $\zeta_{\Lambda_+} = (\zeta_k)_{k \in \Lambda_+}$, $$\tilde k
(\zeta_{\Lambda_+}) = \lvert \alpha^{x,y} (v) \rvert^{\lvert
\Lambda_+ \rvert}\rho (\alpha^{x,y} (v) \zeta_v) \prod_{k \in
\Lambda_+ \setminus \{v\}} \rho (\alpha^{x,y} (k) \zeta_v +
\alpha^{x,y} (v) \zeta_k),$$ $\drm \zeta_{\Lambda_+} = \prod_{k \in
\Lambda_+} \drm \zeta_k$, $A= -\Delta_\Lambda - z + \alpha^{x,y} (0)
\sum_{k \in \Lambda_+ \setminus \{0\}} \zeta_k u(\cdot - k)$ and
$W^{x,y} : \ell^2 (\Lambda) \to \ell^2 (\Lambda)$ is the
multiplication operator with multiplication function given by Eq.
\eqref{eq:Wij}. Notice that $A$ is independent of $\zeta_0$ and
$W^{x,y}$ is positive by Lemma \ref{lemma:Wij}. We use Fubini's
theorem to integrate first with respect to $\zeta_v$. Let $P_x,P_y :
\ell^2 (\Lambda) \to \ell^2 (\Lambda)$ be the orthogonal projection
onto the state $\delta_x$ and $\delta_y$, respectively. The layer
cake representation, see e.\,g. \cite[p. 26]{LiebL2001}, gives us
\begin{align*}
 I &= \int_{\RR} \bigl|\bigl\langle \delta_x , \bigl(A + \zeta_v \lambda W^{x,y} \bigr)^{-1} \delta_y \bigr\rangle \bigr|^s \tilde k (\zeta_{\Lambda_+})  \drm \zeta_{v} \\
&\leq \int_0^\infty \int_\RR \mathbf{1}_{\{\lVert P_x(A + \zeta_v \lambda W^{x,y})^{-1} P_y \rVert^s > t\}} \tilde k (\zeta_{\Lambda_+}) \drm \zeta_v \drm t .
\end{align*}
We decompose the integration domain into $[0,\kappa]$ and $[\kappa,\infty)$ with $\kappa > 0$. In the first integral we estimate the characteristic function one. In the second integral we estimate $\tilde k (\zeta_{\Lambda_+}) \leq \sup_{\zeta_v \in \RR} \tilde k (\zeta_{\Lambda_+})$ and then use Lemma \ref{lemma:monotone}. This gives
\begin{equation} \label{eq:I}
 I \leq \kappa \int_\RR \tilde k (\zeta_{\Lambda_+}) \drm \zeta_v + \frac{C_{\rm W} \lambda^{-1}}{[W^{x,y} (x) W^{x,y} (y)]^{1/2}} \sup_{\zeta_v \in \RR} \tilde k (\zeta_{\Lambda_+}) \int_{\kappa}^\infty \frac{1}{t^{1/s}} \drm t .
\end{equation}
We use $\int_{\kappa}^\infty t^{-1/s} \drm t = [s/(1-s)] \kappa^{(s-1)/s}$, the fact that $\tilde k$ is a probability density and the estimate $\sup_{x \in \RR} g(x) \leq \frac{1}{2} \int_\RR \lvert g'(x)\rvert \drm x$ for $g \in W^{1,1} (\RR)$, and obtain from Ineq. \eqref{eq:E} and Ineq. \eqref{eq:I}
\[
 E \leq \kappa + \frac{C_{\rm W} \lambda^{-1} \frac{s}{1-s} \kappa^{\frac{s-1}{s}}{}}{[W^{x,y} (x) W^{x,y} (y)]^{1/2}}  \frac{1}{2} \int_{\Omega_{\Lambda_+}} \biggl| \frac{\partial \tilde k (\zeta_{\Lambda_+})}{\partial \zeta_v} \biggr| \drm \zeta_{\Lambda_+} .
\]
For the partial derivative we calculate
\[
 \frac{\partial \tilde k (\zeta_{\Lambda_+})}{\partial \zeta_0} = \lvert \alpha^{i,j} (v) \rvert^{\lvert \Lambda_+ \rvert}
\sum_{l \in \Lambda_+} \alpha^{i,j} (l) \rho' (\omega_l) \prod_{\genfrac{}{}{0pt}{2}{k \in \Lambda_+}{k \not = l}} \rho (\omega_k) ,
\]
which gives (while substituting into original coordinates)
\begin{align*}
E & \leq \kappa + \frac{C_{\rm W} \lambda^{-1} \frac{s}{1-s} \kappa^{\frac{s-1}{s}}{}}{[W^{x,y} (x) W^{x,y} (y)]^{1/2}}  \frac{1}{2}
\sum_{l \in \Lambda_+} \lvert \alpha^{i,j} (l) \rvert \int_{\Omega_{\Lambda_+}}
\lvert \rho' (\omega_l) \rvert \prod_{\genfrac{}{}{0pt}{2}{k \in \Lambda_+}{k \not = l}} \lvert \rho (\omega_k) \rvert
\drm \omega_{\Lambda_+} \\
&\leq \kappa + \frac{C_{\rm W} \lambda^{-1} \frac{s}{1-s} \kappa^{\frac{s-1}{s}}{}}{[W^{x,y} (x) W^{x,y} (y)]^{1/2}}  \frac{1}{2}
D \lVert \rho' \rVert_{L^1} \leq \kappa + \frac{C_{\rm W} \lambda^{-1} \frac{s}{1-s} \kappa^{\frac{s-1}{s}}{}}{\overline{u} / 2}
D \lVert \rho' \rVert_{L^1},
\end{align*}
where $D$ is the constant from Eq. \eqref{eq:D} and where we have used that $W^{x,y}(x)$ and $W^{x,y}(y)$ are bounded from below by $\overline{u}/4$ by Lemma \ref{lemma:Wij}. If we choose $\kappa = (\lVert \rho \rVert_{L^1} C_{\rm W}$ $\lambda^{-1} 2 D / \overline{u})^s$ we obtain the statement of the theorem.
\end{proof}
\subsection*{Acknowledgment}
Part of this work was done while the authors were attending a
mini-workshop at the Mathematisches Forschungsinstitut Oberwolfach.
A.E. has been partially supported by NSF grant DMS--0907165. M.T.
and I.V. have been partially supported by DFG grants.
%
%
%
% \bibliographystyle{plain}
% \bibliography{ILit-10}
%
\providecommand{\bysame}{\leavevmode\hbox to3em{\hrulefill}\thinspace}
\providecommand{\MR}{\relax\ifhmode\unskip\space\fi MR }
% \MRhref is called by the amsart/book/proc definition of \MR.
\providecommand{\MRhref}[2]{%
  \href{http://www.ams.org/mathscinet-getitem?mr=#1}{#2}
}
\providecommand{\href}[2]{#2}

\end{document}